\def\BState{\State\hskip-\ALG@thistlm}
\tikzstyle{vertex}=[circle, draw, inner sep=0pt, minimum size=6pt]
\newtheorem{theorem}{Theorem}[section]
\newtheorem{theorem*}{Theorem}
\newtheorem{corollary}[theorem]{Corollary}
\newtheorem{lemma}[theorem]{Lemma}
\newtheorem{proposition}[theorem]{Proposition}
\newtheorem{claim}[theorem]{Claim}
\newtheorem{fact}[theorem]{Fact}
\theoremstyle{definition}
\newtheorem{definition}[theorem]{Definition}
\theoremstyle{remark}
\newtheorem{remark}[theorem]{Remark}
\numberwithin{equation}{section}
\def\paragraph{\addvspace{\medskipamount}\@startsection{paragraph}{4}%
  \z@\z@{-\fontdimen2\font}%
  {\normalfont\itshape}}
\newcommand{\SL}{\mathrm{SL}}
\newcommand{\GL}{\mathrm{GL}}
\newcommand{\F}{\mathbb{F}}
\newcommand{\K}{\mathbb{K}}
\newcommand{\Z}{\mathbb{Z}}
\newcommand{\Q}{\mathbb{Q}}
\newcommand{\C}{\mathbb{C}}
\newcommand{\N}{\mathbb{N}}
\newcommand{\NP}{\mathrm{NP}}
\newcommand{\cB}{\mathcal{B}}
\newcommand{\cA}{\mathcal{A}}
\newcommand{\rk}{\mathrm{rk}}
\newcommand{\cork}{\mathrm{cork}}
\newcommand{\blowup}[2]{{#1}^{[#2]}}
\newcommand{\rblowup}[2]{{#1}^{\{#2\}}}
\newcommand{\trans}[1]{{#1}^{\mathrm{T}}}
\newcommand{\fdchar}{\mathrm{char}}
\newcommand{\nrk}{\mathrm{ncrk}}
\newcommand{\poly}{\mathrm{poly}}
\newcommand{\im}{\mathrm{im}}
\newcommand{\zvec}{\mathbf{0}}
\newcommand{\gcap}{\mathrm{Cap}}
\renewcommand{\L}{\mathbb{L}}
\newcommand{\algo}[1]{\textsc{#1}}
\begin{document}

\title{Non-commutative Edmonds' problem and matrix semi-invariants }

\author{
G\'abor Ivanyos
\thanks{Institute for Computer Science and Control, Hungarian 
Academy of Sciences, 
Budapest, Hungary.
{E-mail: \tt Gabor.Ivanyos@sztaki.mta.hu}.}
\and
Youming Qiao
\thanks{Centre for Quantum Computation and Intelligent Systems,
 University of Technology Sydney, Australia.
{E-mail: \tt jimmyqiao86@gmail.com.}}
\and
K. V. Subrahmanyam
\thanks{Chennai Mathematical Institute, Chennai, India.
{E-mail: \tt kv@cmi.ac.in}.}
}

\maketitle

\begin{abstract}

In 1967, J. Edmonds introduced the problem of computing the rank over the rational 
function field of an $n\times n$ 
matrix $T$ with 
integral homogeneous linear polynomials. In this 
paper, we consider the \emph{non-commutative version of Edmonds' problem}: 
compute the rank of $T$ over the free skew field. 
This problem has been proposed, sometimes in disguise, from 
several different perspectives, in the study of e.g. the free skew field itself 
(Cohn 1973), matrix spaces of low rank 
(Fortin-Reutenauer, 2004), Edmonds' original problem 
(Gurvits, 2004), and more recently, non-commutative arithmetic 
circuits with divisions (Hrube\v{s} and Wigderson, 2014). 

It is known that this problem relates to the following invariant ring, which we 
call the \emph{$\F$-algebra of matrix 
semi-invariants}, denoted as $R(n, m)$.
For a field $\F$, it is the ring of invariant polynomials for the action 
of $\SL(n, \F) \times \SL(n, \F)$ on tuples of matrices -- $(A, C)\in\SL(n, \F) 
\times \SL(n, \F)$ sends $(B_1, \dots, 
B_m)\in M(n, \F)^{\oplus m}$ to $(AB_1\trans{C}, \dots, AB_m\trans{C})$. Then 
those $T$ with non-commutative rank $<n$ correspond to those points in the 
nullcone of $R(n, m)$. In particular, if the nullcone of $R(n, m)$ is defined by 
elements of degree $\leq \sigma$, 
then there follows a $\poly(n, \sigma)$-time 
randomized algorithm to 
decide whether the non-commutative rank of $T$ is full. To our 
knowledge, previously the best bound for $\sigma$ was $O(n^2\cdot 4^{n^2})$ over 
algebraically closed fields of characteristic $0$ (Derksen, 2001). 

We now state the main contributions of this paper:
\begin{itemize}
\item We observe that by using an algorithm of Gurvits, and assuming the above 
bound $\sigma$ for $R(n, m)$ over $\Q$, deciding whether or not $T$ has 
non-commutative 
rank 
$<n$ over $\Q$ can be done \emph{deterministically} in time polynomial in the 
input size and $\sigma$.

\item When $\F$ is large enough, we devise an algorithm for the non-commutative 
Edmonds problem in time polynomial in $(n+1)!$. Furthermore, due to the structure 
of this algorithm, we also have the following results.
\begin{itemize}
\item If the commutative rank and the non-commutative rank of $T$ differ by a 
constant, then there exists a randomized efficient algorithm to compute the 
non-commutative rank of $T$. 
This improves a result of Fortin and 
Reutenauer, who gave a randomized efficient algorithm to decide whether the 
commutative and non-commutative ranks are equal. 
\item We show that $\sigma\leq (n+1)!$. This not only 
improves the bound obtained from Derksen's work 
over algebraically closed field of characteristic $0$ but, more importantly, 
also provides for the first time an explicit bound on $\sigma$ for matrix 
semi-invariants over fields of positive characteristics. Furthermore, this does 
not require $\F$ to be algebraically closed. 
\end{itemize}
\end{itemize}
{\it 2010 Mathematics Subject Classification: \rm Primary 13A50, 68W30.}
\\~\\
%
{\it Keywords: \rm Edmonds' problem, symbolic 
determinant identity test, semi-invariants of quivers, non-commutative rank} 
\end{abstract}

\section{Introduction}

\subsection{The non-commutative Edmonds problem}

In 1967, J. Edmonds introduced the following problem~\cite{Edm67}: let $X=\{x_1, 
\dots, x_m\}$ be a set of variables. Given an 
$n\times n$ matrix $T$ 
whose entries are homogeneous linear polynomials from $\Z[X]$, determine the 
rank of $T$ over the rational function field $\Q(X)$, denoted as $\rk(T)$. The 
decision version of Edmonds' 
problem is to decide whether $T$ is of full rank or not; this decision 
version is better known now as the symbolic determinant 
identity testing (SDIT) problem. It is natural to consider this 
problem over any field $\F$. If $|\F|$ is constant, this problem is $\NP$-hard 
\cite{BFS}. This is not the setting we are concerned with -- we will always 
assume 
$|\F|$ to be at least $\Omega(n)$.

When $|\F|\geq 2n$, the Schwartz-Zippel lemma provides a randomized efficient 
algorithm. To devise a deterministic efficient algorithm has a long history, and 
is of fundamental importance in complexity theory. Originally, the main motivation 
was its applications to certain combinatorial problems, most notably the maximum 
matching problem on graphs, 
as exploited by Tutte \cite{Tutte}, Edmonds \cite{Edm67}, Lov\'asz \cite{Lovasz}, 
among others.\footnote{In these applications, $T$ is usually of certain specific 
forms, for example, as a mixed matrix: each entry is a single variable or a field 
element, and each variable appears only once. }
Since 2003, a major incentive to study SDIT arises from its implications 
to circuit lower bounds, as shown in the wonderful work by Kabanets and 
Impagliazzo \cite{KI04}. Improving the results in \cite{KI04}, Carmosino et al. 
showed that such an algorithm implies the existence of an explicit multilinear 
polynomial family such that its graph
is computable in $\mathrm{NE}$, but the polynomial family cannot be computed by 
polynomial-size arithmetic circuits \cite{CIKK15}.

In this paper, we study Edmonds' problem in the non-commutative 
setting. In other words, we view the entries of $T$ as elements of $\F\langle 
X\rangle$, 
the algebra of non-commutative polynomials over $\F$. To state this, we need a 
non-commutative counterpart of the rational function field. Note that, due to 
non-commutativity, the best we can 
hope for is a skew field (a.k.a.~a non-commutative field or a division ring). The 
\emph{free skew field} is the non-commutative analogue of the rational function 
field. For $T$ a matrix with homogeneous 
linear polynomials, Fortin and Reutenauer \cite{FR04} defined the 
\emph{non-commutative rank} of $T$, denoted as $\nrk(T)$, as its rank over the 
free skew field. By the \emph{non-commutative Edmonds problem} we 
mean the problem of computing $\nrk(M)$, and by the \emph{non-commutative 
full rank problem} (NCFullRank) we mean the problem of deciding whether 
$\nrk(M)$ is full or not. 

Since we will not define the free skew field formally, we provide 
a definition of $\nrk(T)$ that is due to Cohn \cite{Cohn95}: $\nrk(T)$ is the 
minimum $s\in\Z^+$ s.t. $T$ can be written as $PQ$, 
where $P$ and $Q$ are matrices with homogeneous linear polynomials from 
$\F\langle 
X\rangle$, and of size $n\times s$ and $s\times n$, respectively. It may also be 
worthwhile to recall that, since every module over a skew field is free, the 
row (resp. column) rank of a matrix over a skew field can be defined as the 
rank of the module generated by the rows (resp. columns). Then just as the 
case of matrices over fields, it can be shown that row rank and column rank are 
equal, so either of them defines the rank of a matrix over a skew field.

The free skew field was first constructed by Amitsur \cite{Ami66}, and alternative 
constructions were subsequently 
given by Bergman \cite{Berg70}, Cohn \cite{Cohn}, and Malcolmson \cite{Mal78}. 
We refer the reader to \cite{HW15} by Hrube\v{s} and Wigderson for a nice 
introduction to the free skew field from the perspective of algebraic 
computations. 
Cohn's books \cite{Cohn,Cohn95} serve as a comprehensive 
introduction to this topic. 

It will be clear soon that $\rk(T)\leq \nrk(T)$, and Fortin and 
Reutenauer showed that $\nrk(T)\leq 2\rk(T)$, and exhibited an example $T$ for 
which $\nrk(T)=3/2\cdot \rk(T)$ \cite{FR04}. 
In \cite{CR99}, Cohn and Reutenauer presented an algorithm to decide whether 
$\nrk(T)$ is full or not\footnote{As remarked in \cite{FR04}, this algorithm can 
be generalized to compute $\nrk(T)$. }, which puts this problem in 
$\mathrm{PSPACE}$ since it reduces to testing the solvability of a system of 
multivariate polynomial equations. 
Unlike its commutative counterpart, it is not even
clear that the non-commutative Edmonds problem has a randomized efficient 
algorithm. In Section~\ref{subsec:inv}, we will discuss a natural randomized 
algorithm for NCFullRank, but its efficiency will depend on an
invariant-theoretic quantity. 

\subsection{Equivalent formulations of the non-commutative Edmonds 
problem}\label{subsec:eq}

Like Edmonds' problem, its non-commutative counterpart also has a long 
history, though in the literature, it often stated in a very different way. In 
1973, Cohn first studied this problem from the perspective of understanding the 
free skew field, and showed it to be decidable in \cite{Cohn73,Cohn75}.
In 2003, Gurvits posed this problem in his remarkable work on Edmonds' problem 
\cite{Gurvits}. Recently, Hrube\v{s} and Wigderson arrived at this problem in 
their study of non-commutative arithmetic circuits with divisions \cite{HW15}. 
Indeed, a very intriguing 
feature of the non-commutative Edmonds problem is the existence of several 
interesting equivalent formulations. Instead of relying on the free skew field, 
these formulations use 
either linear algebra, or concepts from invariant theory, or quantum 
information theory. They are scattered in the literature, so we collect them here,
to illustrate the various facets of this problem, introduce some previous works, 
and motivate the study of the non-commutative Edmonds problem. 

To state these formulations we need some notations. $M(n, \F)$ denotes the 
linear space of 
$n\times n$ matrices over $\F$. A linear subspace of $M(n, \F)$ is called a 
\emph{matrix space}. Given $T$, a matrix of linear forms in variables 
$X=\{x_1, \dots, x_m\}$, write $T=x_1B_1+x_2B_2+\dots+x_mB_m$ where 
$B_i\in M(n, \F)$. 
Let $\cB:=\langle B_1, \dots, B_m\rangle$, where $\langle \cdot \rangle$ denotes 
linear span. The rank of $\cB$, denoted as $\rk(\cB)$, is defined as 
$\max\{\rk(B)\mid B\in\cB\}$. We call $\cB$ \emph{singular}, if $\rk(\cB)<n$. When 
$|\F| > n$, as we will assume throughout, 
$\rk(T)=\rk(\cB)$.
\footnote{As when the field size is large 
enough, the complement of the zero set of a nonzero polynomial is non-empty. } We 
shall soon see
that $\nrk(T)$ corresponds to some property 
of $\cB$ as well, so that we can translate the study of commutative and 
non-commutative ranks of $T$ entirely to the study of $\cB$. 

Some of these formulations make sense only subject to certain 
conditions. In such cases we indicate the conditions needed before that 
formulation. 

\begin{enumerate}
\item Given $\cB=\langle B_1, \dots, B_m\rangle \leq M(n, \F)$, a subspace $U\leq 
\F^n$ is called a 
\emph{$c$-shrunk subspace} of $\cB$, for $c\in\N$, if there exists $W\leq \F^n$, 
such that $\dim(W)\leq \dim(U)-c$, and for every $B\in\cB$, $B(U)\leq W$. $U$ is 
called 
a shrunk subspace of $\cB$, if it is a $c$-shrunk subspace for some $c\in\Z^+$.

Question: compute the maximum $c$ such that there exists a $c$-shrunk subspace. 

Remark: Cohn showed that the non-commutative rank is not full if and only if there 
is a shrunk subspace \cite{Cohn95}. This was generalized by Fortin and 
Reutenauer\cite[Theorem 1]{FR04} who showed a precise relationship  between 
non-commutative rank and the existence of $c$-shrunk subspaces. Their motivation 
to consider this problem was 
to connect matrices over linear forms on the one hand, and  
matrix spaces of low rank on the other. The latter topic was studied in e.g.  
\cite{PrimitiveI,EH88}. 
By \cite{FR04}, we can define the \emph{non-commutative rank} of 
$\cB$ as 
$$n-\max\{c\in\{0, 1, \dots, n\} \mid \exists c\text{-shrunk subspace of }\cB\},$$
and it follows that $\nrk(\cB)=\nrk(T)$. So we may (and do) identify $T$ with 
$\cB$ in the following. 
\item ($\F$ is large enough) Given $\cB=\langle B_1, \dots, B_m\rangle \leq
M(n, \F)$, the $d$th tensor 
blow-up of $\cB$, is $\blowup{\cB}{d}:=M(d, \F)\otimes \cB\leq M(dn, \F)$. It is 
clear that 
$\rk(\blowup{\cB}{d})\geq d\cdot \rk(\cB)$. We shall prove 
that when $\F$ is large enough, then $d$ always divides $\rk(\blowup{\cB}{d})$. 
Furthermore, when $d>n$, then 
$\rk(\blowup{\cB}{d+1})/(d+1)\geq \rk(\blowup{\cB}{d})/d$. 
See Lemma~\ref{lem_reg_blowup}, Corollary~\ref{cor:regularity}, and 
Remark~\ref{remark:coprime}.

Question: compute $\lim_{d\to\infty}\rk(\blowup{\cB}{d})/d$.

Remark: That NCFullRank is equivalent to deciding whether 
$\rk(\blowup{\cB}{d})=nd$ for some $d$ was shown by Hrube\v{s} and 
Wigderson \cite{HW15}. Our formulation here is a straightforward quantitative 
generalization of 
their statement. Hrube\v{s} and Wigderson's motivation was to study 
non-commutative arithmetic formulas \emph{with divisions}. 

\item ($\F=\C$) 
Given $B_1, \dots, B_m \in M(n, \Q)$, construct a completely 
positive operator $P: M(n, \C)\to M(n, \C)$, sending $A\to 
\sum_{i\in[m]}B_iAB_i^{\dagger}$. For $c\in\N$, $P$ is called rank 
$c$-decreasing, if there exists a positive semidefinite $A$, such that 
$\rk(A)-\rk(P(A))=c$. 

Question: compute the maximum $c$ such that $P$ is rank $c$-decreasing. 

Remark: Gurvits stated the problem of deciding whether $P$ is rank 
non-decreasing or not~\cite{Gurvits}. His original motivation was to study 
Edmonds' original (commutative) problem, and the main result in \cite{Gurvits} 
solves the case when the commutative and non-commutative rank coincide (see 
Theorem~\ref{thm:gurvits}). 

\item (NCFullRank) Consider the action of $(A, 
C)\in\SL(n, \F)\times \SL(n, \F)$ on a tuple of matrices $(B_1, \dots, B_m)\in 
M(n, \F)^{\oplus m}$ by 
sending it to $(AB_1\trans{C},$ $\dots, AB_m\trans{C})$.\footnote{This action can 
also be written as: $(A, C)$ sending $(B_1, \dots, B_m)$ to $(AB_1C^{-1}, \dots, 
AB_mC^{-1})$. We adopt the transpose rather than the inverse, as the transpose 
yields a polynomial representation rather than a rational representation. 
Furthermore when Derksen's result is applied to the transpose, it gives a somewhat 
better bound (Fact~\ref{thm:derksen}).} Let $R(n, m)$ be the $\F$-algebra 
of invariant 
polynomials with respect to this action. 
The \emph{nullcone} of $R(n, m)$ is the common zero of all homogeneous 
positive-degree 
polynomials in $R(n, m)$. 

Question: decide whether or not $(B_1, \dots, B_m)$ is in the nullcone of 
$R(n,m)$. 

\end{enumerate}
That the original formulation is equivalent to (1) comes
from~\cite{FR04}. The equivalence between (1) and (3) is straightforward. The 
equivalence among decision 
versions of (1) and (2), and (4) can be obtained via the ring of matrix 
semi-invariants, as described in Section~\ref{subsec:inv}. One way to prove the 
equivalence between (1) and (2) is via Theorem~\ref{thm-blup-main}.

To summarize, the non-commutative Edmonds problem can be derived 
naturally from the perspectives of quantum information theory,\footnote{It remains 
to investigate the physical meaning for a super-operator to be rank non-decreasing 
though. } and invariant theory. It is of great interest in non-commutative 
algebraic computation with divisions, and in the study of matrix spaces of low 
rank. Our
motivation to study this is because a solution to the non-commutative Edmonds 
problem will throw light on 
its commutative counterpart. 
Shrunk subspaces form a natural and important witness for the 
singularity of a matrix space. 
Therefore, if the non-commutative Edmonds problem can be solved 
deterministically in polynomial time, it means that, 
for SDIT, the bottleneck lies in recognizing those 
singular matrix spaces
without such witnesses. This connection will be detailed in 
Section~\ref{sec:gurvits}. 

\subsection{Matrix semi-invariants}\label{subsec:inv}

Formulation (1), (2) and (4) have a common origin, namely the invariant ring $R(n, 
m)$ described in (4). We shall call $R(n, m)$ \emph{the ring of matrix 
semi-invariants}, as (1) 
it is closely related to the classical ring of matrix invariants \cite{Pro76} 
(see below for 
the definition, and \cite{Domokos00,ANK07} for the 
precise relationship between these two rings); and (2) it is the ring of 
semi-invariants of the representation of the 
$m$-Kronecker quiver with dimension vector $(n, n)$. Here, the $m$-Kronecker 
quiver is the quiver with two vertices $s$ and $t$, 
and $m$ arrows pointing from $s$ to $t$. When $m=2$, it is the classical 
Kronecker quiver. 
The reader is referred to \cite{DW00,SV01,DZ01} for a 
description of the semi-invariants for arbitrary quivers. 

%

The equivalence between (1) and (4) comes from the observation that the
$(B_1, \dots,$ $B_m)$ with a shrunk subspace are exactly the points in the 
nullcone 
of $R(n, m)$ \cite{BD06,ANK07}. The equivalence between (2) and (4) can be seen
from the first fundamental theorem (FFT) of matrix semi-invariants 
\cite{DW00,SV01,DZ01,ANK07}. To describe this we need some 
notations: for $n\in\N$, $[n]:=\{1, \dots, n\}$. Note that $R(n, 
m)\subseteq \F[x_{i,j}^{(k)}]$ where $i, j\in[n]$, $k\in [m]$, and $x_{i,j}^{(k)}$ 
are independent variables. Let $X_k=(x_{i,j}^{(k)})_{i,j\in[n]}$ be a matrix of 
variables. Then for $A_1, 
\dots, A_m \in M(d, \F)$, $\det(A_1\otimes X_1+\dots+A_m\otimes X_m)$ is a matrix 
semi-invariant, and every matrix semi-invariant is a linear combination of such 
polynomials. Therefore, $(B_1, \dots, B_m)$ is in the nullcone, if and only if for 
all $d\in\Z^+$ and all $(A_1, \dots, A_m)\in M(d, \F)^{\oplus m}$, $A_1\otimes B_1
+\dots+A_m\otimes B_m$ is singular. 


It is well-known that the matrix semi-invariant ring is finitely generated, by 
Hilbert's celebrated work \cite{Hilbert}. This implies that there exists some  
integer $d$ such that those matrix semi-invariants of degree no more than $d$ 
define $R(n, m)$. This motivates the following definition. 
\begin{definition}
$\beta(R(n,m))$ is the smallest integer $d$ such that 
$R(n,m)$ is generated by invariants of degree  $\leq d$.
\end{definition} 

An explicit upper bound on 
$\beta(R(n, m))$ turns out to be 
particularly interesting for 
the purpose of the NCFullRank problem. As already suggested by 
Hrube\v{s} and Wigderson \cite{HW15}, if 
$R(n, m)$ has a degree bound $\beta=\beta(R(n, m))$, one can do the 
following: take $m$ $d \times d $ variable matrices $Y_1, 
\dots, Y_m$, $Y_k=(y^{(k)}_{i,j})$ and form the polynomial 
\begin{equation}\label{eq:blup_poly}
\det(Y_1\otimes B_1+\dots+Y_m\otimes B_m) \in \F[y^{(k)}_{i,j}]_{k\in[m], i, 
j\in[d]}.
\end{equation}
Letting $d$ go from $1$ to $\beta$, this system of polynomials 
characterizes $\nrk(T) < n$: $\nrk(T)<n$ if and only if all these polynomials 
are the zero polynomial. This immediately gives a randomized algorithm for 
NCFullRank over large enough fields, with time complexity $\poly(n, \beta)$. 

In fact, for the above application, what really matters is another important bound 
$\sigma=\sigma(R(n, m))$. This is defined as the minimum integer $d$ with the 
property 
that $(B_1,\ldots,B_m)\in M(n, \F)^{\oplus m}$ is in the nullcone if and only if  
all polynomials of degree 
$\leq d$ in $R(n, m)$ vanish on $\{B_1,\ldots,B_m\}$. It is clear that $\sigma\leq 
\beta$, and the
above reasoning  goes through when $\beta$ is replaced by $\sigma$.

Over algebraically closed fields of characteristic $0$, by directly 
employing Derksen's bounds for invariant rings satisfying 
certain general conditions \cite{derksen_bound}, the following bound can be 
derived. For completeness we include a proof in Appendix~\ref{app:derksen}.
\begin{fact}[{\cite{derksen_bound}}]\label{thm:derksen}
Over algebraically closed fields of characteristic $0$, for 
$R(n, m)$, 
$\beta\leq\max\{2, 3/8\cdot n^4\cdot \sigma^2\}$, and $\sigma\leq 1/4\cdot 
n^2\cdot 
4^{n^2}$.
\end{fact}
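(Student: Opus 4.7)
My plan is to specialize Derksen's general degree bounds for invariant rings of reductive groups in characteristic zero, from \cite{derksen_bound}, to our setup $G = \SL(n,\F) \times \SL(n,\F)$ acting linearly on $V = M(n,\F)^{\oplus m}$. Derksen's main theorem states that for a rational representation of a reductive group, the invariant ring is generated in degree at most $\max\{2,\, \tfrac{3}{8}\dim(V)\,\sigma(V)^2\}$, and comes with an explicit effective bound on $\sigma(V)$ in terms of the dimension of the representation and combinatorial/weight data of the action. Plugging in our specific $G$ and $V$ is essentially routine once the $m$-dependence is handled.

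The first task is to remove the $m$-dependence from $\dim(V) = mn^2$, since the target bound must be independent of $m$. I would achieve this via the auxiliary $\GL_m$-action on $V$ that linearly combines the $m$ matrix slots and commutes with the $\SL(n)\times\SL(n)$-action. Since $\dim M(n,\F) = n^2$, any tuple $(B_1,\dots,B_m)$ can be brought by a $\GL_m$-change of basis on the index set to one in which at most $n^2$ slots are nonzero, and both nullcone membership and the generating degrees of $R(n,m)$ are preserved under this reduction. This allows us to replace $\dim(V)$ in Derksen's estimate by $n^4$, yielding the stated bound $\beta \leq \max\{2,\, \tfrac{3}{8}n^4\sigma^2\}$.

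The companion bound $\sigma \leq \tfrac14 n^2 \cdot 4^{n^2}$ then follows from Derksen's explicit estimate on the degrees that cut out the nullcone, applied with $\dim V \leq n^4$ and $\dim G = 2(n^2-1)$. The exponential factor $4^{n^2}$ is the manifestation in Derksen's bound of an exponential in the dimension of $G$ (entering through dimensions of irreducible $G$-subrepresentations of $\F[V]$), while the prefactor $n^2/4$ aggregates the polynomial contributions from $\dim V$ and the Hilbert-series estimates used in Derksen's proof.

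The main obstacle will be the careful bookkeeping required at two points: first, checking that the $\GL_m$-equivariant reduction legitimately preserves both $\beta$ and $\sigma$ (so that the factor $mn^2$ really may be replaced by $n^4$); and second, verifying that Derksen's general numeric constants collapse to precisely $3/8$ and $4^{n^2}$ for our representation, rather than worse constants one might obtain by a lossy specialization. Since Derksen's framework is an established black box, the work is essentially calculational rather than conceptual.
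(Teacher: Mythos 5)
Your high-level plan — plug the specific group $\SL(n,\F)\times\SL(n,\F)$ and representation $V=M(n,\F)^{\oplus m}$ into Derksen's two bounds — is the same as the paper's. Two observations, one favorable and one a genuine gap.

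On the $m$-dependence: your $\GL_m$-reduction to a tuple with at most $n^2$ nonzero slots is a reasonable way to justify that $\beta$ and $\sigma$ for $R(n,m)$ are controlled by those for $R(n,n^2)$. The paper's appendix does not spell this out; it just writes the chain $s\leq mn^2\leq n^4$, which is valid only for $m\leq n^2$, so you have identified and filled a step the paper glosses over. (The standard justification is polarization/restitution for the first fundamental theorem, which preserves degree; you would need to make this precise.)

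On the $\sigma$ bound there is a real gap. You describe Derksen's nullcone estimate as an exponential coming from ``dimensions of irreducible $G$-subrepresentations of $\F[V]$'' with a polynomial prefactor from ``$\dim V$ and Hilbert-series estimates.'' That is not the mechanism, and it would not produce the stated constants. The relevant result is Proposition~1.2 of \cite{derksen_bound}, which bounds $\sigma$ by $H^{t-\dim G}A^{\dim G}$ where $t$ is the number of ambient coordinates used to present $G$ as an affine variety, $H$ is the maximal degree of the defining equations of $G$, and $A$ is the maximal degree of the matrix entries of the homomorphism $\rho\colon G\to\GL(V)$. For our $G$ one takes $t=2n^2$ (the entries of two $n\times n$ matrices), $H=n$ (from $\det=1$), $\dim G=2n^2-2$, and $A=2$ because the action $(A,C)\cdot B_i=AB_i\trans{C}$ is quadratic in the group coordinates — note this is exactly why the paper uses the transpose rather than the inverse, to make the action polynomial. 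Plugging in gives $\sigma\leq n^{2}\cdot 2^{2n^2-2}=\tfrac14 n^2\cdot 4^{n^2}$. So the exponential $4^{n^2}$ arises from $A^{\dim G}$, and the prefactor $n^2$ from $H^{t-\dim G}=n^2$ — neither $\dim V$ nor Hilbert series data enters this formula. Without identifying these inputs, the ``essentially calculational'' step in your plan does not go through.
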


In particular,  if  $\sigma$ is polynomial in $n$, 
then $\beta$ is polynomial in $n$ as well. 

It is generally believed that over fields of characteristic $0$, 
the bounds we get for $R(n, m)$ using Derksen's theorem is far from optimal. One 
reason to believe so is that 
$R(n,m)$ is closely related
to another ring of invariants: let 
$A\in\SL(n, \F)$ act on $(B_1, \dots, B_m)\in M(n, \F)^{\oplus m}$ by 
simultaneous conjugation -- i.e. $A$ sends the tuple to $(AB_1A^{-1}, \dots, 
AB_mA^{-1})$. 
Denoted by $S(n, m)$, this is just the classical ring of matrix
invariants \cite{Pro76}. The structure of $S(n, m)$ is 
well-understood. Over fields of characteristic $0$, the first and second 
fundamental theorems 
for $S(n,m)$, and an 
$n^2$ 
upper bound for $\beta(S(n, m))$ were established in 1970's, by the works of 
Procesi, Razmysolov, and Formanek  
\cite{Pro76,Raz74,formanek_gen}. See \cite{Domokos00,ANK07} for the 
precise relationship between the rings $R(n,m)$ and 
$S(n,m)$. Note that when applied to $S(n, m)$ over 
characteristic $0$, Derksen's bound yields $\beta(S(n, m))\leq \max\{2, 3/8\cdot 
n^2\cdot \sigma^2\}$ and $\sigma(S(n, m))=n^{O(n^2)}$, far from the $n^2$ 
bound given above. 

Another reason to believe that Derksen's bounds are far from optimal is that for 
certain small $m$ or $n$, explicit 
generating sets of $R(n, m)$ 
have been computed in e.g. \cite{Domokos00,Domokos00_2,DD12,IQS2}. In these cases, 
elements of degree $\leq n^2$ generate the 
ring.\footnote{We thank M. Domokos for pointing out this fact to us.} 

If we turn to positive characteristic fields then, to our best knowledge, no 
explicit bounds for $\beta(R(n, m))$ nor $\sigma(R(n, m))$ have been derived. Note 
here that the relation 
between $\beta$ and $\sigma$ as in Fact~\ref{thm:derksen} is not known to hold, 
due to the assumption on the field properties there. This case is 
important, for example, in the application to identity 
testing, and 
division 
elimination for non-commutative arithmetic formulas with divisions over fields of 
positive characteristics \cite{HW15}. For $S(n,m)$ over fields of 
positive 
characteristics, the FFT was established by Donkin in \cite{Donkin92,Donkin93}. 
Over fields of positive characteristic an $O(n^3)$ upper bound for 
$\sigma(S(n,m))$ can be 
derived from \cite[Proposition 9]{CIW}, and in 
\cite{Domokos02a,Domokos02b} Domokos proved an upper bound $O(n^7 m^n)$ on 
$\beta(S(n,m))$.

\subsection{Our results}

In the previous sections, we defined the non-commutative Edmonds problem and 
the NCFullRank problem, and illustrated their connections to matrix 
semi-invariants. Indeed, our results suggest that progress on one topic 
helps to advance the other as well. 

The first result shows that an upper bound for $\sigma(R(n, m))$ actually implies 
a \emph{deterministic} algorithm for NCFullRank over $\Q$, rather than 
just a randomized one as in Section~\ref{subsec:inv}. 

\begin{proposition}\label{thm:sdit_main}
Over $\Q$, if the nullcone of $R(n, m)$ is defined by elements of degree $\leq 
\sigma=\sigma(n, 
m)$, then there exists a deterministic algorithm that solves NCFullRank with bit 
complexity polynomial in $\sigma$ and the input size. 
\end{proposition}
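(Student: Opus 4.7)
The plan is to combine the degree bound $\sigma$ for the nullcone of $R(n,m)$ with Gurvits' deterministic operator scaling algorithm (Theorem~\ref{thm:gurvits}). The degree bound will lift NCFullRank for $T$ into a commutative full-rank question on a blow-up $\blowup{\cB}{d}$, where Gurvits is directly applicable over $\Q$.

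First I would translate the hypothesis on $\sigma$ into a statement about blow-ups. By the first fundamental theorem of matrix semi-invariants recalled after formulation~(4) in Section~\ref{subsec:inv}, every homogeneous invariant in $R(n,m)$ of degree $nd$ is a $\Q$-linear combination of polynomials $\det(A_1\otimes X_1+\dots+A_m\otimes X_m)$ with $A_i\in M(d,\Q)$, and $R(n,m)$ carries no invariants whose degree is not a multiple of $n$. The hypothesis on $\sigma$ therefore becomes the following dictionary: $(B_1,\dots,B_m)$ lies in the nullcone if and only if $\rk(\blowup{\cB}{d})<nd$ for every $d\leq d_{\max}:=\lfloor\sigma/n\rfloor$. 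Equivalently, $\nrk(T)=n$ iff some $d\leq d_{\max}$ satisfies $\rk(\blowup{\cB}{d})=nd$.

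The algorithm I have in mind then loops $d=1,\ldots,d_{\max}$: for each $d$ it forms the matrix space $\blowup{\cB}{d}\leq M(nd,\Q)$ and runs Gurvits' scaling algorithm on it under a fixed polynomial time budget. If some iteration certifies ``full'', output $\nrk(T)=n$; otherwise output $\nrk(T)<n$. Correctness rests on the one-sided guarantee that any ``full'' certificate returned by Gurvits is unconditionally correct. Assume $\nrk(T)<n$: then for every $d$ the obvious factorization of the blow-up gives $\nrk(\blowup{\cB}{d})\leq d\cdot\nrk(T)<nd$, so the ``full'' certificate can never appear and the algorithm returns $\nrk(T)<n$. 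Assume instead $\nrk(T)=n$: the lifting step supplies a $d^\star\leq d_{\max}$ with $\rk(\blowup{\cB}{d^\star})=nd^\star$, and at this $d^\star$ the commutative and non-commutative ranks of $\blowup{\cB}{d^\star}$ both equal $nd^\star$. This is precisely the regime in which Gurvits' algorithm is guaranteed to terminate in deterministic polynomial time with the answer ``full''.

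For the bit-complexity accounting, each call to Gurvits on $\blowup{\cB}{d}$ costs a polynomial in $nd\leq\sigma$ and in the bit-length of the entries of the $B_i$, using the rational-precision bounds inside Gurvits' paper, and there are at most $d_{\max}\leq\sigma$ such calls, so the total cost is polynomial in $\sigma$ and in the input size. The hard part will be the intermediate case in which $\nrk(T)=n$ but, for some $d<d^\star$, the blow-up satisfies $\rk(\blowup{\cB}{d})<\nrk(\blowup{\cB}{d})=nd$; there Gurvits has no a priori polynomial-time guarantee. The fixed time budget resolves this cleanly: a timeout is simply interpreted as ``not full'' for that particular $d$, and because the algorithm commits to the answer $\nrk(T)=n$ only upon an affirmative Gurvits certificate, which is sure to arrive at $d=d^\star$ within the budget, neither correctness nor the polynomial time bound is compromised.
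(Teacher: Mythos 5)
Your proposal is essentially the paper's argument: loop over blow-ups $\blowup{\cB}{d}$, apply Gurvits' algorithm to each, and rely on its one-sided correctness guarantee together with the shrunk-subspace argument for the ``$\nrk<n$'' direction. The minor differences are that you refine the loop bound to $d\le\lfloor\sigma/n\rfloor$ by observing that all invariants have degree a multiple of $n$, and you insert an explicit time budget to handle blow-ups violating Gurvits' promise; the paper instead loops $d$ up to $\sigma$ and relies on the observation (made just before its proof) that Gurvits' procedure always halts in polynomially many steps even when the promise fails, merely possibly returning a wrong answer. Both choices yield the claimed $\poly(\sigma,\text{input})$ bound, and the correctness argument is the same.
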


In particular, if $\sigma$ is a polynomial in $n$ and $m$, then NCFullRank can be 
solved deterministically in polynomial time over $\Q$. 
The key ingredient here is Gurvits' algorithm for the Edmonds' problem, although 
that algorithm works only under a promise\cite{Gurvits}. The 
distinction between deterministic and probabilistic is important: as 
illustrated at the end of Section~\ref{subsec:eq}, our original motivation of 
studying NCFullRank is to 
gain an understanding of SDIT, for which the question is to devise 
deterministic efficient algorithms. 

Our main result is an algorithm that solves the non-commutative Edmonds
problem using formulation (2). To ease the presentation, we give an informal 
statement of the main theorem in Section~\ref{sec:blowup} 
(Theorem~\ref{thm-blup-main}) here, and discuss its two consequences. 
\begin{theorem}[Theorem~\ref{thm-blup-main}, informal]\label{thm:main_informal}
Given a matrix space $\cB\leq M(n, \F)$ over a large enough field, there exists a 
deterministic algorithm that computes $\rk(\cB)$ using $\poly((n+1)!)$ many 
arithmetic operations. 
Over $\Q$ the algorithm runs in time polynomial in the bit size of the input and 
$(n+1)!$.\footnote{All algorithms presented in this paper, when 
working over $\Q$, have bit complexity polynomial in the input size, and some 
additional parameters. Sometimes, we may omit the input size but 
focus on those more important parameters.} 
\end{theorem}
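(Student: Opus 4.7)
My plan is to prove the theorem using the tensor blow-up characterization (formulation (2)): $\nrk(\cB) = \lim_{d\to\infty} \rk(\blowup{\cB}{d})/d$, together with monotonicity of this sequence for $d > n$ stated in the introduction. This reduces computing $\nrk(\cB)$ to commutative-rank computations on blow-ups of bounded size, and the whole task becomes (i) identifying a $d = O((n+1)!)$ at which the ratio has stabilized at $\nrk(\cB)$, and (ii) evaluating $\rk(\blowup{\cB}{d})$ deterministically for such $d$ over a large enough field.

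The first and main technical step is to prove the degree bound $\sigma(R(n, m)) \leq (n+1)!$ for the matrix semi-invariant ring -- itself the second headline consequence stated in the introduction. By the FFT for matrix semi-invariants, which identifies the degree-$nd$ piece of $R(n,m)$ with the $\F$-span of $\det(\sum_i A_i \otimes X_i)$ for $A_i \in M(d, \F)$, such a bound translates into: nullcone membership (equivalently $\nrk(\cB) < n$) is detected by blow-ups of size $d \leq (n+1)!/n$. I would attempt this $\sigma$ bound by induction on $n$ using formulation (1): any point in the nullcone admits a shrunk subspace, which after an appropriate basis change makes the $B_i$ jointly block-triangular and reduces the detection problem to smaller matrix semi-invariant rings $R(n', m)$ with $n' < n$; the factor $(n+1)$ arises from the case analysis at each inductive step.

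The second step is to translate this bound into an algorithm for $\nrk(\cB)$. A natural approach is to compute $\rk(\blowup{\cB}{d})$ for $d$ up to $(n+1)!/n$ and output $\max_d \rk(\blowup{\cB}{d})/d$. For each fixed $d$, over a sufficiently large $\F$ one can deterministically pick coefficient matrices $A_i \in M(d, \F)$ from an explicit structured family -- for instance from a polynomial-identity-testing hitting set for the single polynomial $\det(\sum_i Y_i \otimes B_i)$ in the entries of $Y_i$ -- that attains the generic rank, whereupon a single Gaussian elimination on a matrix of size $nd = O((n+1)!)$ suffices. Alternatively, the inductive proof of the $\sigma$ bound may itself be rendered algorithmic: the shrunk subspaces uncovered at each recursive level serve as structural witnesses that let one reassemble $\nrk(\cB)$ from the solutions on the blocks.

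The main obstacle is clearly the $\sigma$ bound. Derksen's $O(n^2 \cdot 4^{n^2})$ bound is vastly weaker and, crucially, confined to algebraically closed fields of characteristic zero, so a $(n+1)!$ bound that is uniform in the characteristic cannot be obtained by merely refining the geometric invariant theory machinery behind Derksen's argument. The plausible route is the shrunk-subspace induction sketched above, but ensuring that the factorial bound tightens by the right multiplicative factor at each level of the recursion, and that the reduction genuinely lands inside a smaller $R(n', m)$ rather than a more general semi-invariant setting, will be the delicate part.
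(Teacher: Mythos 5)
Your proposal inverts the logical flow of the paper and, in doing so, introduces two genuine gaps.

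First, the $\sigma$ bound. You propose to prove $\sigma(R(n,m))\leq (n+1)!$ first, by induction on $n$ via shrunk subspaces and block-triangular reductions, and then derive the algorithm from the bound. But the direction of implication your induction addresses is the trivial one: when a shrunk subspace exists, \emph{all} semi-invariants of every degree vanish, so no bound is needed. The content of the $\sigma$ bound is the converse -- if no small blow-up $\rblowup{\cB}{d}$, $d\leq\sigma/n$, contains a nonsingular matrix, then a shrunk subspace exists. Here there is no shrunk subspace to decompose along, so your block-triangular induction has nothing to start from. The paper does not prove the $\sigma$ bound structurally at all; Corollary~\ref{cor:blup_degree} drops out as a byproduct of the correctness of Algorithm \algo{IncrementRank}: if no shrunk subspace is found, the iterative procedure produces a nonsingular matrix in some $\rblowup{\cB}{d'}$ with $d'\leq (n+1)!$. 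In other words, in the paper the algorithm comes first and the degree bound second, whereas you want the reverse. (Indeed, at the time of writing no purely structural proof of a bound of this quality was known; the subsequent one of Derksen--Makam cited in Section 1.6 itself relies on Lemma~\ref{lem:reg_non_cons} from this paper.)

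Second, the algorithmic translation. Even granting the $\sigma$ bound, your proposed deterministic rank computation on each $\blowup{\cB}{d}$ via ``an explicit PIT hitting set for $\det(\sum_i Y_i\otimes B_i)$'' is circular: such determinants are general symbolic determinants, and a poly-size hitting set for them would itself solve SDIT, the very derandomization problem this work is trying to make progress on. You also cite the monotonicity of $\rk(\blowup{\cB}{d})/d$ and its integrality as given, but these are precisely what the paper has to establish -- the regularity lemma (Lemmas~\ref{lem:reg_non_cons} and~\ref{lem_reg_blowup}), proved via an embedding of a cyclic division algebra of index $d$ into $M(d,\K(Y))$ (Section 4), which is the paper's main technical novelty. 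With regularity in hand, the real algorithmic engine is Theorem~\ref{thm-blup-incr}: the second Wong sequence applied inside the current blow-up either produces a shrunk subspace of the correct corank (a certificate of $\nrk$), or exhibits a rank-increasing direction that, after a further blow-up of size $d'\leq r+2$ and an application of regularity, bumps the attained rank from $rd$ to $(r+1)dd'$. Iterating multiplies the blow-up size by at most $s+2, s+3, \dots, n+1$, giving the factorial bound. Your proposal lacks this mechanism entirely, and nothing in it replaces the roles of division algebras or Wong sequences.
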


In \cite{FR04}, Fortin and Reutenauer asked 
for ``an algorithm which uses only linear-algebraic techniques.'' Indeed, the 
algorithm for Theorem~\ref{thm:main_informal} may be viewed as one, though it 
relies on certain routines dealing with objects from cyclic field extensions and 
division algebras. 

Two interesting consequences now 
follow. Firstly, we have a randomized efficient algorithm to compute the 
non-commutative rank if it differs from the commutative rank by a constant. 
(Recall 
that $\rk(\cB)\leq \nrk(\cB)\leq 2\rk(\cB)$.) 
Its easy proof is put after the 
statement of Theorem~\ref{thm-blup-main}.
\begin{corollary}\label{cor:blup_edmonds}
For $\cB\leq M(n, \F)$, let $c=\nrk(\cB)-\rk(\cB)$, and assume $\F$ is 
of size $\Omega(n\cdot (n+1)!)$. Then the non-commutative rank of $\cB$ can be 
computed probabilistically in time polynomial in $(n+1)^{c+1}$.
\end{corollary}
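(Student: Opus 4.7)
The plan is to exploit the structure of the algorithm of Theorem~\ref{thm-blup-main} and to terminate it early when the gap $c = \nrk(\cB) - \rk(\cB)$ is small. First, I recall from the discussion of formulation (2) in Section~\ref{subsec:eq} (via Lemma~\ref{lem_reg_blowup} and Corollary~\ref{cor:regularity}) that over a large enough field, $r_d := \rk(\blowup{\cB}{d})/d$ is a non-negative integer for every $d$ and is non-decreasing once $d > n$. Since $r_1 = \rk(\cB)$ and $\lim_{d\to\infty} r_d = \nrk(\cB)$, the integer sequence $(r_d)_{d>n}$ takes at most $c+1$ distinct values in $\{\rk(\cB), \ldots, \nrk(\cB)\}$.

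Next, I would extract from the proof of Theorem~\ref{thm-blup-main} the quantitative claim that whenever $r_d < \nrk(\cB)$, the main algorithm produces a refined dimension $d' \le (n+1)\cdot d$ with $r_{d'} > r_d$. Iterating this at most $c$ times, starting from a small seed $d_0 = n+1$, yields some $d_c \le (n+1)^{c+1}$ at which $r_{d_c} = \nrk(\cB)$. This replaces the worst-case $d = (n+1)!$ budget of the main theorem by a bound that scales with the gap.

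Finally, rather than computing $\rk(\blowup{\cB}{d_c})$ deterministically as in Theorem~\ref{thm-blup-main}, I would evaluate the polynomial~\eqref{eq:blup_poly} with $d = d_c$ at a uniformly random point of $\F^{m d_c^2}$ and then run a standard randomized rank computation on the resulting numerical matrix in $M(nd_c, \F)$. Since $|\F| = \Omega(n \cdot (n+1)!)$ exceeds the total degree $n d_c \le n(n+1)^{c+1}$ by a wide margin, the Schwartz--Zippel lemma guarantees that $\rk(\blowup{\cB}{d_c}) = d_c \cdot \nrk(\cB)$ is recovered with high probability, and the whole computation runs in time $\poly(n d_c) = \poly((n+1)^{c+1})$, as required.

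The main obstacle is the multiplicative refinement bound in the second step: I need each iteration of the algorithm underlying Theorem~\ref{thm-blup-main} to enlarge the blow-up dimension by at most a factor of $n+1$ per strict increase of $r_d$, rather than by a factor that could a priori be as large as $(n+1)!$. If the iterative structure of the main algorithm does not naturally enjoy this $O(n)$ per-step growth, one would have to modify its refinement step (for instance by taking only the minimal multiplicative jump certified by a shrunk-subspace witness in the current blow-up) to ensure the claimed complexity.
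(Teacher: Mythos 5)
Your approach is essentially the paper's: choose a matrix $A \in \cB$ of maximal rank $\rk(\cB)$ at random (this is the only place randomness is used), and feed it to the algorithm behind Theorem~\ref{thm-blup-main}. That iteration makes exactly $c = \nrk(\cB) - \rk(\cB)$ rank increments before it halts with a shrunk-subspace witness, and each call to Theorem~\ref{thm-blup-incr} multiplies the blow-up dimension by a chosen factor $d' \le r+2 \le n+2$, giving a final dimension bounded by $\prod_{i=\rk(\cB)+1}^{\nrk(\cB)}(i+1) \le (n+1)^c$ and hence matrices of size at most $n\cdot (n+1)^c \le (n+1)^{c+1}$.

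Two clarifications about your write-up. First, the worry in your final paragraph is unfounded: the per-step multiplicative growth is automatically $O(n)$, because the algorithm may always take $d'$ to be $r+1$ or $r+2$, whichever is coprime to $\fdchar(\F)$ (this is stated explicitly in the remark after Theorem~\ref{thm-blup-incr}), so no modification of the refinement step is needed. Second, your proposed final Schwartz--Zippel evaluation of~\eqref{eq:blup_poly} at $d=d_c$ is both redundant and, on its own, insufficient. The algorithm of Theorem~\ref{thm-blup-main} never needs to compute $\rk(\blowup{\cB}{d_c})$; it explicitly \emph{constructs} a matrix of rank $\nrk(\cB)\cdot d'$ together with an $(n-\nrk(\cB))$-shrunk subspace of $\cB$, produced when the second Wong sequence stays inside the image of the current high-rank matrix (Fact~\ref{fact:Wong}). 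That witness both tells the iteration when to stop and certifies the output. A random specialization of~\eqref{eq:blup_poly} gives only a high-probability lower bound on $\rk(\blowup{\cB}{d_c})$ and does not by itself certify that $d_c$ is large enough for $r_{d_c}$ to have reached $\nrk(\cB)$; without the shrunk-subspace certificate you would have no termination criterion.
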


Secondly, we immediately obtain an explicit bound 
for $\sigma(R(n, m))$ as a consequence of Theorem~\ref{thm-blup-main}. 
By Fact~\ref{thm:derksen}, we also get a bound 
on $\beta(R(n, m))$, over an algebraically closed field of characteristic $0$. Its 
proof is also put after Theorem~\ref{thm-blup-main}.

\begin{corollary}\label{cor:blup_degree}
Over any field $\F$ of size $\Omega(n\cdot (n+1)!)$, $\sigma(R(n, m))\leq 
(n+1)!$. If furthermore $\F$ is of characteristic $0$ and  
algebraically closed, then $\beta(R(n, m))\leq \max\{2, 3/8\cdot n^4\cdot 
((n+1)!)^2\}$. 
\end{corollary}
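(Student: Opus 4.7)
The plan is to combine the blowup characterization in Theorem~\ref{thm-blup-main} with the first fundamental theorem (FFT) for matrix semi-invariants to translate ``non-commutative rank is full'' into the vanishing of low-degree semi-invariants, and then feed the resulting $\sigma$-bound into Fact~\ref{thm:derksen} to obtain the $\beta$-bound.

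First, I would recall, as in Section~\ref{subsec:inv}, that by the FFT every element of $R(n,m)$ is an $\F$-linear combination of polynomials of the form
\[
f_{d,A_1,\dots,A_m}(X_1,\dots,X_m) \;:=\; \det\!\bigl(A_1\otimes X_1 + \dots + A_m\otimes X_m\bigr),
\]
where $A_i\in M(d,\F)$ and $d\in\Z^+$. Such an $f$ is homogeneous of total degree $nd$ in the entries of the $X_i$'s. Consequently, $(B_1,\dots,B_m)$ lies in the nullcone of $R(n,m)$ if and only if $A_1\otimes B_1+\dots+A_m\otimes B_m$ is singular for every $d$ and every tuple $(A_1,\dots,A_m)$, i.e.\ $\rk(\blowup{\cB}{d})<nd$ for all $d\in\Z^+$.

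Second, I would extract from Theorem~\ref{thm-blup-main} (whose algorithm terminates after $\poly((n+1)!)$ arithmetic operations by inspecting blowups) the combinatorial statement that $\nrk(\cB)=n$ if and only if there exists some $d\le D$ with $\rk(\blowup{\cB}{d})=nd$, where $D$ is the blowup bound implicit in that theorem, chosen so that $nD\le (n+1)!$ (e.g., $D=(n+1)!/n$). Combining this with the FFT interpretation above, $(B_1,\dots,B_m)$ is in the nullcone iff $f_{d,A_1,\dots,A_m}(B_1,\dots,B_m)=0$ for every $d\le D$ and every choice of the $A_i$'s. Each such $f$ has degree $nd\le (n+1)!$, so the nullcone is cut out by homogeneous semi-invariants of degree at most $(n+1)!$; this gives $\sigma(R(n,m))\le (n+1)!$. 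The field-size hypothesis $|\F|=\Omega(n\cdot(n+1)!)$ is exactly what is needed so that the results of Section~\ref{sec:blowup} (in particular the blowup characterization used by Theorem~\ref{thm-blup-main}) apply, and so that rank over $\F$ and rank over $\overline{\F}$ agree on the relevant tensor blowups.

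Finally, assuming in addition that $\F$ is algebraically closed of characteristic $0$, I would substitute $\sigma\le (n+1)!$ into the inequality $\beta\le \max\{2,\,3/8\cdot n^4\cdot \sigma^2\}$ of Fact~\ref{thm:derksen} to deduce $\beta(R(n,m))\le \max\{2,\,3/8\cdot n^4\cdot ((n+1)!)^2\}$. The only delicate point is lining up the blowup bound coming out of Theorem~\ref{thm-blup-main} with the constant $(n+1)!$ appearing in the statement, which is really a matter of bookkeeping once the precise form of $D$ in that theorem is fixed; the rest is a direct application of the FFT and Fact~\ref{thm:derksen}.
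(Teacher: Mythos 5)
Your approach is essentially the paper's: invoke Theorem~\ref{thm-blup-main} to show that full non-commutative rank forces a nonsingular matrix in some blow-up $\rblowup{\cB}{d'}$ with $d'$ bounded, translate via the FFT into vanishing of semi-invariants of degree $nd'$, and plug the resulting $\sigma$ into Fact~\ref{thm:derksen}. The paper's own proof is equally terse about the factor-of-$n$ gap between the blow-up size $d'$ and the degree $nd'$; to close it one observes that Theorem~\ref{thm-blup-main} can be started from $A$ of rank $s=\rk(\cB)\geq \lceil n/2\rceil$ (since $\nrk(\cB)=n\leq 2\rk(\cB)$), giving $d'\leq (n+1)!/(s+1)!$ and hence $nd'\leq(n+1)!$.
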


This improves the bounds in 
Fact~\ref{thm:derksen} over algebraically closed fields of characteristic $0$. 
More importantly, to the best of our knowledge, this provides an explicit 
bound for $\sigma(R(n, m))$ over fields of positive 
characteristic for the first time. Furthermore to get this bound we only assume 
the field size to be large enough, whereas 
Fact~\ref{thm:derksen} requires our field to be algebraically closed. 

While the improvement from $2^{O(n^2)}$ to 
$2^{O(n\log n)}$ is modest, we believe it is nonetheless an interesting 
improvement from the technical point of view: note that the dimension of $\SL(n, 
\F)$ is $n^2-1$. In the line of research for bounds of an invariant ring $R$
with respect to a group $G$ (cf. \cite{popov,derksen_bound}), the dimension of $G$ 
has to 
stand on the exponent for $\sigma(R)$, and to get a bound as $2^{o(\dim(G))}$ 
seems difficult there. Furthermore, the
idea of using correctness of algorithms to get bounds on quantities of interest in 
invariant theory seems new, and may deserve to be explored further.

We also obtain certain structural results for $R(n, m)$, which are 
reported in \cite{IQS2}.

\subsection{More previous works}\label{subsec:previous}

\paragraph{Connections between invariant theory and complexity theory.} The 
results in this paper suggest a new link between invariant theory and 
complexity theory. Connections between the two fields have been emerging in recent 
years. We have already alluded to the direct connection with non-commutative 
arithmetic 
circuits, in the work of Hrube\v{s} and Wigderson \cite{HW15} above. 
In 
a series of papers titled geometric complexity theory 
(GCT) \cite{GCT1,GCT2} (see also \cite{GCT_JACM,GCT_overview}), Mulmuley and 
Sohoni pointed out possible deep 
connections between problems in invariant theory and complexity theory. GCT 
addresses the 
fundamental lower bound problems in complexity theory, e.g. the permanent versus 
determinant problem, by linking them to problems in representation theory and 
algebraic geometry. In particular, 
in \cite{GCT5}, Mulmuley established a tight 
connection between derandomizing the Noether normalization lemma, and black-box 
derandomizing the polynomial identity test. The degree bounds of various invariant 
rings are of central importance in that work. We briefly remark that a polynomial  
bound for $\beta(R(n, m))$, if proven, will yield similar results as what the 
$n^2$ degree bound for $S(n, m)$ has yielded in \cite{GCT5}.

\paragraph{More previous works on Edmonds' problem.} Some earlier work on this 
problem was cited at the beginning of this article. Here we mention more related 
work.

An interesting instance of Edmonds' problem is the module isomorphism problem.
Specifically, assume that we are given two
$n$-dimensional modules $U$ and $U'$ for the free algebra $\mathcal{A}$ 
over $\F$ with
$k$ generators as $k$-tuples $G_1,\ldots,G_k$ and $G_1',\ldots,G_k'$
of $n$ by $n$ matrices. Then $\mbox{Hom}_{\mathcal{A}}(U,U')$ is the $\F$-linear
subspace of $\mbox{Hom}_\F(U,U')$, identified with $M(n,\F)$, consisting
of matrices $X$ with $XG_i=G_i'X$ ($i=1,\ldots,k$). As these conditions
are linear in the entries of $X$, the space $\mbox{Hom}_{\mathcal{A}}(U,U')$
can be obtained by solving a system of homogeneous linear equations
in $n^2$ elements. Furthermore, $U'$ is isomorphic to $U$ if
and only if there exists a nonsingular matrix in $\mbox{Hom}_{\mathcal{A}}(U,U')$.
In turn, any such nonsingular matrix witnesses an isomorphism and,
by the Schwartz-Zippel lemma, for sufficiently large base field
a random homomorphism will be an isomorphism. Due to the special
algebraic structure behind this problem, it can be solved even
by deterministic polynomial-time methods, see the method of
Chistov, Ivanyos and Karpinski
\cite{CIK97} working over many fields, or a different 
approach of Brooksbank and Luks \cite{BL08} which works
over arbitrary fields, and an extension of the first method
to arbitrary fields given by Ivanyos, Karpinski and Saxena
in \cite{IKS}. Interestingly, the general case of finding
a surjective or injective homomorphism  
between non-isomorphic
modules deterministically
turns out to be as hard as the constructive version
of Edmonds' general problem~\cite{IKS}.

Recall that one motivation to study Edmonds' problem is due to its implications to 
certain combinatorial problems. This line of research mostly focuses in the case 
when the given 
matrices are of particular form, e.g. rank-1 and certain generalizations 
\cite{Geelen,Murota,HKM05,IKS} as used in bipartite graph matchings, or skew 
symmetric rank-2 and certain generalizations \cite{Geelen00,GIM03,GI05} as used in 
general graph matchings. 

Another line of research deals with matrix spaces that satisfy certain properties. 
Note that properties of matrix spaces should not depend on a 
particular basis. For example, we can define a property of matrix 
spaces as, ``having a basis consisting of rank-$1$ matrices.'' So if $\cB$ has a 
basis consisting of rank-$1$ matrices, $\cB$ may not necessarily be presented 
using this rank-$1$ basis. We are not aware of any result on the complexity 
of finding rank-$1$ generators for rank-$1$ spanned matrix spaces, if it is given 
by a basis consisting of not necessarily rank-$1$ matrices.
We believe that the problem is hard. Thus the results in 
\cite{Geelen,Murota,HKM05,IKS}, which assume that the
input is given 
by a rank-$1$ basis, do not translate to algorithms 
for rank-$1$ spanned matrix spaces. 

As far as we are aware, there are two references for SDIT 
which assume only properties of matrix 
spaces. The first one is Gurvits' algorithm in \cite{Gurvits}; this algorithm 
works over $\Q$, 
and assumes a property which Gurvits called 
``Edmonds-Rado.'' His 
algorithm, put in the context of this paper, is rephrased as 
Theorem~\ref{thm:gurvits}. Gurvits left open the problem of developing a 
deterministic efficient algorithm for rank-$1$ spanned matrix spaces over finite 
fields. This was settled in affirmative in \cite{conf_version}, the other 
reference 
that assumes properties of matrix spaces. 

Recall that the other major incentive to study Edmonds' problem is to understand 
arithmetic circuit lower bounds via \cite{KI04,CIKK15}. We believe that for this 
goal, a 
better 
indication of 
progress is to use properties of matrix spaces, rather than properties of 
the given matrices. One reason is that, whether a matrix space contains a 
nonsingular matrix, is a property of matrix spaces. Another reason is that many 
properties of matrix spaces seem difficult to test algorithmically.
Furthermore, note that in this paper we heavily rely on algorithmic techniques 
developed in \cite{Gurvits} and \cite{conf_version}. This may be viewed as another 
evidence of the importance of working with properties of matrix spaces.

\paragraph{Connections to Kronecker coefficients.} 

Recently, there was an interest in studying the semi-invariants of the 
$m$-Kronecker quivers due to its connection with the Kronecker coefficients 
\cite{ANK07,ANK08,Man10}, namely the multiplicities in the direct 
sum decompositions of the tensor products of two 
irreducible representations of symmetric groups. Giving a positive 
combinatorial description of these coefficients is considered to be one of the 
most important problems in the combinatorial representation theory of symmetric 
groups.

\subsection{Update on recent progress} 
There have been some exciting developments since we posted
a version of this paper on the arXiv.

First, Garg et 
al. presented a deterministic polynomial-time algorithm for computing the 
non-commutative rank over $\Q$ \cite{GGOW}. This is achieved via a closer analysis 
of Gurvits' 
algorithm \cite{Gurvits}. Their analysis uses the exponential bounds on 
$\sigma(R(n, m))$ as proved in this paper, or deducible from Derksen's result 
\cite{derksen_bound}. It should be noted that that the given algorithm is not 
constructive, 
in that it fails to produce a witness (e.g. shrunk subspaces). 

Second, Derksen and Makam proved
that $\sigma \leq n^2 -n$ over large enough fields~\cite{DM2}. 
Over fields of characteristic zero this implies $\beta(R(n,m))=O(n^6)$, 
settling the question of whether there is a polynomial degree bound on the 
generators for this ring of invariants.  To prove 
the upper bound on $\sigma(R(n,m))$, Derksen and Makam discover a concavity 
property of blow-ups, and rely crucially on Lemma~\ref{lem:reg_non_cons} 
proved in this paper.

After \cite{DM2} appeared, in \cite{IQSnote} we show that the technique of Derksen 
and Makam can be constructivized. By combining that with the techniques in this 
paper, we obtain a 
constructive deterministic polynomial-time algorithm for computing the 
non-commutative rank over large enough 
fields. 
There we also present another independent proof of 
$\sigma(R(n, m))\leq 
n^2+n$. That argument also builds on Lemma~\ref{lem:reg_non_cons} from this paper 
and is much 
simpler than the concavity argument of Derksen and Makam. 

\paragraph{Organization.} In Section~\ref{sec:prel} we present certain 
preliminaries. In Section~\ref{sec:gurvits} we give an exposition of the 
natural connection 
between commutative and the non-commutative Edmonds problem, and prove 
Proposition~\ref{thm:sdit_main}. In Section~\ref{sec:cda} we present an efficient 
construction of division algebras, to be used in proving the main technical lemma 
Lemma~\ref{lem:reg_technical} . In Section~\ref{sec:blowup} we prove 
the formal version of Theorem~\ref{thm:main_informal} 
(Theorem~\ref{thm-blup-main}) and deduce Corollary~\ref{cor:blup_edmonds} 
and~\ref{cor:blup_degree}.

\section{Preliminaries}\label{sec:prel}

\subsection{Notation}
\label{sec:notation}

For the reader's convenience, we collect the main notations in this section. Some 
of these were 
already introduced in the introduction. 

For $n\in\N$, 
$[n]:=\{1, \dots, n\}$. 
Given two vector spaces $U$ and $V$, $U \leq V$ denotes 
that $U$ is a subspace of $V$. $\zvec$ denotes the zero vector or the trivial 
vector space. 

Let $\F$ be a field. $\fdchar(\F)$ denotes the characteristic of $\F$. $M(n, \F)$ 
is the 
linear space of $n\times n$ matrices over $\F$. The rank of $A \in M(n,\F)$ is 
denoted $\rk(A)$. The corank of $A$, $\cork(A)$, is $n-\rk(A)$. For $U\subseteq 
\F^n$, $A^{-1}(U)=\{v\in \F^n \mid A(v)\in U\}$. $I$ denotes the 
identity matrix.

A linear subspace of $M(n, \F)$ is called 
a \emph{matrix space}.
For $B_i \in M(n, \F)$, $i\in[m]$, $\langle 
B_1, \ldots, B_m \rangle$  
denotes the matrix space spanned by the $B_i$'s. 
For a matrix space $\cB\leq M(n, \F)$, $\rk(\cB)$ is defined as 
$\max\{\rk(B)\mid B\in\cB\}$. We call $\cB$ \emph{singular}, if $\rk(\cB)<n$. For 
$\cB \leq M(n,\F)$ and $U\leq \F^n$, $\cB(U):=\langle \cup_{B\in\cB} B(U)\rangle$.
The non-commutative rank, 
$\nrk(\cB)$, is $n-c$ where $c$ is the maximum integer such that there exists a 
$c$-shrunk subspace. 

For $A\in M(d, \F)$ and $B\in M(n, \F)$, the tensor product $A \otimes B$ 
is a $d 
\times d$ block matrix with block size $n\times n$. 
For $A=(a_{i,j})_{i,j\in[d]}$, the $(i,j)$-th block 
of $A \otimes B$ is $a_{i,j}B$. 
For $\cB=\langle B_1, \dots, B_m\rangle\leq M(n, \F)$, the \emph{$d$th tensor 
blow-up},
$\blowup{\cB}{d}:=M(d, \F)\otimes \cB \leq 
M(dn,\F)$. A linear basis of $\blowup{\cB}{d}$ is $\{ E_{i,j}\otimes 
B_k\mid i, j\in[d], 
k\in[m]\}$, 
where $E_{i,j}$ is the matrix with $1$ at the $(i,j)$th position, and 
$0$ otherwise. 
In 
Section~\ref{sec:blowup} it will be easier to work with 
$\rblowup{\cB}{d}:=\cB\otimes M(d, \F)$. As $\blowup{\cB}{d}\cong 
\rblowup{\cB}{d}$, the latter will also 
referred to as the $d$th tensor blow-up of $\cB$. 

\subsection{The second Wong sequences}\label{subsec:wong}

Let us introduce a key tool to be used in Section~\ref{sec:blowup}, 
called the second\footnote{The first Wong sequence is the dual of the second one; 
this 
naming convention is due to Wong who in \cite{Wong} defined the two sequences for 
the special case $\cB=\langle B\rangle$.} (generalized) Wong sequence.  This was 
used by Fortin and Reutenauer 
\cite{FR04}, and rediscovered by the first two authors with Karpinski and Santha 
in \cite{conf_version} to solve Edmonds' problem for rank-$1$ spanned
matrix spaces over arbitrary fields. 

Given $A\in M(n, \F)$ and $\cB\leq M(n, \F)$, the second Wong sequence 
of 
$(A, \cB)$ is the following sequence of subspaces in $\F^n$: $W_0=\zvec$, 
$W_1=\cB(A^{-1}(W_0))$, \dots, $W_i=\cB(A^{-1}(W_{i-1}))$, \dots. It can be proved 
that $W_0<W_1<W_2<\dots<W_\ell=W_{\ell+1}=\dots$ for some $\ell\in\{0, 1, \dots, 
n\}$. $W_\ell$ is then called the limit of this sequence, denoted as $W^*$. 

A useful way to understand the second Wong sequence is to view it as a linear 
algebraic analogue of the augmenting path on bipartite graphs. While not precise, 
we find this intuition helpful. That is, we view 
matrices as linear maps from $V$ to $W$, $V\cong W\cong 
\F^n$. Vectors in $V$ and $W$ may be thought of as the ``vertices'' on the left 
and right part, respectively. Then for $A\in \cB$, thinking of $A$ as a given 
matching, $A^{-1}(\zvec)$ can be understood as identifying those ``vertices 
unmatched by $A$ on the left part.'' Then $\cB(A^{-1}(\zvec))$ is understood as 
taking those ``edges'' outside $A$, and $A^{-1}(\cB(A^{-1}(\zvec)))$ is understood 
as 
taking a further step with those ``edges'' in $A$. And so on. 

The key fact is that, when $A\in\cB$, $W^*\leq\im(A)$ if and only if there exists 
a $\cork(A)$-shrunk subspace \cite[Lemma 9]{conf_version} (reproduced below 
as Fact~\ref{fact:Wong}). If this is the 
case, 
$A$ 
is of maximum 
rank and $A^{-1}(W^*)$ is a $\cork(A)$-shrunk subspace. It is clear that
the second Wong sequence can be computed using polynomially many arithmetic 
operations. The direct way to  
compute the second Wong sequences over $\Q$ may cause the bit lengths to 
explode. If testing whether $W^*\leq\im(A)$ is the only concern (as in our 
application here), by replacing 
$A^{-1}$ with some appropriate ``pseudo-inverse'' of $A$, the bit lengths of the 
intermediate numbers up to the first $W_k$, $W_k\not\leq\im(A)$, can be 
bounded by a polynomial of the input size. We refer the reader to \cite[Lemma 
10]{conf_version} for this trick. 

When $|\F|$ is $\Omega(n)$, this immediately gives a 
method to decide whether $\nrk(\cB)=\rk(\cB)$ as in \cite{FR04}: randomly choose a 
matrix $A\in\cB$, which will be of maximal rank with high probability. Then 
compute the second Wong sequence of $(A, \cB)$ and check whether the limit 
$W^*\subseteq\im(A)$. 

For completeness we summarize the above discussion together as a fact. 

\begin{fact}[{\cite[Lemmas 9 and 10]{conf_version}}]
\label{fact:Wong}
Let $A \in \cB \leq M(n,{\F})$, and let $W^{*}$ be the limit of the second Wong 
sequence of
$(A, \cB)$. Then there exists a $\cork(A)$-shrunk subspace of $\cB$ if and only 
if $W^{*} \subseteq \im(A)$.
If this is the case then $A^{-1}(W^{*})$ is a 
$\cork(A)$-shrunk subspace of 
$\cB$. In the algebraic RAM model as well as over ${\mathbb Q}$ we can detect 
whether $W^*\subseteq \im(A)$ and if so compute a shrunk subspace
in deterministic polynomial time.
\end{fact}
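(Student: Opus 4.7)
For the direction $W^* \subseteq \im(A) \Rightarrow$ existence of a $\cork(A)$-shrunk subspace, I would set $U := A^{-1}(W^*)$. Because $W^*$ is the limit of the Wong sequence it is a fixed point of the map $V \mapsto \cB(A^{-1}(V))$, so $\cB(U) = W^*$. Since $\ker(A) \subseteq U$, the dimension formula gives $\dim U = \cork(A) + \dim(W^* \cap \im A)$; under the hypothesis $W^* \subseteq \im A$ this becomes $\cork(A) + \dim W^*$, and hence $\dim \cB(U) = \dim U - \cork(A)$, certifying $U$ as a $\cork(A)$-shrunk subspace.

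For the converse I would argue by contrapositive. A $\cork(A)$-shrunk subspace exists iff $\cork(A) \leq n - \nrk(\cB)$, which combined with $\rk A \leq \rk \cB \leq \nrk(\cB)$ forces $\rk A = \rk \cB = \nrk(\cB)$; in particular $A$ has maximum rank in $\cB$. It therefore suffices, assuming $W^* \not\subseteq \im A$, to produce $A' \in \cB$ with $\rk A' > \rk A$. Let $k$ be the smallest index with $W_k \not\subseteq \im A$, pick $v \in W_k \setminus \im A$, and unfold the Wong definitions to extract a chain $u_1, \ldots, u_k \in \F^n$ and $B_1, \ldots, B_k \in \cB$ satisfying $v = B_1 u_1$, $A u_i = B_{i+1} u_{i+1}$ for $1 \leq i < k$, and $u_k \in \ker A$. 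This chain mirrors an augmenting path. From it I would build a rank-raising perturbation of $A$: form a polynomial combination $A(t) = A + \sum_i t^{e_i} B_i \in \cB \otimes \F[t]$ together with a vector polynomial $w(t) = \sum_j t^{f_j} u_j$, choose the exponents so that cancellations along the chain yield $A(t) w(t) \equiv c\cdot t^{s} v \pmod{t^{s+1}}$ for some $s \geq 1$ and $c \neq 0$, and conclude using $v \notin \im A$ together with lower-semicontinuity of rank that $\rk A(t_0) > \rk A$ for all but finitely many $t_0 \in \F$, producing the desired $A'$.

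Algorithmically, the Wong sequence stabilizes within $n$ iterations, each of which reduces to standard linear algebra (solving a linear system for $A^{-1}(W_{i-1})$, then computing $\cB(\cdot)$ as a span), so the total operation count is polynomial in the input size. Over $\Q$, naive preimages under $A$ blow up bit lengths; the remedy from \cite{conf_version} is to fix a generalized inverse $A^{\dagger}$ satisfying $A A^{\dagger} A = A$ (computable in polynomial bit complexity from a rank factorization of $A$) and to halt the iteration as soon as $W_k \not\subseteq \im A$ is detected, or upon stabilization. Since containment in $\im A$ is monotone along the chain $W_0 \leq W_1 \leq \cdots \leq W^*$, this truncation is sound, and $A^{-1}(W^*)$ can be reconstructed from $A^\dagger$-preimages plus $\ker A$. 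The main obstacle is the augmenting construction in the converse: selecting the chain $(u_i, B_i)$ coherently and verifying that the polynomial identity really yields a nonzero residue modulo $\im A$ requires care, and is the one step that goes beyond elementary dimension-counting; the minimality of $k$ (so intermediate terms stay inside $\im A$) and the choice $v \notin \im A$ are both essential to make the leading term survive.
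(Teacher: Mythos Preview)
The paper does not prove this statement; it is imported from \cite[Lemmas~9 and~10]{conf_version} and recorded as a Fact. So I evaluate your argument on its own merits.

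Your forward direction ($W^*\subseteq\im A \Rightarrow A^{-1}(W^*)$ is a $\cork(A)$-shrunk subspace) is correct and is the standard dimension count.

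Your converse, however, contains a genuine error. You reduce it to the claim that whenever $W^*\not\subseteq\im A$ one can find $A'\in\cB$ with $\rk A'>\rk A$. This claim is \emph{false}. Take $\cB$ to be the space of $3\times 3$ skew-symmetric matrices and $A\in\cB$ any element of rank~$2$. Then $\nrk(\cB)=3>2=\rk(\cB)$, so no $1$-shrunk subspace exists, and indeed one checks that $W_2=\F^3\not\subseteq\im A$. Yet $A$ already has the maximum rank in $\cB$: every skew-symmetric $3\times 3$ matrix is singular. Your perturbation $A(t)=A+\sum_i t^{e_i}B_i$ stays in $\cB$ for every value of $t$, so it can never exceed rank~$2$, no matter how the exponents are chosen. (This is exactly why the paper, in Section~\ref{sec:blowup}, is forced to pass to a blow-up $\rblowup{\cB}{d}$ before it can realize the augmenting path as an actual rank increase.)

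The correct converse is direct and requires no rank-raising construction. Suppose $U$ is a $\cork(A)$-shrunk subspace and set $W=\cB(U)$, so $\dim W\leq\dim U-\cork(A)$. Since $A\in\cB$ we have $A(U)\subseteq W$; comparing dimensions forces $\ker A\subseteq U$ and $A(U)=W\subseteq\im A$. Now show $W_i\subseteq W$ by induction: if $W_{i-1}\subseteq W$ then $A^{-1}(W_{i-1})\subseteq A^{-1}(W)=A^{-1}(A(U))=U+\ker A=U$, hence $W_i=\cB(A^{-1}(W_{i-1}))\subseteq\cB(U)=W$. Thus $W^*\subseteq W\subseteq\im A$.

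Your algorithmic paragraph is fine as a sketch and matches the discussion the paper gives before stating the Fact.
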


For a matrix space $\cB$ of dimension $2$, $\rk(\cB)=\nrk(\cB)$ for large 
enough $\F$; this follows from the Kronecker-Weierstrass theory of matrix pencils, 
and alternative proofs may be found in \cite{EH88,PrimitiveI}. Due to this fact, 
it was observed in \cite{conf_version} that by utilizing the second Wong sequence 
we have the following. 
\begin{fact}[{\cite[Fact 11]{conf_version}}]\label{fact:dimtwo}
Assume that $|\F|>n$, and let $\cB=\langle A, B\rangle \leq M(n, \F)$. Then 
$\rk(A)=\rk(\cB)$ if and only if for any $i\in[n]$, $(\cB 
A^{-1})^i(\zvec)\leq  
\im(A)$. 
\end{fact}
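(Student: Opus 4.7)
The plan is to read the statement as a near-immediate corollary of two ingredients already stated in the paper: (i) the characterization of shrunk subspaces via the limit of the second Wong sequence (Fact~\ref{fact:Wong}), and (ii) the classical fact that for a $2$-dimensional matrix space $\cB \leq M(n,\F)$ over a field of size $>n$, one has $\rk(\cB) = \nrk(\cB)$ (Kronecker--Weierstrass, as cited just before the statement).

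First I would unpack the notation. The iterates $W_i := (\cB A^{-1})^i(\zvec)$ are exactly the members of the second Wong sequence of $(A, \cB)$, so $W_0 = \zvec$ and $W_i = \cB(A^{-1}(W_{i-1}))$. Since this is a weakly increasing chain of subspaces of $\F^n$, it stabilizes at some index $\ell \leq n$, and the limit $W^*$ equals $W_n$. Consequently the condition ``$(\cB A^{-1})^i(\zvec) \leq \im(A)$ for every $i \in [n]$'' is just the condition $W^* \leq \im(A)$.

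For the backward direction, assume $W^* \leq \im(A)$. By Fact~\ref{fact:Wong}, there is a $\cork(A)$-shrunk subspace of $\cB$, hence by the definition of non-commutative rank, $\nrk(\cB) \leq n - \cork(A) = \rk(A)$. Since $\rk(A) \leq \rk(\cB) \leq \nrk(\cB)$ always holds, equality $\rk(A) = \rk(\cB)$ follows. This direction does not require $\dim \cB = 2$ nor any hypothesis on $|\F|$.

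For the forward direction, assume $\rk(A) = \rk(\cB)$. This is where I would invoke Kronecker--Weierstrass: since $\dim \cB = 2$ and $|\F| > n$, $\rk(\cB) = \nrk(\cB)$. Therefore $\nrk(\cB) = \rk(A) = n - \cork(A)$, so by definition there exists a $\cork(A)$-shrunk subspace, and Fact~\ref{fact:Wong} then yields $W^* \leq \im(A)$, which by the reformulation above is the desired conclusion. There is essentially no technical obstacle; the only non-trivial input is the Kronecker--Weierstrass identity $\rk(\cB) = \nrk(\cB)$ for pencils over a sufficiently large field, which is treated as a black box in the excerpt.
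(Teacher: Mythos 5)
Your proof is correct and follows exactly the route the paper indicates just before the statement: combine the Kronecker--Weierstrass fact that $\rk(\cB)=\nrk(\cB)$ for a two-dimensional matrix space over a field of size $>n$ with the second Wong sequence characterization of shrunk subspaces (Fact~\ref{fact:Wong}), after observing that $(\cB A^{-1})^i(\zvec)$ is just the $i$-th term $W_i$ of that sequence so that the stated condition is equivalent to $W^*\leq\im(A)$. Your additional remark that the backward direction needs neither $\dim\cB=2$ nor the field-size hypothesis is accurate and consistent with the fact that Fact~\ref{fact:Wong} holds unconditionally.
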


\section{Gurvits' algorithm and Proposition~\ref{thm:sdit_main}}\label{sec:gurvits}

\paragraph{Commutative and the non-commutative Edmonds problem: a natural pair.} 
Viewing matrices as linear maps between two vector spaces, one may suspect Edmonds 
problem to be a linear algebraic analogue of the maximum 
matching problem on bipartite graphs, with elements of the
underlying vector spaces as being the left and right side vertices, and the 
matrices as giving us edges -- mapping a vector on the left side to one on the 
right side. Given such a correspondence, one may ask whether an analogue of Hall's 
theorem holds in this setting, i.e., is it true 
that a matrix space either has a matrix of rank $s$, or has an
$(n-s+1)$-shrunk subspace; or put differently, whether $\rk(\cB)=\nrk(\cB)$ holds 
for all $\cB$.
This is far from the truth! 
For example, for the space of skew-symmetric matrices 
($A=-\trans{A}$) 
of size $3$, we have $\nrk=3$ and $\rk=2$.

That is, while for the bipartite maximum matching problem, 
matchings and shrunk subsets are two sides of the same coin, in the linear 
algebraic setting, this coin splits into two problems:  Edmonds'
original (commutative) 
problem asks to compute the maximum rank, and the non-commutative 
Edmonds problem 
asks to compute the maximum $c$ for the existence of a $c$-shrunk 
subspace. 

\paragraph{Rank-1 spanned matrix spaces.} Now we point out that several 
results on the (commutative) Edmonds problem can be viewed, and should be 
understood as, 
resolving the 
non-commutative counterpart. For this, note that shrunk subspaces are a 
natural witness for the 
singularity of matrix spaces: this construction can be dated back to 1930's in T. 
G. Room's book \cite{Room}, and plays a key role in several results which solve 
special 
cases of Edmonds' problem including \cite{Lovasz,Gurvits,conf_version}. 

A particular 
case of interest is rank-$1$ spanned matrix spaces: those matrix spaces that have 
a basis consisting of rank-$1$ matrices. For rank-$1$ spanned spaces, the analogue 
of Hall's theorem holds \cite{Lovasz}, so the commutative and the non-commutative Edmonds 
problems coincide Therefore, the known results for rank-$1$ spanned spaces 
\cite{Gurvits,conf_version} can be viewed as solving either the non-commutative 
Edmonds problem or the commutative one. In retrospect, the results for rank-$1$ 
spanned spaces rely on 
shrunk subspaces in such a critical way that they should be understood as 
solving NCFullRank rather than the commutative 
version for this special case: 
\begin{itemize}
\item The core of Gurvits' algorithm \cite{Gurvits} is an iterative procedure 
called the operator 
Sinkhorn's scaling procedure. When applied to a matrix space $\cB$, this procedure 
converges, if and only if $\cB$ has a shrunk subspace.
\item The key tool in \cite{conf_version} is the second Wong sequence as described 
in Fact~\ref{fact:Wong}. When applied to $A\in \cB\leq M(n, \F)$, this 
sequence 
stabilizes in polynomially many number of steps, and the limit subspace is 
contained in $\im(A)$ if and only if $\cB$ has an $\cork(A)$-shrunk subspace. 
\end{itemize}

\paragraph{Gurvits' algorithm; Proof of Proposition~\ref{thm:sdit_main}.} In fact, 
Gurvits' algorithm works by assuming that an analogue 
of Hall's theorem for perfect matchings holds. 
\begin{theorem}[\cite{Gurvits}]\label{thm:gurvits}
Over $\Q$, given a matrix space $\cB$ such that either $\rk(\cB)=n$ or $\nrk(\cB)<n$, 
there exists a deterministic polynomial-time algorithm that solves SDIT, and 
therefore NCFullRank. 
\end{theorem}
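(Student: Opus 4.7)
The plan is to invoke Gurvits' operator Sinkhorn scaling algorithm. Given $\cB = \langle B_1, \ldots, B_m\rangle \leq M(n, \Q)$, associate the completely positive operator $T \colon M(n, \C) \to M(n, \C)$ defined by $T(X) = \sum_{i=1}^m B_i X B_i^{\dagger}$; by formulation~(3) of the introduction, $T$ is rank decreasing if and only if $\cB$ has a shrunk subspace, i.e.\ $\nrk(\cB) < n$. The algorithm iteratively rescales $\cB$ by left- and right-multiplying all the $B_i$ by invertible matrices in order to drive $T$ toward being doubly stochastic ($T(I) = I = T^*(I)$). Such scalings leave both $\rk(\cB)$ and $\nrk(\cB)$ invariant, since they act as basis changes on the domain and codomain.

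The analysis then exploits the promise. If $\nrk(\cB) < n$, a shrunk subspace provides a rigid obstruction: the ``distance to doubly stochastic'' stays bounded below by an explicit quantity depending on the shrunk dimension, so Sinkhorn cannot drive $T$ within that tolerance after any number of steps. If instead $\rk(\cB) = n$, then $\nrk(\cB) = n$ as well, and Gurvits' central lemma shows that the capacity
$$\gcap(T) = \inf\{\det T(X) : X \succ 0,\ \det X = 1\}$$
is initially at least $2^{-\poly(L)}$ (for $L$ the bit size of the input) and is amplified by a definite multiplicative factor at each Sinkhorn step until it saturates at its maximum value $1$. Within $\poly(n, L)$ iterations the iterates therefore approach doubly stochastic closely enough to distinguish the two cases.

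The algorithm outputs ``$\rk(\cB) = n$'' if and only if a prescribed proximity to doubly stochastic is reached within the iteration budget, and ``singular'' otherwise. Under the promise, ``singular'' is produced only when $\nrk(\cB) < n$, which forces $\rk(\cB) < n$, so SDIT is decided correctly; the same output settles NCFullRank since the promise collapses the two problems. The key point is that the only genuinely hard instance for SDIT, namely $\rk(\cB) < n \leq \nrk(\cB)$, is precisely what the promise forbids.

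The main obstacle is the quantitative convergence analysis: one must (i) lower bound the starting capacity in terms of $L$, (ii) show that each Sinkhorn step multiplies the capacity by a factor bounded away from $1$ until it nears the maximum, and (iii) control the bit length of the intermediate rescaling matrices by truncating to $\poly(L)$ bits per step while preserving enough precision that the potential argument survives. Gurvits handles (i) via a permanent-style AM--GM inequality that expresses capacity through a geometric-programming relaxation, (ii) through a potential function built from $\log \gcap$ whose increments per iteration can be bounded below, and (iii) by careful rounding combined with perturbation estimates for the polar decomposition used in each scaling step.
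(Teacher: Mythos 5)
The paper does not prove Theorem~\ref{thm:gurvits}; it is quoted as a black box from Gurvits' paper, with the surrounding text only restating the promise condition and noting that the algorithm may err on inputs with $\rk(\cB)<n=\nrk(\cB)$. So there is no ``paper proof'' to compare against --- the relevant question is whether your sketch accurately reflects what happens in \cite{Gurvits}.

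Your sketch does. The ingredients are all the right ones: the completely positive operator $T(X)=\sum_i B_iXB_i^{\dagger}$ and its identification with formulation (3); iterative operator Sinkhorn scaling toward the doubly stochastic condition $T(I)=I=T^*(I)$, which preserves both ranks since it amounts to left/right basis changes on $\cB$; the capacity $\gcap(T)$ as the potential function, with $\gcap(T)>0$ if and only if no shrunk subspace exists; the quantitative lower bound $\gcap(T)\geq 2^{-\poly(L)}$ when $\rk(\cB)=n$ (the fact that this bound genuinely uses $\rk(\cB)=n$, via a nonsingular integral matrix in $\cB$, is precisely why Gurvits needs the promise --- as you observe, the forbidden case $\rk(\cB)<n\leq\nrk(\cB)$ is the only one where the algorithm might lie); monotone amplification of the capacity under each Sinkhorn step; and truncation/rounding to keep bit lengths polynomial. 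One remark worth making: the paper's later discussion of \cite{GGOW} shows exactly how the promise was eventually removed --- by proving a capacity lower bound under $\nrk(\cB)=n$ alone, which relies on degree bounds for $\sigma(R(n,m))$ such as those established in this paper --- so the structure of your step (i) is the genuine bottleneck and your sketch places it correctly.
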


Gurvits' algorithm almost solves NCFullRank over $\Q$. The only 
problem is that for a matrix space $\cB$ with $n=\nrk(\cB)>\rk(\cB)$ 
the algorithm may give a 
wrong answer. 
(We note that even when the input to the algorithm is such a matrix space it 
terminates 
in polynomially many steps.) We observe that this can be rectified by considering 
matrix semi-invariants up to the upper bound for $\sigma(R(n, m))$. 

\begin{proof}[Proof of Proposition~\ref{thm:sdit_main}]
Recall that, by assumption, the nullcone of $R(n, m)$ is defined by elements of 
degree 
$\leq \sigma(R(n, m))$ over $\Q$. Also, given a matrix space 
$\cB\in M(s, \Q)$, Gurvits' algorithm either reports that $\rk(\cB)=s$, or 
$\nrk(\cB)<s$. When $\rk(\cB)=s$ or $\nrk(\cB)<s$, it is always correct.

The algorithm is easy to describe: for $d=1, \dots, \sigma = \sigma(R(n, m))$ run 
Gurvits' algorithm with input
$\blowup{\cB}{d}$. If for some $d$, Gurvits' 
algorithm reports $\rk(\blowup{\cB}{d})=dn$, then 
output $\nrk(\cB)=n$ and halt. Otherwise, return $\nrk(\cB)<n$. 

It is clear that this algorithm runs in time polynomial in the input size and 
$\sigma$. Note that a linear basis of $\blowup{\cB}{d}$ can be constructed easily 
in time polynomial in the input size of $\cB$ and $d$. 

From the discussion in Section~\ref{subsec:inv}, the correctness is also 
easy to see. Specifically, assuming the bound on $\sigma$, the simultaneous 
vanishing of
$
\det(Y_1\otimes B_1+\dots+Y_m\otimes B_m) \in \Q[y^{(k)}_{i,j}]
$
for $d=1, \dots, \sigma$, characterizes whether $\nrk(\cB)<n$ or not. Therefore, 
if $\nrk(\cB)=n$, then for some $d\leq \sigma$, $\rk(\blowup{\cB}{d})$ is full. On 
the other hand if $\nrk(\cB)<n$, then there is a shrunk subspace $U$. For each 
$d$, 
$\Q^d \otimes U$ is a shrunk subspace of $\blowup{\cB}{d}$ and so  
$\nrk(\blowup{\cB}{d})<dn$ for any $d$.
\end{proof}

\paragraph{Implications to Gurvits' algorithm.} The invariant-theoretic viewpoint 
also connects to a question of Gurvits in \cite{Gurvits}. In \cite{Gurvits}, given 
a basis $\{B_1, \dots, B_m\}$ of  $\cB\leq M(n, \C)$, Gurvits 
associates with it a completely positive operator i.e., a linear map $F:M(n, 
\C)\to M(n, \C)$. The main algorithmic technique is the so-called operator 
Sinkhorn's iterative scaling procedure, which is applied to $F$. This procedure is 
a quantum generalization of the classical Sinkhorn's iterative scaling procedure, 
which is applied to nonnegative matrices, and can be used to approximate the 
permanent, and to decide the existence of perfect matchings \cite{GY98,LSW00}. 
Gurvits proved 
that this procedure, when applied to the operator $T$ derived from a matrix space 
$\cB$, converges if and only if $\cB$ has a shrunk subspace. He proved this using 
a continuous but non-differentiable function, called the capacity of an operator, 
denoted as $\gcap(F)$. Specifically, he showed that $\cB$ has a shrunk subspace if 
and only if $\gcap(F)=0$. 
Gurvits asked whether there exists a ``nice'' function, like a polynomial with 
integer coefficients, that characterizes $\cB$ with shrunk subspaces. Our previous 
argument suggests that there exists a set of polynomial functions with integer 
coefficients, whose simultaneous vanishing characterizes those $\cB$ with shrunk 
subspaces, and therefore a ``nice'' substitute for Gurvits' capacity. However, the 
number of these polynomial functions depends on the degree 
bound for matrix semi-invariants.

\section{Efficient construction of division algebras}\label{sec:cda}


Division algebras, and efficient construction 
of such algebras with explicit matrix representations play a 
crucial role in the main technical lemma, Lemma~\ref{lem_reg_blowup}, in this 
paper. In this section we 
present an efficient construction of such algebras based on Kummer extensions. 


\subsection{Basic facts about central division algebras} 

Let us first introduce 
some basic facts about central division algebras. Proofs of 
the these statements can be found in ~\cite{Lam}. 

Let $\F$ be a field. A {\em division algebra} $D$ over $\F$ is an
associative $\F$-algebra in which the non-zero elements
are invertible. The {\em center} of a division algebra $D$
over $\F$ is obviously an extension field of $\F$. All the division
algebras considered in this section are finite
dimensional over their center. The {\em opposite division algebra} $D^{op}$ is the 
algebra with the same set of 
elements as $D$ and with multiplication $x \cdot y$ defined to be $y*x$, with $*$ 
being the multiplication in $D$. 

When
the center coincides with $\F$, we say that $D$ is
a {\em central division algebra} over $\F$, and in this case, 
$\dim_{\F}(D)= d^2$ for some positive integer $d$.
This $d$ is called the {\em index} of $D$. 
For $x,y\in D$ we can consider the linear transformation 
$\mu_{x,y}$ on $D$, considered as a vector space of dimension $d^2$ over
$\F$, defined as $\mu_{x,y}z=x*z*y$. The linear extension of the map
$x\otimes y\mapsto \mu_{x,y}$ to $D\otimes D^{op}$ gives an 
isomorphism $D \otimes D^{op} \cong 
M(d^2, \F)$, the algebra of $d^2 \times d^2$ matrices 
with entries in $\F$. 
The image of $D \otimes I$  under this isomorphism gives a representation of $D$ in 
the space of $d^2 \times d^2$ matrices with entries in $\F$. 
What is important for us is the observation that 
matrices giving us a 
representation of $D^{op}$ in $M(d^2, \F)$ commute with the matrices giving us a 
representation of $D$ in $M(d^2,\F)$.

\subsection{Constructing cyclic field extensions under a coprime 
condition}\label{subsec:cyclic}

Our division algebras will be cyclic algebras, that is, non-commutative algebras
constructed from cyclic extensions of fields. In this subsection
we present an efficient construction of such field extensions, 
under the condition that the extension 
degree and the field characteristic are coprime. 

Recall that a {\em cyclic extension of a field} $\K$
is a finite Galois extension of $\K$ having a cyclic Galois group.
By constructing a cyclic extension $\L$ we mean constructing
the extension as an algebra over $\K$, e.g., by giving an
array of {\em structure constants} with respect to a $\K$-basis for $\L$
defining the multiplication on $\L$ as well as specifying a generator 
of the Galois group, e.g, by its matrix with respect to a $\K$-basis.
Recall that for a finite dimensional algebra $\cA$ over the field
$\K$, a common way to specify the multiplication is using an array
of structure constants with respect to a $\K$-basis $A_1,\ldots,A_d$.
These are $d^3$ elements $\gamma_{ijk}$ of $\K$
such that $A_iA_j=\sum_{k=1}^d\gamma_{ijk}A_k$. 
Then we can represent elements of $\cA$  by the vectors of
their coordinates in terms of the basis $A_1,\ldots,A_d$.
The size of the data representing the structure constants gives
some control over the size of the data representing the product
of elements. For example, consider the following situation: 
$\K$ is the function field $\F'(Z)$, where $\F'$ is a field and $Z$ a formal 
variable. The structure constants
happen to be polynomials in $\F'[Z]$. Then for two elements
of $\cA$ with their coordinates being polynomials in $\F'[Z]$, their product
will have also polynomial coordinates, and the degrees
of the coordinates of the product are upper bounded by the
sum of the maximum degrees of coordinates of the factors, plus the 
maximum degree of the structure constants. 

\begin{lemma}
\label{lem:cyclic_gen}
Let $\F'$ be a field. Let $d$ be any non-negative integer if the characteristic of 
$\F'$ is zero, otherwise assume that $d$ is not divisible by the characteristic of 
$\F'$.  Assume that $\F'$
contains a known primitive
$d$th root of unity $\zeta$, and let $X$ be a formal variable. 
Then 
a cyclic extension $\L$ having degree $d$ over $\K:=\F'(X)$ can 
be computed using $\poly(d)$ arithmetic operations. $\L$ will be given by structure
constants with respect to a basis, and the matrix for a generator of
the Galois group of $\L/\K$ in terms of the same basis will 
also be given.
All the output entries (the structure constants as well as the entries 
of the matrix representing the Galois group generator) will be
polynomials of degree $\poly(d)$ in $\F'[X]$.
Furthermore for $\F'=\Q[\sqrt[d]{1}]$, the bit complexity 
of the algorithm (as well as the size of the output) is
$\poly(d)$. 
\end{lemma}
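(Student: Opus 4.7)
The plan is to apply Kummer theory in the simplest possible form: set $\L := \K[Y]/(Y^d-X)$, a so-called radical (Kummer) extension, and verify that this concrete construction gives all the required data in $\poly(d)$ time. The hypothesis that $d$ is coprime to $\fdchar(\F')$ and that $\F'$ contains a primitive $d$th root of unity $\zeta$ is exactly what makes Kummer theory applicable.

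The verification breaks into three easy steps. First, $Y^d-X$ is irreducible over $\K=\F'(X)$ by Eisenstein's criterion applied at the prime $X\in\F'[X]$ (together with Gauss's lemma to pass from $\F'[X][Y]$ to $\F'(X)[Y]$), so $\L$ is a field of degree $d$ over $\K$, and the classes $1,y,y^2,\ldots,y^{d-1}$ of $1,Y,\ldots,Y^{d-1}$ form a $\K$-basis. Second, since all $d$ roots of $Y^d-X$ are $\zeta^i y$ for $i=0,\ldots,d-1$ and lie already in $\L$, the extension $\L/\K$ is Galois, and the assignment $\sigma:y\mapsto \zeta y$ extends to a $\K$-automorphism of $\L$ of order exactly $d$; hence $\mathrm{Gal}(\L/\K)=\langle \sigma\rangle$ is cyclic of order $d$. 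Third, reading the data off the basis $1,y,\ldots,y^{d-1}$ is immediate: the structure constants are determined by $y^i\cdot y^j = y^{i+j}$ if $i+j<d$ and $y^i\cdot y^j = X\cdot y^{i+j-d}$ if $i+j\geq d$, so every $\gamma_{ijk}$ lies in $\{0,1,X\}\subset \F'[X]$; and the matrix of $\sigma$ is the diagonal matrix $\mathrm{diag}(1,\zeta,\zeta^2,\ldots,\zeta^{d-1})$, whose entries lie in $\F'$ and are independent of $X$.

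For the complexity estimates, writing down the $O(d^3)$ structure constants and the $d$ diagonal entries of the matrix of $\sigma$ takes $\poly(d)$ arithmetic operations in $\F'$, and the degrees in $X$ are bounded by $1$. For the specific field $\F'=\Q[\zeta]$, we represent $\F'$ as $\Q[Z]/(\Phi_d(Z))$, where $\Phi_d$ is the $d$th cyclotomic polynomial; both $\Phi_d$ and the powers $\zeta^0,\zeta^1,\ldots,\zeta^{d-1}$ (reduced modulo $\Phi_d$) have coefficients whose bit lengths are bounded by $\poly(d)$, and all of the above can be computed within that bound.

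There is essentially no serious obstacle here; the only point where one must be slightly careful is verifying the degree (irreducibility of $Y^d-X$), but Eisenstein at $X$ handles this uniformly, circumventing the Vahlen–Capelli subtleties that would arise for a more general radical $a\in\K^\times$. The simplicity of this construction is what lets us keep the structure constants and the matrix of $\sigma$ of minimal size, which will matter when this lemma is invoked downstream in the proof of Lemma~\ref{lem_reg_blowup}.
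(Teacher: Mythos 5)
Your proposal is correct and follows essentially the same approach as the paper's own proof: both construct the Kummer extension $\L = \K[Y]/(Y^d - X)$ (equivalently $\F'(Y_1)$ with $Y_1^d = X$), read off the structure constants from $y^i y^j = y^{i+j}$ or $X y^{i+j-d}$, and take $\sigma: y \mapsto \zeta y$ as the Galois generator with diagonal matrix $\mathrm{diag}(1,\zeta,\dots,\zeta^{d-1})$. Your version is somewhat more careful in spelling out irreducibility (Eisenstein at the prime $X$), the Galois/cyclic verification, and the explicit representation $\Q[\zeta] = \Q[Z]/(\Phi_d(Z))$ for the bit-complexity claim, all of which the paper asserts more tersely.
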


\begin{proof}
Put $\L=\F'(Y_1)$ where $X=Y_1^{d}$.
Then $1,Y_1,\ldots,Y_1^{d-1}$ are a $\F'(X)$-basis for
$\L$ with $Y_1^iY_1^j=Y_1^{i+j}$ if $i+j\leq d$ and
$XY_1^{i+j-d}$ otherwise. Further note that the linear extension $\sigma$ of the 
map sending $Y_1^j$ to $\zeta^jY_1^j$ is a $\K(X)$-automorphism of degree
$d$.
\end{proof}

\begin{remark}\label{remark:coprime}
The construction above is known in the literature as a Kummer extension.
When the characteristic is a prime $p$ and a divisor of $d$, say
$d=p^ed'$ where $d'$ is prime to $p$, the Kummer
extension $\K$ should be replaced by a cyclic extension
which is a product of a Kummer extension of degree $d'$ 
and a cyclic extension of degree $p^e$ described by 
Artin, Schreier and Witt \cite{Witt37}. Investigating
 the complexity of computing such extensions requires some 
further work. In 
\cite{IQSnote}, we conduct such a research and present an efficient construction 
of such extensions. The consequence on results in this paper will be reported in 
ibid..
\end{remark}

\subsection{Constructing cyclic division algebras}

The following statement connects cyclic field extensions with central division 
algebras. It follows from Wedderburn's theorem
characterizing cyclic division algebras
(see e.g. \cite[Theorem~(14.9)]{Lam}) as shown
on Page 221 of~\cite{Lam}.

\begin{fact}
\label{fact:cyclic_algebra}
Let $\L$ be a cyclic extension of degree $d$ of a field $\K$.
Let $\sigma$ be a generator of the Galois group, and $Z$ a number transcendental
over $\L$. 
For the transcendental extension $\L(Z)$ of $\L$, $\sigma$
extends to an automorphism (denoted again by $\sigma$)
of $\L(Z)$ such that the fixed field of $\sigma$ is $\K(Z)$.
Thus $\L(Z)$ is a cyclic extension of $\K(Z)$. Consider
the $\K(Z)$-algebra $D$ generated by (a basis for) $\L$ 
and by an element $U$ with relations $U^d=Z$ and
$Ua=a^\sigma U$
(for every $a \in \L(Z)$, or, equivalently for every $a$ from
a fixed $\K$-basis for $\L$).
Then $D$ is a central division algebra of index $d$ over
$\K(Z)$.
\end{fact}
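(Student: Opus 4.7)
My plan is to verify the claim in three steps: (i) $\sigma$ extends uniquely to an automorphism of $\L(Z)$ whose fixed field is $\K(Z)$, making $\L(Z)/\K(Z)$ cyclic of degree $d$; (ii) the algebra $D$ has $\K(Z)$-dimension $d^2$ with center $\K(Z)$; and (iii) $D$ has no zero divisors. The first two steps are bookkeeping around the cyclic-algebra construction; the real work sits in step (iii), which I would attack via Wedderburn's norm criterion combined with a valuation argument.

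For (i), I extend $\sigma$ to $\L(Z)$ by declaring $\sigma(Z)=Z$ and extending rationally; this is well-defined because $Z$ is transcendental over $\L$. Using $\L(Z)\cong \L\otimes_\K \K(Z)$ (the tensor product is already a field since $\L/\K$ is finite separable), one sees $[\L(Z):\K(Z)] = d$. The fixed field of the extended $\sigma$ contains $\K(Z)$; since $\langle\sigma\rangle$ has order $d$, Artin's theorem forces equality.

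For (ii), I realize $D$ as the free left $\L(Z)$-module with basis $1, U, U^2,\ldots, U^{d-1}$, multiplication dictated by $U^d=Z$ and $Ua = a^\sigma U$. The relations are consistent (applying the twist $d$ times matches $U^d=Z$ commuting with $\L(Z)$), so $\dim_{\K(Z)}D = d^2$. Centrality is a direct check: writing $x = \sum a_i U^i$ and demanding $xb=bx$ for every $b\in\L(Z)$ forces $a_i(b-b^{\sigma^i})=0$, which kills $a_i$ for $i\neq 0$ once $b$ is chosen outside the fixed field of $\sigma^i$; demanding $a_0$ commute with $U$ then yields $a_0\in\L(Z)^\sigma = \K(Z)$.

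The main obstacle is (iii). By Wedderburn's classical criterion for cyclic algebras (\cite[Theorem~(14.9)]{Lam}), $D$ is a division algebra if and only if $Z^i$ is not in the image of the norm map $N_{\L(Z)/\K(Z)}\colon \L(Z)^*\to \K(Z)^*$ for $i=1,\ldots,d-1$. To rule this out, I extend the $Z$-adic valuation $v$ on $\K(Z)$ to $\L(Z)$. Since $\L/\K$ is finite separable and $\L[Z]_{(Z)}\supset \K[Z]_{(Z)}$ is unramified with residue extension $\L/\K$, there is a unique $\Z$-valued extension of $v$ to $\L(Z)$. Because $\sigma$ fixes $Z$, it preserves this valuation, so for every $a\in \L(Z)^*$,
\[
v\bigl(N_{\L(Z)/\K(Z)}(a)\bigr) \;=\; \sum_{i=0}^{d-1} v(\sigma^i(a)) \;=\; d\cdot v(a)\;\in\; d\Z.
\]
But $v(Z^i)=i\notin d\Z$ for $1\le i\le d-1$, so no such $Z^i$ is a norm. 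Hence $D$ is a central division algebra of index $d$ over $\K(Z)$, as claimed.
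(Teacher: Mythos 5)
Your proof is correct. The paper itself offers no proof of this Fact; it merely cites Wedderburn's theorem and the example on p.~221 of Lam, so what you have done is to reconstruct the argument that the paper delegates to its reference. Your route---identify the Brauer class of the cyclic algebra $(\L(Z)/\K(Z),\sigma,Z)$ with the class of $Z$ in $\K(Z)^*/N_{\L(Z)/\K(Z)}(\L(Z)^*)$, then use the $Z$-adic valuation to show $Z$ has order $d$ there---is the standard argument, and almost certainly the one Lam gives. Steps (i) and (ii) are routine and stated correctly. In step (iii) the valuation computation is sound: the $Z$-adic place on $\K(Z)$ is unramified in $\L(Z)$ (the residue extension is the separable extension $\L/\K$, and indeed $\L\otimes_{\K}\K((Z))\cong\L((Z))$ is a field, so the local degree equals $d$ and the extension of $v$ is unique), $\sigma$ preserves $v$ since it fixes $Z$ and permutes the $\K$-conjugates in the residue field, and therefore $v\circ N \in d\Z$, which excludes $Z^1,\dots,Z^{d-1}$ from the norm group.

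One minor imprecision worth flagging: you state Wedderburn's criterion as ``$D$ is a division algebra \emph{if and only if} $Z^i$ is not a norm for $i=1,\dots,d-1$.'' The ``if'' direction is what is used and is correct (it follows from the fact that the exponent of the cyclic algebra equals the order of $[Z]$ in $\K(Z)^*/N(\L(Z)^*)$, together with the chain exponent $\mid$ index $\mid$ degree). The ``only if'' direction is \emph{not} true for cyclic algebras in general (exponent can be strictly less than index), so the biconditional as you phrased it is technically an overstatement. Since you only invoke the sufficient direction, this does not affect the validity of your argument, but it would be cleaner to state it as a one-sided implication, or to argue directly via exponent $=$ order of $[Z]$ as above.
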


The following proposition is an algorithmic realization of 
Fact~\ref{fact:cyclic_algebra}. 

\begin{proposition}\label{prop:cyclic_algebra}
Let $\L$ be a cyclic extension of degree $d$ of a field $\K$, and suppose that 
$\L$ is given by structure constants with respect to a $\K$-basis $A_1, \ldots, 
A_d$. 
Similarly, a generator $\sigma$ for the Galois group is assumed to be given
by its matrix in terms of the same basis.
Let 
$Y$ be 
a formal variable.
Then one can construct a $\K(Y)$-basis
$\Gamma$ of $M(d,\K(Y))$  such that
the $\K(Y^d)$-linear span of $\Gamma$
is a central division algebra over $\K(Y^d)$
of index $d$, using $\poly(d)$ arithmetic operations in $\K$.
\end{proposition}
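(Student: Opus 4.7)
The plan is to realise the cyclic algebra $D$ of Fact~\ref{fact:cyclic_algebra} directly inside $M(d,\K(Y))$ using only the data already provided. Let $L\colon \L \to M(d,\K)$ be the left regular representation of $\L$ with respect to the basis $A_1,\ldots,A_d$; its matrix entries can be read off from the given structure constants in $\poly(d)$ arithmetic operations. Let $S \in M(d,\K)$ be the given matrix of the Galois generator $\sigma$, form its powers $S,S^2,\ldots,S^{d-1}$, and put $U := Y\cdot S \in M(d,\K(Y))$. The proposed basis is
\[
\Gamma \;:=\; \{\,L(A_i)\,U^j \;:\; i\in[d],\; 0\le j\le d-1\,\},
\]
which consists of $d^2$ matrices whose entries are scalar multiples of the single monomials $Y^j$ over $\K$; the whole construction therefore uses $\poly(d)$ operations in $\K$.

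Next I would verify that $L$ together with $U$ satisfies the defining relations of the cyclic algebra from Fact~\ref{fact:cyclic_algebra} with $Z := Y^d$. Because $S$ implements $\sigma$ on $\L$, a one-line computation gives $S\,L(a)\,S^{-1} = L(a^\sigma)$ for every $a\in\L$; multiplying by $Y$ yields $U\,L(a) = L(a^\sigma)\,U$. Since $\sigma$ has order $d$, $U^d = Y^d\cdot S^d = Y^d\cdot I$. Hence the obvious $\K(Y^d)$-algebra map from the abstract cyclic algebra $D$ of Fact~\ref{fact:cyclic_algebra} into $M(d,\K(Y))$, sending $a\mapsto L(a)$ and the abstract $U$ to our $U$, is well defined; $D$ being simple by Fact~\ref{fact:cyclic_algebra} forces the map to be injective, so the $\K(Y^d)$-linear span of $\Gamma$ is an isomorphic copy of the central division algebra $D$ of index $d$ over $\K(Y^d)$.

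The main obstacle is showing that $\Gamma$ is a $\K(Y)$-basis of the full matrix algebra $M(d,\K(Y))$. Since $|\Gamma| = d^2 = \dim_{\K(Y)} M(d,\K(Y))$, it suffices to prove $\K(Y)$-linear independence, and dividing out by the units $Y^j$ this reduces to showing that the $d^2$ matrices $L(A_i)\,S^j$, which actually lie in $M(d,\K)$, are linearly independent over $\K$. Here I would invoke the Dedekind/Artin theorem on linear independence of characters: any relation $\sum_{i,j} c_{ij}\,L(A_i)\,S^j = 0$ with $c_{ij}\in\K$ becomes, after setting $b_j := \sum_i c_{ij} A_i \in \L$, the identity $\sum_{j=0}^{d-1} b_j\,\sigma^j = 0$ of $\K$-linear endomorphisms of $\L$; since $\sigma^0,\ldots,\sigma^{d-1}$ are distinct automorphisms of $\L$, they are linearly independent over $\L$, forcing every $b_j = 0$ and hence every $c_{ij}=0$. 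Once this independence step is in hand the rest is mechanical, and the complexity bound follows immediately from the explicit construction of $\Gamma$.
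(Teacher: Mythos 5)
Your proof is correct, and the construction you arrive at is in fact exactly the one the paper produces (the matrix of the action of $A_k$ there has entries $\gamma_{kij}$, which is $L(A_k)$, and the matrix of $U^\ell$ has entries $Y^\ell\delta_{\ell ij}$, which is $(YS)^\ell$). But you get there by a genuinely different argument. The paper works from the abstract cyclic algebra $D$ of Fact~\ref{fact:cyclic_algebra}: it observes that $\K(Y)\otimes_{\K(Z)}D\cong M(d,\K(Y))$ (via a splitting-field theorem, Lam Thm.~14.7), introduces the element $U_0=U/Y$ with $U_0^d=1$ and the ``idempotent-like'' element $E=\sum_j U_0^j$, identifies $A_1E,\ldots,A_dE$ as a basis of a minimal left ideal $D''E$, and reads off the matrices from the left-regular action of $D''$ on this ideal. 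You bypass the ideal construction entirely: you directly propose the matrices $L(A_i)(YS)^j$, verify the cyclic-algebra relations $U^d=Y^dI$ and $UL(a)=L(a^\sigma)U$ by the intertwining identity $SL(a)S^{-1}=L(a^\sigma)$, use simplicity of $D$ to get injectivity, and then invoke Dedekind--Artin independence of characters to prove that the $d^2$ matrices span $M(d,\K(Y))$. Your route is more elementary and self-contained (no appeal to splitting fields or the structure of left ideals in matrix algebras), while the paper's route makes the isomorphism $D''\cong M(d,\K(Y))$ conceptually transparent and fits naturally into the surrounding development around Fact~\ref{fact:cyclic_algebra}. One small presentational nit: the step ``dividing out by the units $Y^j$'' is doing two things at once (removing the units \emph{and} passing from $\K(Y)$-independence of $\K$-matrices to $\K$-independence); it would be cleaner to separate these, but the reduction is sound.
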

\begin{proof}
Let $Z=Y^d$. Let $D$ 
be a central division algebra over $\K(Z)$ 
as in Fact~\ref{fact:cyclic_algebra}. 
The existence of a $\K(Z)$-subalgebra 
$D'$ of $M(d,\K(Y))$
isomorphic to $D$ follows, e.g., from 
Theorem~(14.7) of~\cite{Lam}. 
To construct a basis $\Gamma$ for such a
matrix algebra $D'$ efficiently, note that $A_iU^j$, $i, j=1, \ldots, d$, form a 
$\K(Z)$-basis of $D$.
This is also a $\K(Y)$-basis
for the algebra $D''=\K(Y)\otimes_{\K(Z)} D$. Consider also
the element $U_0=\frac{1}{Y} \otimes U$. 
Then $U_0^d=1$.
As the elements
$U_0^j$ are linearly independent 
over $\K(Y)$ and hence over $\K(Z)$ as well, 
we have that $E=U_0+U_0^2+...+U_0^{d-1}+U_0^d$ 
is nonzero. As $U^jE=Y^jE$, we have $A_iE$ ($i=1,\ldots,d$) 
form a $\K(Y)$-basis for the left ideal $D''E$ of dimension $d$. 
Now the action of $D''$ on this left ideal gives
a matrix representation for $D''$. 
Let $\gamma_{kij}$ be the structure constants
for the multiplication of $\L$ (and of $\L(Z)$):
$$A_kA_i=\sum_{j=1}^d\gamma_{kij}A_j\;\;(k,i=1,\ldots,d).$$
Also, let $\delta_{\ell ij}$ be the entries of the matrix
of the $\ell $th power of the generator $\sigma$ of the Galois group:
$$A_i^{\sigma^\ell}=\sum_{j=1}^d\delta_{\ell ij}A_j\;\;(\ell,i=1,\ldots,d).$$
(Notice that the matrix $(\delta_{\ell ij})_{ij}$ is
the $\ell$th power of $(\delta_{1 ij})_{ij}$, whence the degrees 
of its elements are 
also bounded by $\poly(d)$.)
Then 
$$A_kA_iE=\sum_{j=1}^d \gamma_{kij}A_jE$$
and 
$$U^\ell A_iE=A_i^{\sigma^\ell} U^\ell E=A_i^{\sigma^\ell} Y^\ell E=
Y^\ell \sum_{j=1}^d\delta_{\ell ij}A_jE.$$
Thus the matrix of the action of $A_k$ has entries 
$\gamma_{k ij}$ and the matrix of the action of $U^\ell$ has
entries $Y^\ell\delta_{\ell ij}$. Then the action
of $U^\ell A^k$ can be obtained as the product of these two
matrices. Let $\Gamma$ consist of all such $d^2$ products, and the proof is 
concluded.
\end{proof}

Combining Lemma~\ref{lem:cyclic_gen} and Proposition~\ref{prop:cyclic_algebra}, we 
immediately obtain the following. 

\begin{lemma}
\label{lem:cyclic_splitting}
Let 
$d$, $X$, $\F'$, $\K=\F'(X)$ and $\L$ be as in 
Lemma~\ref{lem:cyclic_gen}. In particular, if $\fdchar(\K)=p>0$ then $p\nmid d$. 
Then one can construct a $\F'(X,Y)$-basis
$\Gamma$ of $M(d,\F'(X,Y))$  such that
the $\F'(X,Y^d)$-linear span of $\Gamma$
is a central division algebra over $\F'(X,Y^d)$
of index $d$, using $\poly(d)$ arithmetic operations in $\F'$.
In particular, for any $A\in \Gamma$, the 
entries of $A$ are polynomials in $\F'[X, Y]$ of degree $\poly(d)$.
Furthermore for $\F'=\Q[\sqrt[d]{1}]$, the bit complexity 
of the algorithm (as well as the size of the output) is
also $\poly(d)$. 
\end{lemma}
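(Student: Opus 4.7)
The plan is to simply chain the two preceding results together and then carefully track the size of the intermediate data. First I would invoke Lemma~\ref{lem:cyclic_gen} with the given $d$, $\F'$, and $X$ to obtain the cyclic extension $\L/\K$, i.e.\ a $\K$-basis $A_1,\ldots,A_d$ of $\L$ with structure constants $\gamma_{kij}\in\F'[X]$ of degree $\poly(d)$, together with the matrix of a generator $\sigma$ of $\mathrm{Gal}(\L/\K)$ in the same basis, whose entries are also polynomials in $\F'[X]$ of degree $\poly(d)$. The hypothesis that $p\nmid d$ when $\fdchar(\K)=p>0$ is exactly what Lemma~\ref{lem:cyclic_gen} needs.

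Next I would feed this data into Proposition~\ref{prop:cyclic_algebra} with the formal variable $Y$. That proposition produces a $\K(Y)=\F'(X,Y)$-basis $\Gamma$ of $M(d,\F'(X,Y))$ consisting of $d^2$ matrices of the form $U^\ell A_k$ ($\ell,k=1,\ldots,d$), whose $\K(Y^d)=\F'(X,Y^d)$-linear span is a central division algebra of index $d$ over $\F'(X,Y^d)$. Thus the structural claim of the lemma is immediate from Proposition~\ref{prop:cyclic_algebra}, and the number of arithmetic operations in $\F'$ is $\poly(d)$ because both Lemma~\ref{lem:cyclic_gen} and Proposition~\ref{prop:cyclic_algebra} have this complexity.

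For the degree bound on entries, I would follow the explicit formulas in the proof of Proposition~\ref{prop:cyclic_algebra}: the matrix of $A_k$ on the left ideal $D''E$ has entries $\gamma_{kij}\in\F'[X]$ of degree $\poly(d)$, and the matrix of $U^\ell$ has entries $Y^\ell\delta_{\ell ij}$, where $\delta_{\ell ij}$ are the entries of $\sigma^\ell$ (a product of $\ell\leq d$ copies of the matrix of $\sigma$, hence polynomials in $\F'[X]$ of degree $\poly(d)$). Multiplying two such matrices only sums degrees, so every entry of every element of $\Gamma$ is a polynomial in $\F'[X,Y]$ of degree $\poly(d)$.

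Finally, for the bit complexity when $\F'=\Q[\sqrt[d]{1}]$, Lemma~\ref{lem:cyclic_gen} already guarantees that the input structure constants and Galois matrix have coefficients of bit size $\poly(d)$, and the only further arithmetic performed (forming products $U^\ell A_k$ as matrices of linear size in $d$ with polynomial entries of degree $\poly(d)$) blows up bit sizes by only a $\poly(d)$ factor. The main thing requiring care is this last bookkeeping step: checking that no unexpected blow-up occurs when passing from structure constants to explicit matrix products, but since all operations are polynomial-degree multiplications and additions of polynomials in two variables over $\F'$, this is routine.
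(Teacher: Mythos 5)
Your proposal is correct and is exactly the paper's approach: the paper itself simply says ``Combining Lemma~\ref{lem:cyclic_gen} and Proposition~\ref{prop:cyclic_algebra}, we immediately obtain the following,'' and leaves the degree and bit-complexity bookkeeping implicit, which you have carried out correctly.
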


\subsection{Some algorithmic issues for actual applications} To put the above 
construction in action, we need to handle a few algorithmic problems as follows.

\subsubsection{Algorithmic issues when working with field extensions} 
\label{subsubsec:zeta}

Lemma~\ref{lem:cyclic_gen} assumes the field $\F'$ contains a 
known primitive $d$th root of unity 
$\zeta$, where if $\fdchar(\F')=p>0$ then $p\nmid d$. In actual applications, we 
may 
start with a field $\F$ without a primitive 
$d$th root of unity in it, and attach one symbolically,
which we still denote by 
$\zeta$. However, this 
may cause some problem. Namely, constructing 
$\F'=\F[\zeta]$ would require factoring 
the polynomial $x^{d}-1$ over $\F$, a task which cannot be 
accomplished using basic arithmetic operations. To see that
this is indeed an issue notice that
 a black-box field may contain certain ``hidden'' parts
of cyclotomic fields. 
Of course,
over certain concrete fields, such as the rationals,
number fields or finite
fields of small characteristics, this can be done in polynomial time.
However, even over finite fields of large characteristic no
deterministic polynomial time solution to this task is known
at present.

To get around this issue, one can
perform the required computations over an appropriate factor algebra $R$ of
the algebra $C=\F[x]/(x^{d}-1)$ in place $\F'$ as if $R$ were a field. 
To be specific, as $d$ is not divisible by the characteristic,
we know that $C$ is semisimple -- actually it is isomorphic to
a direct sum of ideals, each of which is isomorphic
to the splitting field $\F[\sqrt[e]{1}]$ of the polynomial $x^{e}-1$ for
some divisor $e$ of $d$, and the projection of $x$ to such an ideal
is a primitive $e$th root of unity. It follows that if we compute the ideal $J$ generated by 
annihilators of $x^e -1$, for all $e$ a proper divisor of $d$, then $R=C/J$ is isomorphic to the direct
sum of copies of the splitting field $\F'$ of $x^{d}-1$, and the projection
of $x$ to each component is a primitive $d$th root of unity. And this
property is inherited by any proper factor of $R$. A
computation using $R$ instead of $\F'$ may fail
only at a point where we attempt to invert an non-invertible
element of $R$. However, such an element must be a zero divisor.
When this situation occurs, we replace $R$
with the factor of $R$ by its ideal generated by the zero divisor and
restart the computation. Such a restart can clearly happen at most $d-2$ times.

We explain what the above scheme entails in our actual tasks. 

As the methods 
for Proposition~\ref{prop:cyclic_algebra} and Lemma~\ref{lem:cyclic_splitting} 
do not require division, the zero divisor issue does not occur there. Replacing 
$\F'$ with $R$, the outcome
$\Gamma$ of Lemma~\ref{lem:cyclic_splitting} will actually be a free
 $R(X,Y)$-basis for an algebra which
is a direct sum of isomorphic copies of a division algebra,
embedded into $M(d,R(X,Y))$. 

Now consider the task of computing the rank of $M(N, \F')$. Note that we cannot 
talk about the ``rank'' of matrices in
$M(N, R)$ which is not well-defined. But since $R$ is a direct sum of $\F'$, the 
decomposition of $R$ induces a decomposition of $M(N, R)$ into a direct sum of 
copies of $M(N, \F')$. We call the images of the projections of a matrix
$B\in M(N, R)$ to the direct summands 
the {\em components} of $B$. The following lemma describes how to compute the 
maximum rank over the components. 
\begin{lemma}
\label{lem:correct_rank}
Let $R$ and $\F'$ be as above, and suppose we are given a matrix $B\in M(N, R)$. 
Then there exists a deterministic polynomial-time algorithm that computes the 
maximum rank over the components of $B$. 
\end{lemma}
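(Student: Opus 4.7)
The plan is to simulate Gaussian elimination of $B$ over $R$, branching whenever we would need to invert a zero divisor. Since $R$ is a direct sum of copies of the field $\F'$, every element is a unit, a zero divisor, or zero; and every zero divisor $r$ affords a natural splitting of the ambient ring into two proper factors on which $r$ becomes, respectively, a unit and zero.

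More precisely, I maintain a collection of pairs $(R', B')$ where $R'$ is a quotient ring of $R$ (hence, by the hypothesis that the direct-sum structure is inherited by proper factors, itself a direct sum of copies of $\F'$) and $B' \in M(N, R')$ is the image of $B$. Starting from the single pair $(R, B)$, I run a standard Gaussian elimination on each pair. At each step I look for a nonzero entry in the remaining submatrix: if all such entries are zero, I halt and record the current accumulated rank; if I find an entry that is a unit in $R'$, I use it as a pivot and clear the rest of its column in the usual way (note that the multipliers used to clear may themselves be zero divisors, but this causes no difficulty since we only \emph{divide} by the pivot); and if the only nonzero candidates are zero divisors, I pick one such $r \in R'$, compute the annihilator ideal $\mathrm{ann}(r)$ by linear algebra over $\F$, and replace $(R', B')$ with the two subproblems obtained by projecting $B'$ to $R'/\mathrm{ann}(r)$ and to $R'/(r)$. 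Since $R'$ is semisimple commutative, $\mathrm{ann}(r) + (r) = R'$ and $\mathrm{ann}(r) \cap (r) = 0$, so these two quotients partition the simple components of $R'$: in the first, $r$ becomes a unit; in the second, $r$ becomes zero.

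For correctness, I observe that if Gaussian elimination halts at a leaf $(R', B')$ without further branching, then on every component of $R'$ the matrix $B'$ must have the same rank. Indeed, if ranks differed, then at the first elimination step where some component's remaining submatrix still has a nonzero entry while another's is already all zero, any nonzero candidate pivot would be nonzero on one component and zero on another, hence a zero divisor, which would have forced a branch. Thus the rank recorded at each leaf is exactly the common rank on the components assigned to that leaf, and since by iterated application of the Chinese Remainder Theorem the leaves partition the simple components of the original $R$, the maximum of the recorded ranks is precisely the maximum rank of $B$ over all components.

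For the complexity, each split strictly decreases the number of simple summands of the working ring, so the branching tree has at most as many leaves as $R$ has components, and in particular at most $d$. Each branch performs $O(N^3)$ arithmetic operations in a quotient of $R$; the needed subroutines (testing equality, checking unit or zero-divisor status, computing annihilators, and inverting units) are standard polynomial-time operations in a finite-dimensional commutative $\F$-algebra presented by structure constants, and over $\Q$ the bit-size blowup is controlled because all working rings are quotients of the fixed algebra $R$, whose own presentation has polynomial size. The main subtlety, namely that every rank disagreement between components is actually caught by the branching, is precisely the content of the correctness argument, and I expect it to be the only genuinely non-routine step in the proof.
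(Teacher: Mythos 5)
Your proof is correct, and it takes a genuinely different route from the paper's. The paper proves the lemma by applying a \emph{division-free} characteristic-polynomial algorithm (Kaltofen) to the symmetrized, scaled matrix $B'=\bigl(\begin{smallmatrix}0 & B\\ \trans{B}&0\end{smallmatrix}\bigr)$ and reading off the rank via Mulmuley's trick (the largest $M$ with $x^M\mid\det(xI-DB')$). Because no divisions are performed, the computation over $R$ is automatically the direct sum of the computations over the components, so the answer decouples componentwise for free, and the maximum rank is read off from the minimum valuation; bit sizes over $\Q$ are controlled by the division-free nature of Kaltofen's algorithm. You instead run Gaussian elimination over $R'$ and branch on zero divisors, keeping \emph{both} quotients $R'/\mathrm{ann}(r)$ and $R'/(r)$, which together partition the simple summands and on which $r$ becomes a unit, respectively zero.

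Two comments. First, note that your branching scheme is a nontrivial sharpening of the ``restart'' idea the paper sketches informally in Section~4.4.1 and the remark following the lemma: the paper's restart throws away the components on which the zero divisor is a unit, which suffices for the paper's broader purpose (correctly simulating one component of $\F'$) but does \emph{not} by itself compute the \emph{maximum} rank across components — your two-sided split repairs exactly this. Second, your complexity argument has a small gap over $\Q$: ordinary Gaussian elimination with rational pivot inverses can blow up intermediate bit lengths, and ``all working rings are quotients of a fixed small algebra'' controls the ring presentation but not the matrix entries produced along the way. You should replace plain elimination with a fraction-free variant (Bareiss) or clear denominators at each step; with that routine fix the argument is complete. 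Overall, your approach is more elementary but needs this extra care, whereas the paper's approach sidesteps divisions (and hence both the branching and the bit-size issue) at the cost of invoking two nontrivial algorithmic ingredients.
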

\begin{proof}
This can be achieved by combining division-free algorithms for computing the 
determinant by e.g. Kaltofen \cite{Kal92} (see also \cite{maha-vinay} for more 
such algorithms), and the parallel algorithm for computing the rank of a matrix by 
Mulmuley 
\cite{Mul87}. 

We include a sketch here for completeness. To start with, instead of 
$B$ we consider the symmetric matrix $B'=\begin{pmatrix} 
0 & B \\ \trans{B} & 0 \\ \end{pmatrix}$. Then let $x$ and $y$ be two formal 
variables. 
Form a matrix $D=\mathrm{diag}(1, y, y^2, \dots, y^{2N-1})$, and compute 
$\det(xI-DB')$ using \cite{Kal92}, 
considered as a polynomial in $\F'(y)[x]$. Let $M$ be the maximum integer such 
that 
$x^M$ divides $\det(xI-DB')$, and return $(2N-M)/2$. 

By \cite{Mul87}, the above 
procedure on a matrix from $M(N, \F')$ returns its rank. Now for $B\in M(N, R)$, 
since it is (implicitly) a direct sum of several copies of $M(N, \F')$, the above 
algorithm on 
$B$ can be viewed as working with these components ``in parallel'', and the 
resulting $\det(xI-DB')$ is a direct sum of $\det(xI-DB_i')$ where $B_i'$ are the 
components of $B'$. It is then not hard to deduce that the above procedure 
computes the maximum rank over the components of $B$. 
\end{proof}
\begin{remark}
Using the method of Lemma~\ref{lem:correct_rank} for rank computations,
we will obtain an algorithm that does not require division in $R$ at all, 
and hence we will not need the above mentioned restarts.  
Another possibility would be doing Gaussian elimination and 
restarting the computation once a 
zero divisor is met as described. If no zero divisors are met and the rank is $r$, 
then it means that the columns of $B$ generated a free module over $R$ of rank 
$r$, so each component is also of rank $r$ over $\F'$. 
\end{remark}

Finally, we note that a similar issue, namely that a black box field may even 
contain infinite algebraic extensions of its subfields has been circumvented
by using the transcendental extension $\K=\F'(X)$
in the construction of cyclic extensions (Lemma~\ref{lem:cyclic_gen}).

\subsubsection{Computing the rank of matrices over a rational function field in 
few variables} 

Note that the matrices from Lemma~\ref{lem:cyclic_splitting} are matrices over a 
rational function field. Therefore we will need to compute the rank of matrices in 
such form. 

\begin{proposition}\label{prop:rank}
Let $\F'$ be a field and $\K=\F'(X_1, X_2, \dots, X_k)$ be a pure transcendental 
extension of $\F'$. Let $A$ be an $N \times N$ matrix with entries as quotients of 
polynomials from $\F'[X_1, X_2, \dots, X_k]$, where the polynomials are explicitly 
given as sums of monomials. Assume that the degrees of the 
polynomials appearing in $A$ are upper bounded by $D$. If 
$|\F'|=(ND)^{\Omega(k)}$, then we can find in 
time $(ND)^{O(k)}$ a matrix $B\in M(N, \F')$ with $\rk(B)=\rk(A)$.
\end{proposition}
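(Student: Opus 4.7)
The plan is to reduce the problem to finding a single substitution point in $\F'^k$ at which the specialized matrix has the same rank as $A$, and to locate such a point by exhaustive search over a small grid of size $(ND)^{O(k)}$.

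First I would clear denominators: replacing each row $i$ of $A$ by $(\prod_{j} q_{ij}) \cdot (\text{row } i)$ produces a polynomial matrix $\tilde{A}$ over $\F'[X_1,\ldots,X_k]$ with the same rank as $A$ over $\K$, and with every entry of total degree at most $ND$. Any $r\times r$ minor of $\tilde{A}$ is then a polynomial of total degree at most $r\cdot ND\leq N^2 D$. In particular, writing $r=\rk(A)$, the fact that the rank equals $r$ means some $r\times r$ minor $\mu$ is a nonzero polynomial of degree at most $N^2 D$, while every $(r+1)\times(r+1)$ minor is identically zero.

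Next I would fix a set $S\subseteq \F'$ with $|S|>N^2 D$; this is possible by the hypothesis $|\F'|=(ND)^{\Omega(k)}$. For any $a\in S^k$ the specialization $\tilde{A}(a)\in M(N,\F')$ has rank at most $r$, since all $(r+1)\times(r+1)$ minors of $\tilde A$ vanish as polynomials and hence at $a$. On the other hand, by the combinatorial Nullstellensatz applied to the nonzero polynomial $\mu$ of degree $\leq N^2 D<|S|$, there exists at least one point $a^*\in S^k$ with $\mu(a^*)\neq 0$; at this point $\rk(\tilde A(a^*))\geq r$ and therefore equals $r$.

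The algorithm is then the obvious one: enumerate all $a\in S^k$, for each compute $\rk(\tilde{A}(a))$ by ordinary Gaussian elimination over $\F'$, and output the matrix $B=\tilde{A}(a^*)$ achieving the maximum rank. Correctness follows from the two-sided bound above, and the running time is $|S|^k\cdot\poly(N)=(ND)^{O(k)}$. There is no real obstacle; the only point requiring a little care is the degree bookkeeping after clearing denominators, which determines the required size of the grid $S^k$ and thus the dependence of the running time on $k$. (One might prefer to avoid the restriction that all denominators be nonzero at $a^*$ by incorporating the product $\prod_{i,j} q_{ij}$ as an additional factor to be avoided, but the reduction to $\tilde{A}$ makes this unnecessary.)
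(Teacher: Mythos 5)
Your proof is correct and takes essentially the same approach as the paper: clear denominators to get a polynomial matrix, bound the degree of any minor by a polynomial in $N$ and $D$, and brute-force the rank over a grid $S^k$, which suffices since a nonzero minor cannot vanish on all of $S^k$ once $|S|$ exceeds its degree. The only cosmetic differences are that you clear denominators row-by-row (yielding a tighter degree bound of $ND$ per entry) whereas the paper multiplies through by a single global common multiple, and that the grid-nonvanishing fact you use is more standardly attributed to the Schwartz--Zippel/DeMillo--Lipton lemma rather than the combinatorial Nullstellensatz proper (which imposes a condition on a leading monomial), though the conclusion you draw is of course true.
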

In particular, if $k$ is a constant -- $k=2$ as used in Lemma~\ref{lem:red_data} 
for the procedure in Lemma~\ref{lem_reg_blowup} 
-- then the above procedure runs in polynomial time. 

\begin{proof}
We multiply the entries of $A$ by an easily computable common multiple (e.g., the 
product) of their denominators to 
obtain a matrix with polynomial entries from $\F'[X_1, X_2, \dots, X_k]$. The data 
describing
this matrix has size polynomial in the size of the input data. In particular, the 
degree 
of the determinant of any sub-matrix is upper bounded by a polynomial $s$ in 
$(ND)^{k}$. We have assumed 
that $|\F'|=(ND)^{\Omega(k)}$. Then from $(ND)^{O(k)}$ 
specializations by 
elements of a subset
of size $s+1$ of $\F'$, at least one gives a matrix with entries
$\F'$ having the same rank as the original matrix. Thus the rank of
$A$ can be computed by computing the rank of $(ND)^{O(k)}$ matrices
over $\F'$. 
\end{proof}
Note that if we simulate $\F'$ using $R$ as described in 
Section~\ref{subsubsec:zeta} then we shall apply the procedure in 
Lemma~\ref{lem:correct_rank} after specializing the variables as in 
Proposition~\ref{prop:rank}.

\section{Finding a nonsingular matrix in blow-ups}\label{sec:blowup}

In this section we describe, given $\cB\leq M(n, \F)$, how to compute a 
nonsingular matrix in $\blowup{\cB}{d}=M(d, \F)\otimes \cB$ for some $d\leq 
(n+1)!$, or certify that this is not possible. 

One important note is due here: as per our notation, matrices in $M(d, \F)\otimes 
M(n, \F)$ are viewed 
as block matrices, where each block is of size $n\times n$. This is more 
convenient when describing semi-invariants. In this section, it will be more 
convenient to work with $M(n, \F)\otimes M(d, \F)$, namely each block is of size 
$d\times d$. This is 
consistent with other parts simply because $M(d, \F)\otimes M(n, \F)\cong M(n, 
\F)\otimes M(d, \F)$. Therefore, we identify $M(nd, \F)\cong M(n, \F)\otimes M(d, 
\F)$, and fix such a decomposition. Recall the notation $\cB(U)$, and 
$\rblowup{\cB}{d}$ from 
Section~\ref{sec:notation}.

After some 
preparation, we prove the main technical lemma called the regularity lemma for  
blow-ups. Then we prove Theorem~\ref{thm-blup-main} (the formal version of 
Theorem~\ref{thm:main_informal}), and Corollaries~\ref{cor:blup_edmonds} 
and~\ref{cor:blup_degree} follow easily.

\subsection{Preparations}
\label{subsec:prep}

\paragraph{A characterization of blow-ups.}
\begin{proposition}
For $\cA\leq M(dn, \F)$, $\cA=\rblowup{\cB}{d}$ for some $\cB \leq M(n, \F)$
if and only if $(I\otimes M(d, \F))\cA(I\otimes M(d, \F))=\cA$. 
\end{proposition}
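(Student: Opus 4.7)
This is a statement that says the blow-ups $\rblowup{\cB}{d}$ are exactly the subspaces of $M(n,\F)\otimes M(d,\F)$ that are invariant under left and right multiplication by $I\otimes M(d,\F)$. My plan is to verify the two directions separately, using matrix units in $M(d,\F)$ as the main tool.

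\textbf{Forward direction.} Assume $\cA = \cB\otimes M(d,\F)$. A spanning element of $(I\otimes M(d,\F))\,\cA\,(I\otimes M(d,\F))$ has the form $(I\otimes X)(B\otimes Y)(I\otimes Z) = B\otimes XYZ$ with $B\in\cB$ and $X,Y,Z\in M(d,\F)$. Setting $X=Z=I$ and letting $Y$ range over $M(d,\F)$ already recovers all of $\cB\otimes M(d,\F)=\cA$, while every such product lies in $\cA$. This direction is essentially automatic.

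\textbf{Reverse direction.} Assume $(I\otimes M(d,\F))\,\cA\,(I\otimes M(d,\F))=\cA$. Fix the standard matrix unit basis $\{E_{ij}\}$ of $M(d,\F)$, and for any $A\in\cA$ write uniquely $A=\sum_{k,l} B_{kl}\otimes E_{kl}$ with $B_{kl}\in M(n,\F)$. Define $\cB\leq M(n,\F)$ as the linear span of all the coefficients $B_{kl}$ arising from all elements of $\cA$. By construction $\cA\subseteq \cB\otimes M(d,\F)$.

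For the reverse containment, the key is the identity
\[
(I\otimes E_{ik})(B_{k'l'}\otimes E_{k'l'})(I\otimes E_{lj}) \;=\; B_{k'l'}\otimes (E_{ik}E_{k'l'}E_{lj}) \;=\; \delta_{kk'}\delta_{ll'}\, B_{kl}\otimes E_{ij}.
\]
Applying this to $A=\sum B_{kl}\otimes E_{kl}$, the invariance hypothesis yields $B_{kl}\otimes E_{ij}\in\cA$ for every $i,j$ and every coefficient $B_{kl}$. Taking linear combinations over all such $B_{kl}$ gives $\cB\otimes E_{ij}\subseteq \cA$, and letting $i,j$ vary yields $\cB\otimes M(d,\F)\subseteq\cA$, completing the proof.

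\textbf{Anticipated obstacle.} There is no real obstacle: the whole argument hinges on the observation that $E_{ik}E_{kl}E_{lj}=E_{ij}$ while $E_{ik}E_{k'l'}E_{lj}$ vanishes unless $(k',l')=(k,l)$, which lets one isolate each tensor coefficient $B_{kl}$ individually. The only thing to be careful about is fixing the identification $M(nd,\F)\cong M(n,\F)\otimes M(d,\F)$ (matching the convention $\rblowup{\cB}{d}=\cB\otimes M(d,\F)$ declared at the start of this section) so that ``$I\otimes M(d,\F)$'' indeed refers to the $M(d,\F)$-factor on which we wish to act.
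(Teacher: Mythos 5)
Your proof is correct and follows essentially the same matrix-unit argument as the paper; the only cosmetic difference is that the paper uses the summed identity $\sum_{k}E_{ki}E_{i'j'}E_{jk}=\delta_{ii'}\delta_{jj'}I$ to first extract $B_{ij}\otimes I\in\cA$ and then defines $\cB=\{B:B\otimes I\in\cA\}$, whereas you isolate $B_{kl}\otimes E_{ij}$ directly with a single conjugation by matrix units and take $\cB$ to be the span of the coefficients. Both routes are valid and of comparable length.
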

We remind the reader that $(I\otimes M(d, \F))\cA(I\otimes M(d, \F))=\{XAY : X\in 
I\otimes M(d, \F), A\in \cA, Y\in I\otimes M(d, \F)\}$. 
Note that multiplication of tensor products of matrices obeys the rule $(X_1 
\otimes Y_1)(X_2 \otimes Y_2) = X_1 X_2 \otimes Y_1 Y_2$.

\begin{proof}
The only if part is obvious. To see the reverse implication, for $i,j\in [d]$, 
let $E_{ij}$ stand for the elementary matrix in $M(d,\F)$ in
which the $(i, j)$th entry is $1$ and the
others are zero. Then for every quadruple
$(i,j,i',j')\in [d]^4$ we have 
$\sum_{k=1}^{d}E_{ki}E_{i'j'}E_{jk}=\delta_{ii'}\delta_{jj'}I$,
where $\delta_{\ell \ell'}$ stands for the Kronecker delta.
Any element $A$ of 
$\cA$ can be written as 
$\sum_{i,j=1}^dB_{ij}\otimes E_{ij}$ with
$B_{ij}\in M(n,\F)$. Then for every $i,j\in [d]$ we have
$\sum_{k=1}^d(I\otimes E_{ki})A(I\otimes E_{jk})=B_{ij}\otimes I$,
which implies that $B_{ij}\otimes I$ are in $\cA$.
So define $\cB$ as $\{B\in M(n, \F): \text{ such that } 
B\otimes I\in \cA\}$, and we see that 
$\cA=\rblowup{\cB}{d}$.
\end{proof}
Note that  $(I\otimes M(d, \F))\cA(I\otimes M(d, \F))=\cA$ is equivalent to 
saying 
that $\cA$ is an $M(d, \F)$ sub-bimodule of $M(n, \F)\otimes M(d, \F)$, where we 
identify $M(d, \F)$ with $I\otimes M(d, \F)$. 
Similarly, for a subspace $W\leq \F^n\otimes \F^d$, one can see
that $W$ is of the form $W_0\otimes \F^d$ if and only if 
$(I\otimes M(d, \F))W_0=W_0$, that is, $W_0$ is an $M(d, \F)$-submodule of
$\F^n\otimes \F^d$.

\paragraph{Shrunk subspaces in the blow-up situation.}
\begin{proposition}
If $\cA=\rblowup{\cB}{d}$ has an $s$-shrunk subspace, then $\cA$ has an 
$s'$-shrunk subspace where $s'\geq s$ such that $d$ divides $s'$, and $\cB$ has an 
$s'/d$-shrunk subspace. 
\end{proposition}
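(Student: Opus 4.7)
The plan is to exploit the $M(d,\F)$-bimodule structure of $\cA = \rblowup{\cB}{d}$ established by the preceding characterization proposition: namely $(I\otimes M(d,\F))\cA = \cA = \cA(I\otimes M(d,\F))$. Let $U$ be any $s$-shrunk subspace of $\cA$ and set $V := \cA(U)$, so $\dim(U) - \dim(V) \geq s$. The first observation I would make is that $V$ is \emph{automatically} an $(I\otimes M(d,\F))$-submodule of $\F^n\otimes\F^d$: for any $T\in M(d,\F)$, $a\in\cA$, $u\in U$, we have $(I\otimes T)(au) = ((I\otimes T)a)u \in \cA U = V$ because $(I\otimes M(d,\F))\cA \subseteq \cA$. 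By the converse direction of the preceding proposition (applied to subspaces rather than subalgebras), this forces $V = W_0 \otimes \F^d$ for some $W_0 \leq \F^n$.

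Next I would enlarge $U$ to its $(I\otimes M(d,\F))$-closure $\tilde U := (I\otimes M(d,\F))U$, which by the same characterization is of the form $U_0 \otimes \F^d$ for some $U_0 \leq \F^n$. The key step is that enlarging $U$ to $\tilde U$ does not enlarge the image under $\cA$:
\[
\cA(\tilde U) = \cA\,(I\otimes M(d,\F))\,U = \cA U = V,
\]
using $\cA(I\otimes M(d,\F)) = \cA$. Combining this with the direct tensor-product computation
\[
\cA(U_0\otimes \F^d) = (\cB\otimes M(d,\F))(U_0\otimes\F^d) = \cB(U_0)\otimes\F^d
\]
identifies $W_0 = \cB(U_0)$.

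Finally, I would do the shrinkage bookkeeping. Set $s' := \dim(\tilde U) - \dim(\cA(\tilde U)) = d\dim(U_0) - d\dim(\cB(U_0)) = d\bigl(\dim(U_0) - \dim(\cB(U_0))\bigr)$. By construction $d\mid s'$, while $s' \geq \dim(U) - \dim(V) = s$ since $U \subseteq \tilde U$ and $\cA(\tilde U) = V$. Hence $\tilde U$ witnesses that $\cA$ has an $s'$-shrunk subspace with $s'\geq s$ and $d\mid s'$, and $U_0$ is an $(s'/d)$-shrunk subspace of $\cB$, as required.

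I don't anticipate any substantive obstacle here: once the bimodule structure from the preceding proposition is taken seriously, everything is forced. The one subtle conceptual point worth emphasizing in the write-up is that the image $V = \cA(U)$ is \emph{a priori} already of blow-up form $W_0\otimes\F^d$ without any enlargement, which is what makes the shrinkage of $\tilde U$ equal to that of $\cA(U)$ from $U$, plus the (non-negative) inflation $\dim(\tilde U) - \dim(U)$.
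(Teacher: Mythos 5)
Your proof follows essentially the same approach as the paper: use the $M(d,\F)$-bimodule structure of $\cA$ to show that $W=\cA(U)$ is automatically a blow-up $W_0\otimes\F^d$, enlarge $U$ to $U'=(I\otimes M(d,\F))U=U_0\otimes\F^d$ without enlarging the image ($\cA(U')=W$), and then read off the divisibility and the $(s'/d)$-shrunk subspace of $\cB$. Your version is slightly more explicit in identifying $W_0=\cB(U_0)$ via the computation $\cA(U_0\otimes\F^d)=\cB(U_0)\otimes\F^d$ (the paper records only $W_0\leq\cB(U_0)$, which is the less useful direction; what one really needs is $\cB(U_0)\leq W_0$, and in fact equality holds), but this is a cosmetic rather than substantive difference.
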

\begin{proof}
As $\cA=\rblowup{\cB}{d}$,
$(I\otimes M(d, \F))\cA(I\otimes M(d, \F))=\cA$. Assume that $U$ is 
an $s$-shrunk subspace of $\cA$: with $W=\cA(U)$ we have
$\dim_\F U-\dim_\F W=s$. Then $(I\otimes M(d, \F))W=(I\otimes M(d, \F))\cA U=\cA 
U=W$,
thus $W=W_0\otimes \F^d$ for some $W_0\leq \F^n$. Similarly, 
as $\cA(I\otimes M(d, \F))=\cA$, we have
$\cA(I\otimes M(d, \F))U=W$, whence $U'=(I\otimes M(d, \F))U$ is an
$s'$-shrunk subspaces with $s'\geq s$, and $U'=U_0\otimes \F^d$
with some $U_0\leq \F^n$. Note that $\dim(W)$, $\dim(U')$, and therefore $s'$, are 
all divisible by $d$. Noting $\cA=\cB\otimes M(d, \F)$, we have
$W_0\leq \cB(U_0)$ and so $U_0$ is an $s'/d$-shrunk subspace.
\end{proof}

\paragraph{From the extension field to the original field.}
Assume that for some extension field $\K$ of $\F$ we are given a
matrix $A'\in \cB \otimes_\F \K\leq M(n, \K)$ of rank $r$. Then, if $|\F|> r$,
using the method of \cite[Lemma 2.2]{GIR}, we can efficiently find a matrix $A\in 
\cB$ 
of rank at least $r$. This procedure is also useful to keep sizes of
the occurring field elements small. For completeness we include a brief 
description. Let $S\subseteq \F$ with $|S|=r+1$ and
let $B_1,\ldots,B_\ell$ be an $\F$-basis for $\cB$. Then 
$A'=a_1'B_1+\ldots+a_\ell'B_\ell$, where $a_i'\in\K$. As $A'$ is of rank $r$, 
there 
exists an $r\times r$ sub-matrix of $A$ with nonzero determinant. Assume that 
$a_1'\not \in S$. Then
we consider the determinant of the corresponding sub-matrix of the polynomial 
matrix $xB_1+a_2'B_2+\ldots a_{\ell}'B_\ell$. This determinant is
a nonzero polynomial of degree at most $r$ in $x$. Therefore there exists
an element $a_1\in S$ such that $a_1B_1+a_2'B_2+\ldots a_{\ell}'B_\ell$ has rank
at least $r$. Continuing with $a_2',\ldots,a_{\ell}'$, we can ensure
that all the $a_i$'s are from $S$. Since the $B_i$'s span $\cB$, the resulting 
matrix of rank at least $r$ is
in $\cB$. We record this too as a fact.

\begin{lemma}[{Data reduction, \cite[Lemma 2.2]{GIR}}]
\label{lem:red_data}
Let $\cB\leq M(k\times \ell,\F)$ be given by a basis
$B_1,\ldots,B_m$, and let $\K$ be an extension field of $\F$. Let $S$ be a subset 
of $\F$
of size at least $r+1$. Suppose that we are 
given a
matrix $A'=\sum a_i'B_i \in \cB\otimes_{\F}\K$ of rank at least $r$. 
Then 
we can find $A=\sum a_i B_i\in \cB$ of rank also at least $r$
with $a_i\in S$. The algorithm uses $\poly(k, \ell, r)$
rank computations for matrices of the form $\sum a_i''B_i$
where $a_i''\in \{a_1',\ldots,a_m'\}\cup S$.  
\end{lemma}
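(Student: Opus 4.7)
The plan is to replace the $\K$-valued coefficients $a_i'$ one index at a time by elements of $S \subseteq \F$, maintaining the invariant that the current matrix has rank at least $r$. Suppose we have already fixed $a_1, \ldots, a_{j-1} \in S$ and the matrix
\[
A_{j-1} \;:=\; \sum_{i<j} a_i B_i \;+\; \sum_{i \geq j} a_i' B_i
\]
still has rank at least $r$. Then some $r \times r$ submatrix of $A_{j-1}$ has nonzero determinant. Viewing the $j$-th coefficient as a formal variable $x$, the determinant of the same submatrix of $\sum_{i<j} a_i B_i + x B_j + \sum_{i>j} a_i' B_i$ is a polynomial in $x$ of degree at most $r$, and it is not identically zero because its value at $x = a_j'$ equals the nonzero minor of $A_{j-1}$. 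Hence at most $r$ choices of $x$ annihilate this particular polynomial, so since $|S| \geq r+1$ there exists $a_j \in S$ for which this minor remains nonzero, giving a matrix $A_j$ of rank still at least $r$.

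Algorithmically we do not need to locate the relevant minor: for each $j$ we simply try the elements of $S$ one by one, compute the rank of the candidate $\sum_{i<j} a_i B_i + a B_j + \sum_{i>j} a_i' B_i$ for $a \in S$, and take the first $a_j \in S$ preserving rank $\geq r$. By the polynomial-degree argument above at least one such $a_j$ must work. After iterating through $j = 1, \ldots, m$ we obtain the desired $A = \sum a_i B_i \in \cB$ with all $a_i \in S \subseteq \F$. The entries examined inside the rank computations are at each step elements of $\{a_1', \ldots, a_m'\} \cup S$, exactly as the statement requires.

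For the complexity count, the total number of rank computations is $O(m \cdot |S|)$. Since $\cB \leq M(k \times \ell, \F)$ forces $m \leq k\ell$ and we may take $|S| = r+1$, this is $\poly(k, \ell, r)$. The intermediate matrices live over intermediate extensions $\F \subseteq \K' \subseteq \K$, but rank there is just ordinary matrix rank and does not cause difficulty. There is no real obstacle in the proof; it is a coordinate-by-coordinate Schwartz--Zippel-style dehomogenization, and the bound $r$ on the degree of any $r \times r$ minor in a single variable is exactly what makes the hypothesis $|S| \geq r+1$ sufficient.
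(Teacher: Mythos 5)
Your proof is correct and follows essentially the same argument as the paper: fix the coefficients one at a time, observing that the relevant $r \times r$ minor is a polynomial of degree at most $r$ in the current coefficient (nonvanishing at the original $\K$-valued coefficient), so some element of $S$ keeps it nonzero, and algorithmically one simply sweeps over $S$ using rank computations. The only cosmetic difference is that you spell out why the univariate polynomial is not identically zero and tally the rank-computation count explicitly, both of which are fine.
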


\subsection{Regularity of blow-ups}

Our goal in this subsection is to 
prove that when the field size is large enough, the maximum rank over 
$\cA=\rblowup{\cB}{d}\leq M(dn, \F)$ is 
always divisible by $d$. The proof is constructive when $\fdchar(\F)=0$, or when
$\fdchar(\F)\nmid d$: if we get a 
matrix in $\cA$ of rank at least $rd+1$, we will be able to construct a matrix in 
$\cA$ of 
rank at least $(r+1)d$. This is the main technical tool to be used in the proof 
of Theorem~\ref{thm-blup-main}.

We first present a version of the regularity lemma in which a
matrix division algebra as in Lemma~\ref{lem:cyclic_splitting}
is assumed to be part of the input. 

\begin{lemma}[Regularity of blow-ups, technical version]
\label{lem:reg_technical}
Assume that we are given a matrix 
$A\in \rblowup{\cB}{d}\leq M(dn, \F)$ with $\rk(A)=(r-1)d+k$ for some
$1<k<d$. Let $X$ and $Y$ be formal variables and put
$\K=\F'(X)$, where $\F'$ is a finite extension of $\F$ of degree at most $d$. 
Suppose further that $|\F|>(nd)^{\Omega(1)}$ and 
that
we are also given a $\K(Y)$-basis $\Gamma$ of $M(d,\K(Y))$  such that
the $\K(Y^d)$-linear span of $\Gamma$
is a central division algebra $D'$ over $\K(Y^d)$. Let $\delta$
be the maximum of the degrees of the polynomials appearing as
numerators or denominators of the entries of the matrices 
in $\Gamma$. Then, using $(nd+\delta)^{O(1)}$ arithmetic
operations in $\F$, one can find a matrix $A''\in \rblowup{\cB}{d}$ with 
$\rk(A'')\geq rd$. Furthermore, over $\Q$ the bit complexity of
the algorithm is polynomial in the size of the input data (that is,
the total number of bits describing the entries of matrices and the
coefficients of polynomials).
\end{lemma}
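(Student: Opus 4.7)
My plan is to prove the lemma in three stages: first use the second Wong sequence (Fact~\ref{fact:Wong}) to certify that $A$ is not of maximum possible rank inside $\cA = \rblowup{\cB}{d}$; next use the cyclic division algebra $D'$ to lift this single-step certificate into a perturbation of $A$ that raises the rank by the full amount $d-k$ at once; and finally specialize back to $\F$ via the data reduction lemma (Lemma~\ref{lem:red_data}).

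\textbf{Step 1 (Wong-sequence obstruction).} I would compute the second Wong sequence of $(A,\cA)$. Since $\cA = \rblowup{\cB}{d}$ is a bimodule under left and right multiplication by $I\otimes M(d,\F)$, the arguments in Subsection~\ref{subsec:prep} force every shrunk subspace of $\cA$ to have corank divisible by $d$. The corank of $A$ equals $(n-r+1)d - k$, which is not divisible by $d$ since $0 < k < d$. Hence there is no $\cork(A)$-shrunk subspace of $\cA$ and, by Fact~\ref{fact:Wong}, the Wong limit $W^{*}$ is not contained in $\im(A)$. This yields an explicit chain witnessing the failure of inclusion, computable in polynomial arithmetic cost and, over $\Q$, in polynomial bit complexity via the pseudo-inverse trick of \cite{conf_version}.

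\textbf{Step 2 (Division-algebra lift).} Working over the enlarged field $\K(Y)$, I would form a perturbation
\[
\widetilde{A} \;=\; A \;+\; \sum_{j=1}^{d-k} Y^{j}\, Z_{j} \;\in\; \cA \otimes_{\F} \K(Y),
\]
where each $Z_{j}$ is built from the Wong chain by multiplying the augmenting term on the $d$-factor by a suitable element of the basis $\Gamma$ of $D'\subseteq M(d,\K(Y))$. Since $D'$ is a central division algebra over $\K(Y^{d})$ -- so every nonzero element of $D'$ is invertible -- the $d-k$ parallel augmentations act on complementary summands of the $d$-factor and cannot cancel against one another. A particular $rd\times rd$ minor of $\widetilde{A}$ is then a nonzero polynomial in $Y$ with coefficients in $\F'(X)$, certifying $\rk_{\K(Y)}(\widetilde{A}) \geq rd$.

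\textbf{Step 3 (Descent to $\F$).} Finally I would apply Lemma~\ref{lem:red_data} to $\cA \leq M(nd,\F)$ and the extension $\K(Y) = \F'(X,Y)$, replacing the $\K(Y)$-coefficients of $\widetilde{A}$ by elements of a sufficiently large subset $S\subseteq \F$. By Lemma~\ref{lem:cyclic_splitting} the degrees of all polynomials appearing are $\poly(nd+\delta)$, and the hypothesis $|\F|>(nd)^{\Omega(1)}$ guarantees such an $S$ exists. The polynomially many rank tests required by the reduction are performed via Proposition~\ref{prop:rank}, with Lemma~\ref{lem:correct_rank} substituting when $\F'$ is simulated by a factor algebra $R$ as in Subsection~\ref{subsubsec:zeta}. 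The resulting $A''\in \cA$ has rank at least $rd$; the overall arithmetic cost is $(nd+\delta)^{O(1)}$, and the bit complexity over $\Q$ is polynomial in the input size. The principal obstacle lies in Step~2: constructing the $Z_{j}$ explicitly from the Wong data and showing that the distinguished $rd\times rd$ minor of $\widetilde{A}$ is not identically zero in $Y$. The division-algebra property of $D'$ is essential precisely there, preventing the $d-k$ augmenting moves from becoming linearly dependent modulo the existing rank.
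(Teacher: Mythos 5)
Your proposal takes a genuinely different route from the paper's proof, but Step~2 contains a real gap that the other two steps cannot compensate for.

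The paper never invokes the Wong sequence inside the proof of Lemma~\ref{lem:reg_technical}; the Wong machinery only enters in Theorem~\ref{thm-blup-incr}. Instead, the paper's argument is a change-of-coordinates argument. It observes that $\cB\otimes_\F M(d,\K(Y))$ is spanned over $\K(Y)$ both by the standard basis $\{B_i\otimes E_{j,k}\}$ of $\rblowup{\cB}{d}$ and by $\{B_i\otimes C : C\in\Gamma\}$, where $\Gamma$ is a $\K(Z)$-basis of $D'$. The key structural input is Claim~\ref{claim:full_rank}: every matrix in $\cB\otimes D'$ has rank divisible by $d$, which is proved module-theoretically by noting that the image of such a matrix is a ${D'}^{op}$-submodule of $\K(Y)^{nd}$, hence of $\K(Y)$-dimension divisible by $d$. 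Given that, the paper writes $A$ in the $\{B_i\otimes C\}$ coordinates, applies Lemma~\ref{lem:red_data} to push the coefficients into $\K(Z)$ so that the result $A'$ lies in $\cB\otimes D'$ with $\rk(A') > (r-1)d$ preserved, whereupon Claim~\ref{claim:full_rank} forces $\rk(A') \geq rd$ \emph{for free}. A second application of Lemma~\ref{lem:red_data}, this time in the standard-basis coordinates, pushes the coefficients into $\F$ to obtain $A''\in\rblowup{\cB}{d}$.

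Your Step~2 is where things break down. The second Wong sequence of $(A,\cA)$ gives you a single augmenting chain, and hence a single rank increase in a suitable blow-up -- it does not by itself produce $d-k$ independent augmentations that can be run ``in parallel on complementary summands.'' You propose to build $Z_1,\dots,Z_{d-k}$ ``from the Wong chain by multiplying \dots{} by a suitable element of $\Gamma$'' but give no construction, and the assertion that invertibility of nonzero elements of $D'$ ensures the augmentations ``cannot cancel against one another'' does not follow: $D'$ has no direct-sum decomposition into $d$ ``summands'' whose independence you could appeal to, and rank perturbations in a matrix space are not additive in the way the sentence suggests. As you yourself flag, the nonvanishing of the distinguished $rd\times rd$ minor of $\widetilde{A}$ is exactly what needs a proof, and the division-algebra hypothesis is not applied in a way that produces one. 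The paper sidesteps this entirely: rather than trying to \emph{engineer} a rank increase by perturbation, it shows that any matrix you land on inside $\cB\otimes D'$ automatically has rank a multiple of $d$, so you only need to preserve the inequality $\rk > (r-1)d$ through the coordinate change -- something Lemma~\ref{lem:red_data} already gives you. Your Step~1, while essentially correct (modulo the slight inaccuracy that the proposition in Subsection~\ref{subsec:prep} does not literally say every shrunk subspace has $d$-divisible corank, only that one of at least that size does), turns out to be unnecessary: the paper's proof makes no use of shrunk subspaces or Wong limits at all.
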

\begin{proof}
Instead of the blow-up $\rblowup{\cB}{d,d}
=\cB\otimes M(d, \F)$
we start with the blow-up $\cB\otimes M(d,\K(Y))$. Then both 
$\rblowup{\cB}{d,d}$ and $\cB\otimes_\F D'$ span, as a $\K(Y)$-linear space,
the blow-up $\cB\otimes_\F M(d, \K(Y))$.

\begin{claim}
\label{claim:full_rank}
Every matrix in 
$M(n,\F)\otimes D'\subset M(d, \K(Y))$ has rank (as a matrix over $\K(Y)$) 
divisible by $d$.
\end{claim}

\begin{proof}
Firstly note that $M(n, \F)\otimes_\F D'=M(n, \K(Z))\otimes_{\K(Z)} D'$, since 
$D'$ is a $\K(Z)$-algebra. As $D'\subset M(d, \K(Y))$, $M(n, \K(Z))\otimes_{\K(Z)} 
D'$ acts naturally on the $\K(Z)$-space $\K(Z)^n\otimes \K(Y)^d\cong \K(Z)^n\otimes 
\K(Z)^{d^2}\cong \K(Z)^{nd^2}$. 
Since $D' \otimes_{\K(Z)} D'^{op} \cong M(d^2, 
\K(Z))$~\cite[Corollary 15.5]{Lam}, it follows that the centralizer of this 
action is isomorphic to the opposite algebra ${D'}^{op}$. 
Therefore the image $A'\K(Y)^{dn}$ of any $A'\in M(n, \F)\otimes_{\F} 
D'$ is a ${D'}^{op}$-submodule, whence its dimension over $\K(Z)$ is divisible by 
$d^2$. It follows that the dimension over $\K(Y)$ is divisible by $d$.
\end{proof}

The claim enables us to ``round up'' the rank of $A$ to the next multiple of $d$. 
Let $B_1, \dots, B_d$ be an $\F$-basis of $\cB$. 
Since $\rk(A)>(r-1)d$ over $\F$, clearly $\rk(A)>(r-1)d$ over $\K(Y)$ as well. 
 Now $\Gamma$, the $\K(Z)$-basis for $D'$ is a $\K(Y)$-basis
for $M(d,\K(Y))$.
Therefore $A$, as a matrix over $\K(Y)$, can be expressed as a linear combination (with 
coefficients over $\K(Y)$)
of $\{B_i\otimes C: i\in [d], C\in \Gamma\}$. We use the method of 
Lemma~\ref{lem:red_data} to
find coefficients from $\K(Z)$ (or even from $\F$)
such that the combination $A'$ of the basis element for $D'$
has rank also larger than $(r-1)d$. We have $A'\in \cB\otimes D'$, whence
by Claim~\ref{claim:full_rank}, the rank of $A'$ is at least
$rd$. Then we express $A'$ as 
a linear combination of elements -- with coefficients from
$\K(Y)$ -- of an $\F$-basis of $\rblowup{\cB}{d,d}$ which
is also a $\K(Y)$-basis for $\cB\otimes M(d,\K(Y))$. Then we use
again the algorithm of Lemma~\ref{lem:red_data} to replace
these coefficients to elements of $\F$ to find a matrix
$A''\in \cB$ of rank at least $rd$. 
\end{proof}

Using Remark~\ref{remark:coprime} and
Lemma~\ref{lem:cyclic_splitting}
we immediately obtain the following results.

\begin{lemma}[Regularity of blow-ups, non-constructive]
\label{lem:reg_non_cons}
For $\cB\leq M(n, \F)$, assume that $|\F| > (nd)^{\Omega(1)}$. Then 
$\rk(\rblowup{\cB}{d})$ is divisible by $d$.
\end{lemma}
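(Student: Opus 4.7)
The plan is to reduce this statement to the constructive version Lemma~\ref{lem:reg_technical}. Set $r' = \rk(\rblowup{\cB}{d})$ and suppose for contradiction that $r' = (r-1)d + k$ with $1\leq k < d$. Fix any $A \in \rblowup{\cB}{d}$ with $\rk(A) = r'$. The aim is to produce $A'' \in \rblowup{\cB}{d}$ of rank at least $rd > r'$, which will contradict maximality.

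To invoke Lemma~\ref{lem:reg_technical}, one needs in hand a central division algebra $D'$ of index $d$ sitting inside $M(d,\K(Y))$ for a suitable field $\K$. When $\fdchar(\F)\nmid d$ (including the characteristic zero case), Lemma~\ref{lem:cyclic_splitting} supplies the required $D'$ after passing to a finite extension $\F'$ of $\F$ of degree at most $d$ containing a primitive $d$-th root of unity and setting $\K = \F'(X)$. When $\fdchar(\F) = p$ divides $d$, write $d = p^e d'$ with $\gcd(d',p)=1$; as indicated in Remark~\ref{remark:coprime}, one combines the Kummer construction for the degree-$d'$ part with an Artin--Schreier--Witt cyclic extension of degree $p^e$ to obtain a cyclic extension and hence, via the same procedure as in Fact~\ref{fact:cyclic_algebra} and Proposition~\ref{prop:cyclic_algebra}, a central division algebra of index $d$ over a purely transcendental extension of $\F$. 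For the present non-constructive statement only existence of such a $D'$ is needed, which is classical.

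With $D'$ in hand, the remainder of the argument is exactly the one used in Lemma~\ref{lem:reg_technical}. The matrix $A$, viewed over $\K(Y)$, lies in $\cB\otimes_\F M(d,\K(Y))$ and so can be expressed as a $\K(Y)$-linear combination of the basis elements $B_i \otimes C$ for $B_i$ ranging over an $\F$-basis of $\cB$ and $C$ ranging over $\Gamma$. Apply the data reduction of Lemma~\ref{lem:red_data} (this is where the hypothesis $|\F| > (nd)^{\Omega(1)}$ is used, to furnish a set $S \subseteq \F$ of size exceeding the relevant rank) to produce $A' \in \cB \otimes D'$ with $\rk(A') \geq (r-1)d+1$. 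Then the key Claim~\ref{claim:full_rank} inside Lemma~\ref{lem:reg_technical}, which only uses the central simple structure of $D'$ and the isomorphism $D' \otimes D'^{\mathrm{op}} \cong M(d^2, \K(Z))$, forces $d \mid \rk(A')$, hence $\rk(A') \geq rd$. A second application of Lemma~\ref{lem:red_data} descends $A'$ to a matrix $A'' \in \rblowup{\cB}{d}$ (over $\F$) of rank at least $rd$, contradicting maximality of $r'$.

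The main obstacle is the characteristic $p$ case with $p \mid d$: the Kummer construction of Lemma~\ref{lem:cyclic_gen} is unavailable, so one must appeal to Artin--Schreier--Witt to guarantee existence of a cyclic extension (and therefore a central division algebra) of the required index. Since only existence is needed here, this is a standard fact; the algorithmic subtleties of actually building such an algebra are deferred to \cite{IQSnote} and are not required for the non-constructive conclusion.
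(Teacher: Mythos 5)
Your proof is correct and follows the paper's intended route precisely. The paper disposes of this lemma in one line (``Using Remark~\ref{remark:coprime} and Lemma~\ref{lem:cyclic_splitting} we immediately obtain the following results''), and your write-up correctly unpacks that: reduce to Lemma~\ref{lem:reg_technical} by supplying a central division algebra of index $d$ over a purely transcendental extension of a finite extension of $\F$ — via Lemma~\ref{lem:cyclic_splitting} in the coprime characteristic case, and via classical existence (Artin--Schreier--Witt, as Remark~\ref{remark:coprime} indicates) when $\fdchar(\F)\mid d$, which suffices because the statement is non-constructive — and then invoke Claim~\ref{claim:full_rank} together with two applications of the data-reduction Lemma~\ref{lem:red_data} to obtain a contradiction with maximality of the rank.
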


\begin{lemma}[Regularity of blow-ups, constructive]
\label{lem_reg_blowup}
For $\cB\leq M(n, \F)$ and $\cA=\rblowup{\cB}{d}$, assume that 
$\mathrm{char}(\F)=0$ or 
$\mathrm{char}(\F)\nmid d$, and 
$|\F| > (nd)^{\Omega(1)}$. Then,
given a matrix $A\in 
\cA$ with $\rk A > (r-1)d$, there exists a deterministic algorithm that returns 
$\widetilde{A}\in \cA$ of rank $\geq rd$. 
This algorithm uses $\poly(nd)$ arithmetic 
operations and over $\Q$, all intermediate numbers have bit lengths polynomial in 
the input size.
\end{lemma}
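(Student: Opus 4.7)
\medskip
\noindent\textbf{Proof proposal.} The plan is to reduce to Lemma~\ref{lem:reg_technical}, whose hypotheses already ``do the real work'': once the central division algebra $D'\subseteq M(d,\K(Y))$ is in hand, that lemma takes a matrix $A$ of rank in the open interval $((r-1)d, rd)$ and produces $A''\in\rblowup{\cB}{d}$ of rank at least $rd$. So the task reduces to constructing $D'$ efficiently and supplying it as input in the matrix form required there.

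First I would invoke Lemma~\ref{lem:cyclic_splitting} with the formal variables $X,Y$. That lemma demands a base field $\F'$ of characteristic coprime to $d$ (which is exactly our hypothesis, since we assume $\fdchar(\F)=0$ or $\fdchar(\F)\nmid d$) that contains a primitive $d$-th root of unity $\zeta$, and then delivers a $\K(Y)$-basis $\Gamma$ of $M(d,\K(Y))$ whose $\K(Y^d)$-linear span is a central division algebra $D'$ of index $d$ over $\K(Y^d)=\F'(X,Y^d)$. The entries of the matrices in $\Gamma$ are rational functions whose numerators and denominators have degree $\poly(d)$ in $\F'[X,Y]$, so the parameter $\delta$ in Lemma~\ref{lem:reg_technical} is $\poly(d)$.

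The main obstacle is algorithmic: $\F$ may not already contain a primitive $d$-th root of unity, and over a general large field one cannot factor $x^d-1$ by arithmetic operations alone. This is exactly the issue anticipated in Section~\ref{subsubsec:zeta}. I would handle it by the factor-algebra substitute described there: replace $\F'$ by the ring $R=\F[x]/(x^d-1)$ modulo the ideal generated by annihilators of all proper divisors of $x^d-1$, which is isomorphic to a direct sum of copies of the splitting field and in which the image of $x$ is a primitive $d$-th root of unity in every component. Lemma~\ref{lem:cyclic_splitting} itself needs no divisions, so no zero-divisor issue arises at the construction stage, and the output $\Gamma$ exists as an $R(X,Y)$-basis for a direct sum of isomorphic copies of a division algebra. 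The degree bound and $\poly(d)$ arithmetic-operation count are unaffected.

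Once $A$, $\cB$, and $\Gamma$ are assembled, I feed them into Lemma~\ref{lem:reg_technical}. The field-size hypothesis $|\F|>(nd)^{\Omega(1)}$ is needed in two places inside that lemma: in Lemma~\ref{lem:red_data}, to specialize the coefficients of the division-algebra expansion back to $\F$ while preserving the lower bound on rank, and in Proposition~\ref{prop:rank} (with $k=2$), to evaluate ranks of the intermediate matrices over the bivariate rational function field $\K(Y)$; in the latter, if we are working inside $R$ rather than $\F'$, the rank comparison is performed componentwise via the division-free procedure of Lemma~\ref{lem:correct_rank}, which avoids triggering a restart from an inadvertently inverted zero divisor. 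The total arithmetic cost is $(nd+\delta)^{O(1)}=\poly(nd)$. Over $\Q$, Lemma~\ref{lem:cyclic_splitting} guarantees that all polynomials describing $\Gamma$ have bit size $\poly(d)$ (taking $\F'=\Q[\sqrt[d]{1}]$ or the $R$-analogue), and Lemma~\ref{lem:reg_technical} preserves polynomial bit complexity, so every intermediate number has bit length polynomial in the input size, as claimed.
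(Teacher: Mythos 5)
Your proposal matches the paper's own derivation essentially step for step: the paper states that Lemma~\ref{lem_reg_blowup} follows by combining Lemma~\ref{lem:cyclic_splitting} with Lemma~\ref{lem:reg_technical}, and then gives Algorithm~\ref{algo:round_up} as the expanded version, which is exactly the pipeline you describe (construct $\F'$ or its factor-algebra substitute $R$ per Section~\ref{subsubsec:zeta}, build the division-algebra basis $\Gamma$ via Lemma~\ref{lem:cyclic_splitting}, then round up the rank via Lemma~\ref{lem:reg_technical} using Lemma~\ref{lem:red_data}, Proposition~\ref{prop:rank} with $k=2$, and Lemma~\ref{lem:correct_rank} for the component-wise rank). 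Your accounting of where the field-size and coprimality hypotheses are consumed is also the same as the paper's, so this is the same proof.
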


To see all the ingredients together, we give an
expanded description in Algorithm~\ref{algo:round_up}.

\begin{algorithm}\label{algo:round_up}
  \caption{Algorithm \algo{RoundUpRank}}
  \KwIn{$\cB=\langle B_1, \dots, B_m\rangle\leq M(n, \F)$. $S\subseteq \F$ with 
  $|S|=(nd)^{\Omega(1)}$. $A\in \rblowup{\cB}{d}$ of $\rk(A)>(n-1)d$. If 
  $\fdchar(\F)=p>0$ then $p\nmid d$. }
  \KwOut{$A''\in \rblowup{\cB}{d}$ with $\rk(A'')=nd$.}

  $\F'\gets \F[\zeta]$, where $\zeta\gets$ a $d$th root of $\F$\;
  $X_1, Y\gets$ two independent formal variables\;
  $\Gamma=\{C_1, \dots, C_{d^2}\}\gets$ a $\F'(X_1, Y)$-basis of $M(d, \F'(X_1, 
  Y))$ such that $\F'(X_1, Y^d)$-linear span of $\Gamma$ is a central division 
  algebra over $\F'(X_1, Y^d)$\;
  Expand $A=\sum_{i\in[m], j\in[d^2]} \lambda_{i, j} B_i\otimes C_j$, 
  $\lambda_{i,j}\in \F'(X_1, Y)$ as a matrix in $\cB\otimes M(d, \F'(X_1, 
  Y))$\;
  Compute $A'=\sum_{i\in[m], j\in[d^2]} \mu_{i,j}B_i\otimes C_j$, $\mu_{i,j}\in S$ 
  such that $\rk(A')=nd\geq \rk(A)$\;
  $\Delta=\{E_{i,j}\mid i, j\in[d]\}\gets$ the standard basis of $M(d, \F'(X_1, 
  Y))$\;
  \tcp{That is $E_{i,j}(i,j)=1$, and other entries are $0$}
  Expand $A'=\sum_{i\in[m], j, k\in[d]}\nu_{i,j,k} B_i\otimes E_{j,k}$, 
  $\nu_{i,j,k}\in \F'(X_1, Y)$\;
  Compute $A''=\sum_{i\in[m], j, k\in[d]}\xi_{i,j,k} B_i\otimes E_{j,k}$, 
  $\xi_{i,j,k}\in S$ such that $\rk(A'')=\rk(A')$\;
  \Return $A''$.
\end{algorithm}

We remind 
the reader how our previous development supports 
Algorithm~\ref{algo:round_up}. For Line 1, Section~\ref{subsubsec:zeta} describes 
how to work with $\F[\zeta]$. 
Line 3 calls the procedure in 
Lemma~\ref{lem:cyclic_splitting} which in turn utilizes
Lemma~\ref{lem:cyclic_gen}. Line 4 and 7 are standard tasks in linear algebra. 
Line 5 and 8 rely on Lemma~\ref{lem:red_data}, which in turn calls 
Proposition~\ref{prop:rank}. Lemma~\ref{lem:correct_rank} ensures that 
the rank calculation is essentially correct.

We mention an immediate consequence.

\begin{corollary}\label{cor:regularity}
Let $\cB\leq M(n, \F)$. Assume that the characteristic of $\F$ is zero,
$d\geq n$, and we are given a matrix $A\in \rblowup{\cB}{d}$ 
with $\rk A = rd$. Then, for any $d'>d$ 
we can efficiently find $\widetilde{A}\in \rblowup{\cB}{d'}$ with $\rk 
\widetilde{A}\geq rd'$.
\end{corollary}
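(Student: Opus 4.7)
The plan is to go from $\rblowup{\cB}{d}$ to $\rblowup{\cB}{d'}$ one unit at a time, maintaining at every stage $k\in\{d,d+1,\ldots,d'\}$ a matrix $A_k\in\rblowup{\cB}{k}$ of rank $\geq rk$, starting from $A_d:=A$. The single tool needed beyond basic linear algebra is the constructive regularity lemma (Lemma~\ref{lem_reg_blowup}).

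For the inductive step, suppose $A_k\in\rblowup{\cB}{k}$ with $\rk A_k\geq rk$ is in hand. First I embed $M(k,\F)\hookrightarrow M(k+1,\F)$ by $C\mapsto C\oplus 0$, i.e.\ put $C$ into the top-left $k\times k$ block. Tensoring with $\cB$ gives an embedding $\rblowup{\cB}{k}\hookrightarrow \rblowup{\cB}{k+1}$. Writing an element of $\rblowup{\cB}{k+1}$ as an $n\times n$ arrangement of $(k+1)\times(k+1)$ blocks, the image $A_k'$ of $A_k$ has, in each outer block, a zero last row and column; deleting these rows and columns recovers $A_k$, so $\rk A_k'=\rk A_k\geq rk$.

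Now I feed $A_k'$ into Lemma~\ref{lem_reg_blowup} with parameter $k+1$ in place of $d$ to round its rank up to the next multiple of $k+1$. The hypothesis of that lemma requires $\rk A_k'>(r-1)(k+1)$, and since $\rk A_k'\geq rk$ this is equivalent to $k+1>r$. Because $r\leq n$ (any matrix in $\rblowup{\cB}{k}$ viewed inside $M(nk,\F)$ has rank $\leq nk$, so the multiple $r$ of $k$ is at most $n$) and $n\leq d\leq k$, the inequality $k+1>r$ is immediate. The lemma therefore outputs $A_{k+1}\in\rblowup{\cB}{k+1}$ with $\rk A_{k+1}\geq r(k+1)$, completing the step. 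After $d'-d$ iterations we obtain $\widetilde A:=A_{d'}\in \rblowup{\cB}{d'}$ with $\rk\widetilde A\geq rd'$.

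As for the side conditions: since $\fdchar(\F)=0$, the field $\F$ is infinite, so the size hypothesis $|\F|>(nk)^{\Omega(1)}$ of Lemma~\ref{lem_reg_blowup} is automatic at every stage, and the characteristic condition $\fdchar(\F)\nmid(k+1)$ is vacuous. Each invocation of the regularity lemma costs $\poly(nk)\leq\poly(nd')$, and there are $d'-d$ invocations, so the whole procedure is polynomial (and, over $\Q$, has polynomial bit complexity by the corresponding guarantee in Lemma~\ref{lem_reg_blowup}). There is no real obstacle here — the whole point is the interaction between the zero-padding embedding (which preserves rank but shrinks the ratio $\rk/(k+1)$) and the regularity lemma (which reinstates divisibility); the one thing to check, namely $k+1>r$, is exactly where the hypothesis $d\geq n$ gets used.
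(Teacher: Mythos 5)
Your proof is correct and follows essentially the same route as the paper: embed $\rblowup{\cB}{k}$ into $\rblowup{\cB}{k+1}$ by zero-padding, observe that $rk>(r-1)(k+1)$ because $k+1>r$ (which is where $d\geq n$ enters, since $r\leq n\leq d\leq k$), apply Lemma~\ref{lem_reg_blowup}, and iterate. The only cosmetic difference is that you reduce the needed inequality directly to $k+1>r$, whereas the paper packages it as $\tfrac{d+1}{d}\leq\tfrac{n+1}{n}<\tfrac{r}{r-1}$; these are the same arithmetic fact.
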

\begin{proof}
Induction on $d'$ and $d$. Assume first that $d'=d+1$. Then 
$\frac{d+1}{d}
\leq\frac{n+1}{n}<\frac{r}{r-1}$, whence $(d+1)(r-1)<dr$.
Therefore, if we embed $\rblowup{\cB}{d}$ into $\rblowup{\cB}{d+1}$
then $A$ has rank $rd>(r-1)(d+1)$ and Lemma~\ref{lem_reg_blowup}
 applies.
Proceed with $(d+2,d+1)$ in place of $(d+1,d)$.
\end{proof}

We close this subsection with two open problems.  
\begin{remark}
~
\begin{enumerate}
\item It is an interesting problem to investigate whether
(the non-constructive version of) the regularity lemma 
would hold over small fields. Giving a proof not relying
on division algebras might be a progress in this direction.
\item Corollary~\ref{cor:regularity} probably remains true for $d<n$. However, we 
do not see how to prove it.
\end{enumerate}

\end{remark}

\subsection{Incrementing rank via blow-up}

We have introduced in Section~\ref{subsec:wong} a key technique here, namely the 
second Wong sequences. Recall that given $A\in\cB\leq M(n, \F)$, the second Wong 
sequence can be used to detect whether there exists a $\cork(A)$-shrunk subspace. 
If such a shrunk subspace exists, then $\nrk(\cB)=\rk(A)$. The difficulty is the 
case 
when such $\cork(A)$-shrunk subspace does not exist. One natural idea to proceed 
is to find $A'\in\cB$ of rank $>\rk(A)$, and use $A'$ to test whether a 
$\cork(A')$-shrunk exists or not. If $\cB$ is rank-$1$ 
spanned, such an $A'$ does exist, and another ingredient in 
\cite{conf_version} is an 
update procedure that finds $A'$ of higher rank in this case. However, this in 
general is not possible, since 
$\rk(\cB)$ and $\nrk(\cB)$ may differ. Fortunately, the concept of blow-ups helps: 
instead of looking for $A'\in \cB$ of rank $> \rk(A)$, we shall look for $A'\in 
\rblowup{\cB}{d}$ of rank $> \rk(A)d$ for some not too large $d\in \N$. 
It turns out that this 
is achievable, and an application of the regularity lemma even yields $A''\in 
\rblowup{\cB}{d}$ of rank $\geq (\rk(A)+1)d$. 

\begin{theorem}
\label{thm-blup-incr}
Let $\cB\leq M(n, \F)$ and $\cA=\rblowup{\cB}{d}$. Assume that
we are given a matrix $A\in \cA$ with $\rk(A)=rd$. Let $d'$ be an integer $>r$. 
Suppose that $|\F|$ is $(ndd')^{\Omega(1)}$, and if $\fdchar(\F)=p>0$ then assume 
$p\nmid 
dd'$. There 
exists a deterministic algorithm that returns either an $(n-r)d$-shrunk 
subspace for $\cA$ (equivalently, an $(n-r)$-shrunk subspace for $\cB$), or a 
matrix $A^*\in \cA\otimes M(d', \F)$ of rank at least $(r+1)dd'$. This algorithm 
uses $\poly(ndd')$ arithmetic operations and, over $\Q$, all intermediate numbers 
have bit lengths polynomial in the input size.
\end{theorem}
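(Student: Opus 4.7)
The plan is to run the second Wong sequence on $(A,\cA)$ inside $M(dn,\F)$ and branch on Fact~\ref{fact:Wong}. Let $W^*$ denote its limit. Since $\rk(A)=rd$, we have $\cork(A)=(n-r)d$. If $W^*\subseteq \im(A)$, then by Fact~\ref{fact:Wong} the subspace $A^{-1}(W^*)$ is a $(n-r)d$-shrunk subspace of $\cA$. The shrunk-subspace proposition proved in Section~\ref{subsec:prep} shows that any shrunk subspace of a blow-up can be ``rounded up'' to one of the form $U_0\otimes \F^d$ whose shrinking amount is divisible by $d$, and that $U_0$ is then an $(n-r)$-shrunk subspace of $\cB$. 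This branch is thus done in deterministic polynomial time, with rational bit complexity handled by the pseudo-inverse trick of Fact~\ref{fact:Wong}.

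In the hard branch, $W^*\not\subseteq \im(A)$, and the goal is to produce some $\tilde A \in \rblowup{\cA}{d'}$ with $\rk(\tilde A) > rdd'$; the constructive regularity lemma (Lemma~\ref{lem_reg_blowup}) applied to $\rblowup{\cA}{d'}=\rblowup{\cB}{dd'}$ (using $\fdchar(\F)\nmid dd'$ and the field-size hypothesis) then rounds this up to the required $(r+1)dd'$, producing $A^*$. From the Wong sequence I would extract the smallest index $k$ with $W_k\not\subseteq \im(A)$ together with explicit ``augmenting data'': vectors $u_1,\ldots,u_k$ and matrices $C_1,\ldots,C_k \in \cA$ such that $u_1\in\ker A$, $Au_i$ lies in the span of $\{C_j u_{j-1}:j<i\}$, and $C_k u_k\notin \im(A)$. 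The idea is to build $\tilde A$ as a block matrix on $\F^{dn}\otimes \F^{d'}$ with copies of $A$ on the block-diagonal, and the $C_i$'s placed on a suitable off-diagonal pattern matched to the chain, so that the augmenting data contributes strictly fewer kernel vectors than $A\otimes I_{d'}$ does in at least one block-row, thereby increasing the rank by at least one past $rdd'$.

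The part I expect to be genuinely hard is making this block construction work with only $d'>r$ blocks, because the Wong chain length $k$ may be as large as $rd+1$, much larger than $r$. So the construction cannot spread the chain over $k$ distinct blocks; instead the augmentation must be ``aggregated'' so that an entire packet of chain steps fits into a single off-diagonal block, the number of such packets being controlled by $r$ (essentially because $W_{k-1}\subseteq\im A$ fills at most $rd$ dimensions and each aggregation consumes a full $d$-dimensional layer). The inequality $d'>r$ is then precisely what guarantees there is at least one ``spare'' block position available to carry the outward-escaping increment $C_k u_k \notin\im(A)$ and push rank strictly above $rdd'$. Verifying the rank lower bound will be a direct kernel computation for the resulting block matrix.

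All remaining ingredients are algorithmic and run within the stated bounds: the Wong sequence and the extraction of the augmenting chain are linear-algebraic and cost $\poly(ndd')$ arithmetic operations with controlled bit length over $\Q$ by Fact~\ref{fact:Wong}; the block assembly is trivial; and the final round-up step is exactly Lemma~\ref{lem_reg_blowup}, whose complexity and bit-length guarantees translate directly into the claim of the theorem.
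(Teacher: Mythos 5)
Your outline is sound — second Wong sequence, branch on Fact~\ref{fact:Wong}, extract an augmenting chain, build a pair of matrices in the $d'$-blow-up, finish with Lemma~\ref{lem_reg_blowup} — and it matches the paper's Algorithm~\algo{IncrementRank}. But the step you explicitly flag as ``genuinely hard'' is exactly where the proof hinges, and the mechanism you propose (aggregating packets of chain steps into a single block) is not needed and is not what the paper does. The clean observation you are missing is that for $\cA=\rblowup{\cB}{d}$ the second Wong sequence \emph{already} stabilizes within $r+1$ steps. Since $(I\otimes M(d,\F))\cA=\cA$, each $W_j=\cA A^{-1}(W_{j-1})$ satisfies $(I\otimes M(d,\F))W_j=W_j$, so $\dim W_j$ is a multiple of $d$. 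Thus as long as $W_j\subseteq\im(A)$ the dimension grows by at least $d$ per step and is bounded above by $\rk(A)=rd$, giving $\ell\leq r+1\leq d'$. There is therefore no long chain of length up to $rd+1$ to compress; you get one shift matrix $Z_i\in M(d',\F)$ per chain step, and $d'>r$ is exactly the room needed to house all $\ell$ of them (with one wrap-around case when $\ell=r+1=d'$).

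With the chain length controlled, the construction is also more specific than a ``direct kernel computation.'' One sets $A'=A\otimes I$ and $C'=\sum_{i\le\ell}C_i\otimes Z_i$ where $Z_i(u_i)=u_{i+1}$ (zero elsewhere). Then $w_1=v_1\otimes u_1\in\ker A'$, $A'w_j=C'w_{j-1}$, and $C'w_\ell=C_\ell v_\ell\otimes u_{\ell+1}\notin\im(A')$, so the second Wong sequence of $(A',\langle A',C'\rangle)$ escapes $\im(A')$. By Fact~\ref{fact:dimtwo}, $A'$ is not of maximum rank in the two-dimensional space $\langle A',C'\rangle$, and a polynomial-time search over scalars $\lambda$ (or just taking $C'$ itself) produces $A''$ of rank $>rdd'$, after which Lemma~\ref{lem_reg_blowup} rounds up to $(r+1)dd'$. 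Your vague ``aggregation'' step, as written, does not produce a verifiable rank increase and would need to be replaced by this argument; as it stands the proof has a genuine gap precisely where the theorem's hypothesis $d'>r$ is used.
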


\begin{proof}
We present the algorithm formally as Algorithm~\ref{algo:incre}. 

\begin{algorithm}
\caption{Algorithm \algo{IncrementRank}}\label{algo:incre}
  \KwIn{$\cB=\langle B_1, \dots, B_m\rangle\leq M(n, \F)$. $A\in 
  \cA=\rblowup{\cB}{d}$ of $\rk(A)=rd$. An 
    integer $d'>r$ such that if 
    $\fdchar(\F)=p>0$ then $p\nmid d'$. $S\subseteq \F$ with
        $|S|=(ndd')^{\Omega(1)}$. }
    \KwOut{One of the following: (1) An $(n-r)d$-shrunk subspace of $\cA$;
     (2)
    $A^*\in \rblowup{\cA}{d'}=\rblowup{\cB}{dd'}$ with $\rk(A^*)\geq (r+1)dd'$ . }
$W_0, W^*\gets \zvec$\;
$\ell=0$\;
\While{$W^*\subseteq\im(A)$ \textbf{and} $\ell\leq r+1$ }
{
$\ell\gets \ell+1$\;
$W_\ell\gets \cA A^{-1}(W_{\ell-1})$\;
$W^*=W_\ell$.
}
\If{$W^*\subseteq \im(A)$}
{\Return $A^{-1}(W^*)$.}
\tcp{In the following $\ell$ is the smallest integer $i$ such that 
$W_i\not\subseteq 
\im(A)$.}
Compute $C_1, \dots, C_\ell\in \cA$ such that $C_\ell A^{-1}C_{\ell-1}A^{-1}\dots 
C_1A^{-1}(0)\not\subseteq \im(A)$\;
Take $v_1\in A^{-1}(0)=\ker(A)$ such that $C_\ell A^{-1}C_{\ell-1}A^{-1}\dots 
C_1(v_1)\notin \im(A)$\;
\For{$i=2$ \textbf{to} $\ell$}{$v_i\gets A^{-1}C_{i-1}(v_{i-1})$.}
\tcp{Note that $C_\ell v_\ell\not\in\im(A)$.}
$\{u_1, \dots, u_{d'}\}\gets$ a basis of $\F^{d'}$\;
\For{$i=1$ \textbf{to} $\ell$}
{\If{$i=r+1=d'$}{
$Z_i\gets$ the matrix in $M(d', \F)$ such that
$Z_{r+1}(u_{r+1})=u_1$ and $Z_{r+1}(u_j)=0$ for $j\neq 
r+1$.}
\Else{
$Z_i\gets$ the matrix in $M(d', \F)$ such that 
$Z_i(u_i)=u_{i+1}$ and $Z_i(u_j)=0$ for $j\neq i$.}}

$A'\gets A\otimes I$\;
$C'\gets C_1\otimes Z_1+\dots +C_\ell\otimes Z_\ell$\;
\If{$\rk(C')>rdd'$}{$A''\gets C'$.}
\Else{Compute $\lambda \in S$ such that $A'+\lambda C'$ is of $>rdd'$\; $A''\gets 
A'+\lambda C'$.}
Compute $A^*$ of rank $\geq (r+1)dd'$ using $A''$ and Lemma~\ref{lem_reg_blowup}\;
\Return $A^*$.
\end{algorithm}

Let us first outline what the algorithm does. From Line 1 to 6 it 
computes the 
second Wong sequence with respect to $(A, \cA)$. Line 7 and 8 deal with the case 
when the 
sequence provides a $\cork(A)$-shrunk subspace. In the other case, we first 
utilize the sequence to get a matrix $A''$ of rank $>rdd'$ (Line 9 to 
25). Then we obtain the desired $A^*$, by applying 
the regularity lemma (Lemma~\ref{lem_reg_blowup}) to $A''$. 

We now explain some implementation details of the algorithm. 
\begin{description}
\item[Line 3, $\ell\leq r+1$] The second Wong sequence, when applied to matrix 
spaces of the form $\cA=\rblowup{\cB}{d}$, stabilizes faster because of the 
following. Since $(I\otimes 
M(d,\F))\cA=\cA$, at stage $j$ we have $(I\otimes M(d,\F))\cA W_j=\cA W_j$, whence 
the dimension
of $\cA W_j$ is divisible by $d$ for every $j$. It follows that, until 
stabilization, the dimension of $\cA W_j$ increases by at least $d$ and so the 
sequence stabilizes to its limit in at most $r+1$ steps when applied to $A\in 
\cA$ of rank $rd$.
\item[Line 9] To compute $C_i$'s, one perform the following. Take a basis of 
$\cA$. Search for a basis element $Y$ such that $YA^{-1}(\cA 
A^{-1})^{\ell-1}(0)\not\subseteq \im(A)$. Put $C_\ell=Y$ and search for a basis 
 element $Y$ such that $C_\ell A^{-1} Y A^{-1}(\cA 
 A^{-1})^{\ell-2}(0)\not\subseteq 
\im(A)$. Continue this iteration and the desired $C_i$'s can be computed. 
\end{description}

That Algorithm~\ref{algo:incre} runs in the stated time bound follows easily from 
Fact~\ref{fact:Wong} and Lemma~\ref{lem_reg_blowup}. To see the correctness of the 
algorithm, it remains to prove that in Line 21 to 25 we do obtain $A''$ of 
rank $>rdd'$. Consider the vectors $w_1=v_1\otimes u_1$, 
$\ldots$, $w_t=v_t\otimes u_t$. We now observe that: (1) $w_1\in \ker A'$; (2) 
$A'w_j=C'w_{j-1}$ for
$j=2,\ldots,t$; (3) $C'w_t=C_tv_t\otimes 
u_{t+1}\not\in (\cA\F^{nd})\otimes \F^{d'}$; as $\cA\F^{nd}\otimes \F^{d'}\supseteq
A'\F^{ndd'}$, we have $C'w_t\not\in A'\F^{ndd'}$. This means that the limit of the 
second Wong sequence for the pair $(A', \langle A', C'\rangle)$ runs out of the 
image of
$A'$. By Fact~\ref{fact:dimtwo}, $A'$ is not of maximum rank in 
$\langle A', C'\rangle$, and Line 21 to 25 just describe a straightforward method 
to obtain a matrix of highest rank in a $2$-dimensional matrix space. 
\end{proof}

An iteration based on Theorem~\ref{thm-blup-incr} proves the following 
Theorem~\ref{thm-blup-main}. 
Note that in Theorem~\ref{thm-blup-incr}, $d'$ can be 
chosen as either $r+1$ or $r+2$, depending on which is not divisible by 
$\mathrm{char}(\F)$.

\begin{theorem}
\label{thm-blup-main}
Suppose we are given $\cB:=\langle B_1, \dots, B_m\rangle\leq M(n, \F)$, and 
$A\in\cB$ with $\rk(A)=s<n$. Let $d=(n+1)!/(s+1)!$, and assume that 
$|\F|=\Omega(nd)$. Then there exists a deterministic algorithm, that 
computes a matrix $B\in \cB\otimes M(d', \F)$ of rank $rd'$ for some $d'\leq d$ 
and, if $r<n$, an $(n-r)$-shrunk subspace for $\cB$. The 
algorithm uses $\poly(n, d)$ arithmetic operations, and when working over $\Q$, 
has bit complexity polynomial in $n$, $d$ and the input size.
\end{theorem}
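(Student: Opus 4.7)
My approach is to iterate Theorem~\ref{thm-blup-incr} repeatedly, starting from the given $A \in \cB$ of rank $s$ and viewing $\cB$ as $\rblowup{\cB}{1}$. At each step I either discover a shrunk subspace (and terminate) or bump up the effective rank in a slightly larger blow-up; the crux is to verify that the total accumulated blow-up factor does not exceed $(n+1)!/(s+1)!$.

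Concretely, I maintain a triple $(D_i, A_i, r_i)$ with $A_i \in \rblowup{\cB}{D_i}$ and $\rk(A_i) = r_i D_i$, initialized as $D_0 = 1$, $A_0 = A$, $r_0 = s$. At step $i$, I pick $d'_i \in \{r_i+1, r_i+2\}$ not divisible by $\fdchar(\F)$ (always possible, since among two consecutive integers at most one is divisible by any given prime) and apply Theorem~\ref{thm-blup-incr} to $(\rblowup{\cB}{D_i}, A_i, d'_i)$. If the algorithm returns a shrunk subspace, it is an $(n-r_i)$-shrunk subspace of $\cB$; I halt and output both this subspace and $A_i$. Otherwise, I obtain $A_{i+1} \in \rblowup{\cB}{D_i d'_i}$ of rank at least $(r_i+1) D_i d'_i$; I set $D_{i+1} = D_i d'_i$, and Lemma~\ref{lem:reg_non_cons} guarantees that $\rk(A_{i+1})$ is divisible by $D_{i+1}$, so $r_{i+1} := \rk(A_{i+1})/D_{i+1}$ is a well-defined integer with $r_{i+1} \geq r_i + 1$. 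The iteration stops as soon as $r_i = n$, in which case the output is simply $A_i$, whose rank has the required form $r_i D_i$.

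The size bound is the heart of the proof. Since $r_i$ strictly increases and is capped by $n$, the iteration halts in at most $k \leq n-s$ steps, with $r_1 < r_2 < \cdots < r_k$ lying in $\{s+1, \ldots, n\}$. Since $d'_i \leq r_i + 2 \leq r_{i+1} + 1$, the accumulated factor telescopes as
\begin{equation*}
D_k \;=\; \prod_{i=0}^{k-1} d'_i \;\leq\; \prod_{i=0}^{k-1} (r_{i+1} + 1) \;\leq\; \prod_{j=s+1}^{n} (j+1) \;=\; \frac{(n+1)!}{(s+1)!} \;=\; d,
\end{equation*}
where the middle inequality uses $\{r_1,\ldots,r_k\} \subseteq \{s+1,\ldots,n\}$ together with the fact that each omitted factor is at least $1$. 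The complexity bound follows: each call to IncrementRank uses $\poly(n D_i d'_i) = \poly(nd)$ arithmetic operations and, over $\Q$, has bit complexity polynomial in the input size by Theorem~\ref{thm-blup-incr}; there are at most $n-s$ iterations, and the hypothesis $|\F| = \Omega(nd)$ is more than enough to support every inner call, since $D_i \leq d$ and $d'_i \leq n+1$ throughout.

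I expect the only real obstacle to be verifying cleanly that the product $\prod_i d'_i$ respects the bound $(n+1)!/(s+1)!$ regardless of how $r_i$ evolves, in particular in the positive characteristic regime where one may be forced to take $d'_i = r_i + 2$; the argument above handles both cases uniformly via the estimate $d'_i \leq r_{i+1}+1$. The remaining pieces---setting up the iteration, invoking Lemma~\ref{lem:reg_non_cons} for divisibility of the rank in a blow-up, and controlling bit size---are direct consequences of the earlier machinery.
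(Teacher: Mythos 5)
Your approach is essentially the paper's, which disposes of this theorem in a single sentence: an iteration based on Theorem~\ref{thm-blup-incr}, with $d'$ chosen from $\{r+1,r+2\}$ to avoid the characteristic. Your write-up supplies exactly the bookkeeping the paper leaves implicit, and the telescoping estimate
\begin{equation*}
\prod_{i=0}^{k-1}d'_i \;\leq\; \prod_{i=0}^{k-1}(r_{i+1}+1) \;\leq\; \prod_{j=s+1}^{n}(j+1) \;=\; \frac{(n+1)!}{(s+1)!},
\end{equation*}
using that $d'_i\leq r_i+2\leq r_{i+1}+1$ and that the $r_i$ are distinct integers in $\{s+1,\dots,n\}$, is the right argument and handles the positive-characteristic fallback $d'_i=r_i+2$ uniformly.

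One technical slip needs repair. You cite Lemma~\ref{lem:reg_non_cons} to conclude that $\rk(A_{i+1})$ is divisible by $D_{i+1}$, but that lemma asserts divisibility only of the \emph{maximum} rank $\rk(\rblowup{\cB}{D_{i+1}})$, not of the particular matrix $A^*$ returned by Theorem~\ref{thm-blup-incr}, whose guarantee is merely $\rk(A^*)\geq(r_i+1)D_{i+1}$; indeed $A^*$ is produced via Lemma~\ref{lem:red_data}, which preserves a rank lower bound but not the exact value, so $\rk(A^*)$ need not be a multiple of $D_{i+1}$. Since Theorem~\ref{thm-blup-incr} requires its input to have rank an \emph{exact} multiple of the blow-up parameter, you must first round: apply the constructive Lemma~\ref{lem_reg_blowup} to $A_{i+1}$ repeatedly until the rank lands on a multiple of $D_{i+1}$. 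Lemma~\ref{lem:reg_non_cons} is then precisely what guarantees termination of that inner loop (the maximum rank is a multiple, and each application pushes the lower bound up by $D_{i+1}$). Also note the hypothesis $\fdchar(\F)\nmid D_{i+1}$ of Lemma~\ref{lem_reg_blowup} holds inductively because each $d'_i$ is chosen coprime to the characteristic and $D_0=1$. These repairs cost $O(n)$ extra calls per step and change neither your blow-up bound nor the stated complexity, so with this fix the proof goes through.
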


Now Corollary~\ref{cor:blup_edmonds} and~\ref{cor:blup_degree} follow easily. To 
see Corollary~\ref{cor:blup_edmonds}, note that if
we choose a matrix in $\cB$ randomly, it will be of maximum rank. Using that 
matrix as $A$ in Theorem~\ref{thm-blup-main}, Corollary~\ref{cor:blup_edmonds} is 
proved. For Corollary~\ref{cor:blup_degree}, if $\cB$ has no shrunk subspace, 
a full-rank matrix will be certainly present in $M(d', \F)\otimes \cB$ for some 
$d' \leq (n+1)!$, giving us the upper bound on $\sigma(R(n, m))$.

\paragraph{Acknowledgements.} We would like to thank M{\'a}ty{\'a}s Domokos, Bharat 
Adsul, Ketan Mulmuley, Partha Mukhopadhyay and K. N. Raghavan for discussions 
related to this work. 
Part of the work was done when Youming was visiting the Simons Institute for the 
program Algorithms and Complexity in Algebraic Geometry and when G\'abor
was visiting the 
Centre for Quantum Technologies,
 National University of Singapore. 
Research of the first 
author was also supported in part by the 
Hungarian National Research, Development and Innovation Office -- NKFIH,
Grants NK105645 and 115288. 
Youming's 
 research was supported by the Australian Research Council DECRA DE150100720. 
 
\bibliographystyle{amsplain}

\bibliography{references}

\appendix

\section{Derksen's bound applied to $R(n, m)$}\label{app:derksen}

\begin{proof}
We just need to indicate certain parameters for the matrix semi-invariants that 
are used in Derksen's bound. 

Suppose a group $G$ acts on a vector space $V$ rationally, and let $R$ be the 
resulting invariant ring. Theorem 1.1 in \cite{derksen_bound} shows that the 
degree bound is upper bounded by $\max(2, 3/8\cdot s\cdot \sigma(R)^2)$, where 
$\sigma$ is the degree bound for defining the nullcone, and $s$ is the dimension 
of $R$. 

$s$ is upper bounded by the number of variables. Therefore for $R(m, n)$, $s\leq 
mn^2\leq n^4$. 

To bound $\sigma$, we use Proposition 1.2 in \cite{derksen_bound}. Recall that $G$ 
as an algebraic group, is defined by a system of polynomial 
equations in $z_1, \dots, z_t$. For example, $\SL(n, \F)\times \SL(n, \F)$ is 
defined by $\det(X)=1$ and $\det(Y)=1$, where $X$ and $Y$ are $n\times n$ variable 
matrices. The action of $G$ is rational, so it can be recorded as 
$\rho:G\to\GL(V)$ by $g\to (a_{i,j}(g))_{i,j\in\dim(V)}$, where each $a_{i,j}$ is 
a polynomial in $z_1, \dots, z_t$.

$\sigma(R)$ is then upper bounded by $H^{t-m}A^m$, where $t$ is the number of 
variables used to define $G$ as above, $m=\dim(G)$, $H$ is the maximum degree over 
polynomials defining $G$, and $A$ is the maximum degree over polynomials defining 
the action. So for $R(n, m)$, $t=2n^2$, $m=2n^2-2$, $H=n$, and $A=2$. It follows 
that $\sigma(R(n, m))\leq n^2\cdot 2^{2n^2-2}$. 

Therefore $R(n, m)$ is generated by elements of degree $\leq 3/128\cdot n^8\cdot 
16^{n^2}$.
\end{proof}

\end{document}